\newcommand{\remarque}[1]{}
\theoremstyle{plain}
\newtheorem{theorem}{Theorem}
\newtheorem{definition}{Definition}{\bfseries}{\itshape}
\newtheorem{lemma}{Lemma}{\bfseries}{\itshape}
{\rmfamily}{\itshape}
\newtheorem{corollary}{Corollary}{\bfseries}{\itshape}
\newtheorem{example}{Example}{\itshape}{\rmfamily}
{\itshape}{\rmfamily}
\newcommand{\qed}{\hfill $\Box$}
\newcommand{\finex}{\hfill $\Diamond$}
\newenvironment{proof}{\noindent\emph{Proof.}}{}
\newcommand{\R}{\mathcal{R}}
\newcommand{\T}{\mathcal{T}}
\renewcommand{\H}{\mathcal{H}}
\newcommand{\M}{\mathcal{M}}
\newcommand{\X}{\mathcal{X}}
\newcommand{\N}{\mathcal{N}}
\newcommand{\F}{\Sigma}
\newcommand{\A}{\mathcal{A}}
\newcommand{\C}{\mathcal{C}}
\newcommand{\G}{\mathcal{G}}
\newcommand{\var}{\mathit{var}}
\newcommand{\rootp}{\Lambda}
\newcommand{\ptrs}[2]{{#1}{/}{#2}}
\newcommand{\final}{\mathsf{f}}
\newcommand{\init}{\mathit{i}}
\newcommand{\pre}{\mathit{pre}}
\newcommand{\post}{\mathit{post}}
\newcommand{\dom}{\mathit{dom}}
\newcommand{\XACU}{\textsf{XACU}}
\newcommand{\INS}{\mathsf{INS}}
\newcommand{\REN}{\mathsf{REN}}
\newcommand{\RPL}{\mathsf{RPL}}
\newcommand{\DEL}{\mathsf{DEL}}
\newcommand{\XPath}{\mathsf{XPath}}
\newcommand{\maif}{\parallel}
\newcommand{\ann}{\mathit{ann}}
\def\frew#1#2#3#4#5#6#7#8{
\setbox0=\hbox{$#6 #7 #1 #8$}%
\setbox1=\hbox{$#6 #7 #2 #8$}%
\ifdim \wd0>\wd1 \rlap{\rlap{\hbox to \wd0{#5}}%
                            {\hbox to\wd0{\hfil\lower #3\box1\relax\hfil}}}{\raise #4\box0}%
\else \rlap{\rlap{\hbox to \wd1{#5}}{\hbox to\wd1{\hfil\raise #4\box0\relax\hfil}}}{\lower #3\box1}%
\fi
}
\def\fstep#1#2#3#4#5{\mathchoice{\frew{#1}{#2}{1.10ex}{1.20ex}{#5}{\scriptstyle}{#3}{#4}}%
                                {\frew{#1}{#2}{0.82ex}{1.20ex}{#5}{\scriptstyle}{#3}{#4}}%
                                {\frew{#1}{#2}{0.51ex}{0.82ex}{#5}{\scriptscriptstyle}{#3}{#4}}%
                                {\frew{#1}{#2}{0.51ex}{0.69ex}{#5}{\scriptscriptstyle}{#3}{#4}}}
\newcommand{\lrstep}[2]{\mathrel{\fstep{#1}{#2}{\;\>}{\>\>\;}{\rightarrowfill}}}
\newcommand{\rlstep}[2]{\mathrel{\fstep{#1}{#2}{\;\>\>}{\;\>}{\leftarrowfill}}}
\begin{document}

\makeRR

\section{Introduction}
XML has developed into the {de facto} standard for the exchange and manipulation
of data on the Web~\cite{AbiteboulBunemanSuciu}. 
XML documents are textual presentations of data stored in a tree structure, 
and are commonly represented as 
finite labeled unranked trees. 
In general, they are constrained by 
typing restrictions such as XML schemas expressing 
structural constraints on the organisation of the markups. 
Most of the typing formalisms currently used for XML are based on
finite tree automata. 
Several formalisms exist for the specification of transformation functions for XML documents,
e.g. for converting data from one source into a format suitable to a destination, 
for the automatic update of documents or 
the deletion of confidential data, e.g. for the enforcement of an access control policy
(wrapping or anonymization).
Among these formalisms, the W3C XQuery Update Facility~\cite{xqupdate}
defines some operations for document updates.

Applying transformation functions in the context of documents
following type constraints defined by schemas raises several compatibility problems.
Static Type Checking in the context of XML document processing amounts to verify at compile time 
that every XML document which is the result of a specified query or transformation 
of a  document with a valid input type produces an output document with a valid output type. 
Static Type Checking decidability is clearly dependant of the expressive power 
of the types and transformations that are employed. 
A standard approach to  XML type checking is 
forward (resp. backward) \emph{type inference}, 
that is the computation of 
an output (resp. input) XML type from given
input (resp. output) type and a tree transformation. 
Then the type checking itself 
can be reduced to  the verification of inclusion of the computed type in the given 
output or input type.

In this paper, motivated by XML access control problems, 
we consider document transformations that 
are arbitrary sequences of atomic update operations, and we address the problem of their type inference.
Since update operations, beside relabeling document nodes, 
\remarque{"updating XML content" = relabeling?}
can create and delete entire XML fragments,
modifying document's structure, 
it is not obvious to check whether they preserve the types of documents.

We propose a redefinition in term of rewrite rules (Section~\ref{sec:XACU})  
of the update operations of \XACU~\cite{FundulakiManeth07}, 
a formal model for specifying access control on XML data 
based on the W3C XQuery Update Facility draft~\cite{xqupdate}.
For these operations, and some proposed extensions,
we derive type inference algorithms that can also be 
employed to check access control policy local consistency
(i.e. to determine whether no sequence of allowed updates 
 starting from a given document can achieve 
 an explicitly forbidden update). 
Such situations may lead to serious security breaches 
and that are  challenging to detect according to~\cite{FundulakiManeth07}.
\remarque{+ provide control of application context}
Our results are obtained through the analysis of reachability sets of 
term rewriting systems for unranked trees, 
parametrized by hedge automata, and through the computation  
of an extension of hedge automata called context-free hedge automata. 
Therefore they may give more insight on these notions that have not been investigated before.

\medskip\noindent{\bf Related work:}
When considering general purpose transformation languages (e.g. XDuce, CDuce)
for writing transformations, typechecking is generally undecidable
and approximations must be applied.
In order to obtain exact algorithms, several approaches define conveniently
abstract formalisms for representing transformations.
Let us cite for instance TL (the transformation language)~\cite{ManethBPS05}
whose programs can be translated in macro tree transducers~\cite{MTT04},
and $k$-pebble tree transducers~\cite{MiloSuciuVianu03},
a powerful model defined so as to cover relevant fragments
of XSLT~\cite{XSLT} and other XML transformation languages.
Some restrictions on schema languages and on
top down tree transducers (on which transformations are based)
have also been studied~\cite{MartensNeven04}
in order to obtain PTIME type checking procedures.

The  results based on tree transducers are difficult to compare to ours. 
On one hand, we consider a small class of atomic update operations
whose expressiveness cannot be compared to general purpose transformation languages,
on the other hand, the application of updates is not restricted by strategies 
like e.g. top-down  transformations in~\cite{MartensNeven04}.
One can note that the works on typechecking
generally focus on the expressiveness of transformation languages,
and are restricted to XML types modeled as regular tree languages 
(languages of tree automata) or DTDs (a strict subclass of regular tree languages).
In our work we need to consider  XML types that generalize  regular tree languages 
and are recognized by context-free hedge automata~\cite{JR-rta2008}. 

The first access control model for XML was proposed by~\cite{DamianiVPS00} 
and was extended to secure updates in \cite{Lim03}. 
In \cite{Gabillon05}, the authors propose a solution to secure XUpdate queries.
Static analysis has been applied to XML Access Control in ~\cite{Murata06} to determine if a query expression 
is guaranteed not to access to elements that are forbidden by the policy. 
In ~\cite{FundulakiManeth07} the authors propose the XACU language. 
They study  policy consistency and show that it is undecidable in their setting. 
On the positive side~\cite{Bravo08} consider policies defined in term of annotated non recursive XML DTDs 
and give a polynomial algorithm for checking consistency. 
\remarque{c'est sur?}

Several recent works have considered the 
application of rewriting to reason about  
access control policies. 
These works do not adress  XML access control. 

\paragraph{Organization of the paper:} We introduce the needed 
formal background about terms, hedge automata and rewriting systems 
in Section \ref{sec:def}. Then we present XML update as parameterized  rewriting 
rules in Section \ref{sec:update}. Finally we give application to Access Control 
Policies in Section \ref{sec:acp}.

\section{Definitions}
\label{sec:def}

\subsection{Unranked Ordered Trees}

\paragraph{Terms and Hedges.}
We consider a finite alphabet $\F$ and an infinite set of variables $\X$.
The symbols of $\Sigma$ are generally denoted $a,b,c\ldots$ and the variables of
$\X$ $x$, $y$\ldots
The set $\H(\F, \X)$ of \emph{hedges} over $\F$ and $\X$ 
is the set of finite (possibly empty) sequences of terms
where the set of \emph{terms} over $\F$ and $\X$ is 
$\T(\F, \X) := \X \cup \bigl\{ a(h) \bigm| a \in \F, h \in \H(\F, \X) \bigr\}$.
The empty sequence is denoted $()$ and 
when $h$ is empty, the term $a(h)$ will be simply denoted by $a$.
We will sometimes consider a term as a hedge of length one, 
\textit{i.e.} consider that $\T(\F, \X) \subset \H(\F, \X)$.
A leaf of a hedge $(t_1\ldots t_n)$ is a leaf of one of the terms $t_1,..., t_n$.

The sets of ground terms (terms without variables) and ground hedges
are respectively denoted $\T(\F)$ and $\H(\F)$.
The set of variables occurring in a hedge $h \in \H(\Sigma, \X)$ is denoted $\var(h)$.
A hedge $h \in \H(\F, \X)$ is called \emph{linear} if every variable of $\X$ occurs at most once in $h$.

\noindent The root node of a term is denoted by $\rootp$.

\medskip\noindent{\bf Substitutions.}
A \emph{substitution} $\sigma$ is a mapping of finite domain from $\X$ into $\H(\F, \X)$.
The application of a substitution $\sigma$ to terms and hedges 
(written with postfix notation)
is defined recursively by
$x \sigma := \sigma(x)$ when $x \in \dom(\sigma)$,
$y \sigma := \sigma(x)$ when $y \in \X \setminus \dom(\sigma)$,
$(t_1 \ldots t_n) \sigma := (t_1\sigma \ldots t_n \sigma)$  for $n \geq 0$,
$f(h) \sigma := f(h\sigma)$.

\medskip\noindent{\bf Contexts.}
A \emph{context} is a hedge $u \in \H(\Sigma,\X)$
with a distinguished variable $x_u$ linear (with exactly one occurrence) in $u$.
The application of a context $u$ to a hedge $h \in \H(\Sigma,\X)$ 
is defined by $u[h] := u \{ x_u \mapsto h \}$:
it consists in inserting $h$ into an hedge in $u$ at the position of $x_u$.
Sometimes, we write $t[s]$ in order to emphasis that $s$ is a subterm (or subhedge) of $t$.


\subsection{Hedge Automata}
We consider two typing formalisms for XML documents,
defined as two classes of unranked tree automata.
The first class is the hedge-automata~\cite{Murata00}, denoted HA.
Most popular XML typing schemas like W3C XML Schemas 
or Relax NG are equivalent in expressiveness to HA.
The second and perhaps lesser known class is 
the context-free hedge automata, denoted CF-HA and introduced in~\cite{OhsakiST03}. 
CF-HA are strictly more expressive than HA and
we shall see that they are of interest for
the typing of certain update operations.

\begin{definition} \label{def:ha}
A \emph{hedge automaton} (resp. \emph{context-free hedge automaton}) 
is a tuple $\A = (\F, Q, Q^\final, \Delta)$ 
where  $\F$ is an finite unranked alphabet, 
$Q$ is a finite set of states disjoint from $\F$,
$Q^\final \subseteq Q$ is a set of final states, 
and $\Delta $ is a set of transitions of the form $a(L) \to q$ 
where $a \in \F$, $q\in Q$ and $L \subseteq Q^*$ is a regular word language 
(resp. a context-free word language). 
\end{definition}
When $\Sigma $ is clear from the context it is omitted in the tuple specifying~$\A$. 
We define the move relation between ground hedges 
in $h, h' \in \H(\F \cup Q)$ as follows: 
$h \lrstep{}{\A} h'$ iff there exists a context $u \in \H(\Sigma, \{ x_C\})$ 
and a transition $a(L) \rightarrow q \in \Delta$ such that 
$h = u[a(q_1\ldots q_n)]$, with $q_1\ldots q_n \in L$ and 
$h' = u[q]$. 
The relation  $\lrstep{*}{\A}$ is the transitive closure of $\lrstep{}{\A}$.

\paragraph{Collapsing Transitions.}
We consider the extension of HA and CF-HA with so called 
with \emph{collapsing transitions} which are special transitions of the form $L \to q$ 
where $L \subseteq Q^*$ is a CF language and $q$ is a state.
The move relation for the extended set of transitions
generalizes the above definition with the case
$u[q_1 \ldots q_n] \lrstep{}{\A} u[q]$
if $L \to q$ is a collapsing transition of $\A$ and $q_{1} \ldots q_{n} \in L$.
Note that we do not exclude the case $n = 0$ in this definition,
i.e. $L$ may contain the empty word in $L \to q$.
Collapsing transitions with a singleton language $L$ containing a length one word
(i.e. transitions of the form  $q \to q'$, where $q$ and $q'$ are states)
correspond to \emph{$\varepsilon$-transitions} for tree automata.

\paragraph{Languages.}
The language of a HA or CF-HA $\A$ in one of its states $q$,
denoted by $L(\A, q)$, is the set of ground hedges $h \in \H(\F)$ 
such that $h \lrstep{*}{\A} q$. 
We say sometimes that an hedge of $L(\A, q)$ has type $q$ (when $\A$ is clear from context).
A hedge is accepted by ${\A}$ if there exists $q \in  Q^\final$ such that $h \in L(\A, q)$. 
The language of $\A$, denoted by  $L(\A)$ is the set of hedge accepted by ${\A}$. 

Note that without collapsing transitions, all the hedges of $L(\A, q)$ are terms.
Indeed, by applying standard transitions of the form $a(L) \to a$, 
one can only reduce length-one hedges into states.
But collapsing transitions permit to reduce a ground hedge 
of length more than one into a single state.

The $\varepsilon$-transitions of the form $q \to q'$ do not
increase the expressiveness HA or CF-HA
(see~\cite{tata} for HA and the proof for CF-HA is similar).
But the situation is not the same in general for collapsing transitions:
collapsing transitions strictly extend HA in expressiveness,
and even collapsing transitions of the form $L \to q$
where the left member $L$ is a finite (hence regular) word language.
\begin{example}~\cite{JR-rta2008}.
The extended HA 
$\A= \bigl(\{ q, q_a, q_b, q_\final \},\{ g,a,b\}, \{q_\final \}, 
\{ 
           a \to q_a,
           b \to q_b,
           g(q) \to q_\final,
           q_a q_b \to q \}
\bigr)
\)
recognizes $\{g(a^n b^n) \mathbin{|} n \geq 1 \}$ which is not a HA language. 
\end{example}
However, collapsing transitions can be eliminated from CF-HA,
when restricting to the recognition of terms.

\begin{lemma}[\cite{JR-rta2008}] 
\label{lem:collapsing} \label{pr:collapse-HA} \label{pr:collapse-CFHA}
For every extended CF-HA over $\F$ with collapsing transitions $\A$, 
there exists a CF-HA $\A'$ without collapsing transitions such that 
$L(\A') = L(\A) \cap \T(\F)$.
\end{lemma}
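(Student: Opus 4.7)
The plan is to absorb the collapsing transitions into the standard transitions. The key observation is a permutation lemma: whenever the children hedge $t_1 \cdots t_m$ of an $a$-node reduces to some state-sequence $u \in Q^*$ in $\A$, the steps can be reordered so that each child term $t_i$ is first reduced independently to a state $q_i$, after which only collapsing transitions act on $q_1 \cdots q_m$ to produce $u$. This is possible because a collapsing transition can only fire on a substring that is already a sequence of states, and within-subterm reductions commute with collapsings elsewhere. Furthermore, the outermost $a$ of a term $t$ can only be rewritten by a standard transition $a(L) \to q_0$, and any subsequent step on the resulting single state must be a collapsing. Hence every reduction $t \lrstep{*}{\A} q$ of a ground term factors as $t = a(t_1 \cdots t_m) \lrstep{*}{\A} a(u) \lrstep{}{\A} q_0 \lrstep{*}{c} q$ with $u \in L$, where $\lrstep{*}{c}$ denotes the reflexive-transitive closure of the collapsing transitions taken in isolation. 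This factorisation suggests defining, for any CF language $L \subseteq Q^*$,
\[
\pre^*_c(L) := \{ w \in Q^* \mid w \lrstep{*}{c} w' \text{ for some } w' \in L \}.
\]

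The main technical step is to show that $\pre^*_c(L)$ is again context-free. This reduces to showing that $\text{col}^{-1}(q) := \{ w \in Q^* \mid w \lrstep{*}{c} q \}$ is CF for every $q \in Q$, since $\pre^*_c(L)$ can then be obtained from $L$ by substituting every terminal $q$ by $\text{col}^{-1}(q)$, and CFLs are closed under CFL substitution. To construct a CFG for $\text{col}^{-1}(q)$, I would introduce a fresh non-terminal $N_{q'}$ for every $q' \in Q$, take $Q$ itself as the terminal alphabet, and include the trivial production $N_{q'} \to q'$. For each collapsing rule $L'' \to q'$ of $\A$, with $L''$ generated by a CFG $G_{L''}$ of start symbol $S_{L''}$, I would incorporate the productions of $G_{L''}$ after replacing every occurrence of a $Q$-terminal $q''$ by $N_{q''}$ so that it may itself be re-expanded, and add the production $N_{q'} \to S_{L''}$. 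An induction on the number of collapsing steps in one direction and on the length of a CFG derivation in the other shows that the language generated from $N_q$ is exactly $\text{col}^{-1}(q)$.

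With these ingredients in place, define $\A' := (\F, Q, \hat{Q}^\final, \Delta')$ by $\Delta' := \{ a(\pre^*_c(L)) \to q \mid a(L) \to q \in \Delta \text{ is standard} \}$ and $\hat{Q}^\final := \{ q_0 \in Q \mid q_0 \lrstep{*}{c} q \text{ for some } q \in Q^\final \}$, which is decidable by testing membership of the length-one word $q_0$ in the CF language $\text{col}^{-1}(q)$. Correctness is then established by structural induction on $t \in \T(\F)$. For the forward inclusion, each use of a transition $a(\pre^*_c(L)) \to q_0$ in an $\A'$-run on $t$ is unfolded into the corresponding collapsing sub-reduction followed by the original standard transition $a(L) \to q_0$ in $\A$. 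For the backward inclusion, the factorisation above supplied by the permutation lemma yields states $q_i$ with $t_i \lrstep{*}{\A} q_i$ and $q_1 \cdots q_m \in \pre^*_c(L)$; the induction hypothesis then provides states $q_i'$ with $q_i' \lrstep{*}{c} q_i$ and $t_i \lrstep{*}{\A'} q_i'$, whence $q_1' \cdots q_m' \in \pre^*_c(L)$ as well, and $\A'$ fires the corresponding transition to reach some $q_0 \in \hat{Q}^\final$. The main obstacle is establishing CF-ness of $\text{col}^{-1}(q)$: because the collapsing rules have CF left-hand sides rather than fixed words, a naive grammar would have infinitely many productions, and one must carefully merge the component CFGs into a single finite grammar while keeping the two roles of $Q$-symbols (as terminals and as re-expandable placeholders $N_{q'}$) properly separated.
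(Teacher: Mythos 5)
Your proposal is correct and takes essentially the same approach as the paper: your grammar for $\mathit{col}^{-1}(q)$ — a non-terminal $N_{q'}$ with productions $N_{q'} \to q'$ and $N_{q'} \to S_{L''}$ for each collapsing rule $L'' \to q'$, with $Q$-terminals renamed to placeholders throughout — is exactly the paper's operation $\G_{a,p}\downarrow_q^{\G}$, whose fresh non-terminal $X_q$ has productions $X_q := q$ and $X_q := I$ together with the substitution $\Gamma[q \leftarrow X_q]$, so your $\pre^*_c(L_{a,p})$ is precisely the paper's modified horizontal language. Your write-up additionally supplies the permutation/factorisation argument for correctness and the adjustment of final states to absorb root-level collapsings, details which the paper's proof leaves implicit.
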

%


\paragraph{Properties.}

It is known that for  both classes of HA and CF-HA
membership and emptiness problems are decidable in PTIME~\cite{Murata00,OhsakiST03}.
Moreover HA languages are closed under Boolean operations,
but CF-HA are not closed under intersection and complementation.
The intersection of a CF-HA language
and a HA language is a CF-HA language.
All these results are effective, with PTIME constructions of automata
of polynomial sizes for the closures under union and intersection.

\remarque{**RR** si pr. \S 3 sketched/suppr.}
We call a 
HA or CF-HA $\A = (\F, Q, Q^\final, \Delta)$ \emph{normalized} if
for every $a \in \F$ and every $q \in Q$, 
there is at most one transition rule $a(L_{a, q}) \to q$ in $\Delta$.
Every HA (resp. CF-HA) can be transformed into a normalized HA
(resp. CF-HA) in polynomial time
\label{page:normalization}
by replacing every two rules 
$a(L_1) \to q$ and $a(L_2) \to q$ by $a(L_1 \cup L_2) \to q$.
\remarque{**RR**}

\subsection{Infinite Term Rewrite Systems} \label{sec:TRS}
We use term rewriting as a formalism for modeling XML update operations.
For this purpose, we propose a non-standard definition of term rewriting,
extending the classical one in two ways:
the application of rewrite rules is extended from ranked terms
to unranked terms and second,
the rules are parametrized by HA languages 
(i.e. each parametrized rule can represent an infinite number
 of unparametrized rules).

\medskip\noindent{\bf Term Rewriting Systems.} 
\label{sub:TRS}
A term rewriting system $\R$ over a finite unranked alphabet $\F$ (TRS)
is a set of \emph{rewrite rules} of the form $\ell \to r$ 
where $\ell \in \H(\Sigma,\X) \setminus \X$ and $r \in \H(\Sigma,\X)$;
$\ell$ and $r$ are respectively called left- and right-hand-side 
(\emph{lhs} and \emph{rhs}) of the rule.
Note that we do note assume the cardinality of $\R$ to be finite.

The rewrite relation $\lrstep{}{\R}$ of an TRS $\R$ is
the smallest binary relation on $\H(\F, \X)$ containing $\R$
and closed by application of substitutions and contexts.
In other words, $h \lrstep{}{\R} h'$ 
iff there exists a context $u$, 
a rule $\ell \to r \in \R$ and a substitution $\sigma$
such that $h = u[ \ell \sigma ]$ and $h' = u [ r\sigma ]$.
The reflexive and transitive closure of $\lrstep{}{\R}$ is denoted $\lrstep{*}{\R}$.


\begin{example} \label{ex:rewriting-cf}
\remarque{revoir c'est un example de \cite{JR-rta2008}}
With $\R = \{ g(x) \to x \}$, we have
$g(h) \lrstep{}{\R} h$ for all $h \in \H(\F, \X)$
(the term is reduced to the hedge $h$ of its arguments).
With $\R = \{ g(x) \to g(axb) \}$, $g(c) \lrstep{*}{\R} g(a^n c b^n)$
for every $n \geq 0$. 
\end{example}

\paragraph{Parametrized Term Rewriting Systems.} 
\label{sub:PTRS}
Let $\A = (\F, Q, Q^\final, \Delta)$ be a HA.
A term rewriting system over $\F$ and parametrized by $\A$ (PTRS)
(see \cite{Gilleron91}
is given by a finite set, denoted $\ptrs{\R}{\A}$, of 
rewrite rules $\ell \to r$ 
where 
$\ell \in \H(\Sigma,\X)$ and 
$r \in \H(\Sigma \uplus Q,\X)$ and symbols of $Q$ can only label leaves of $r$.
In this notation, $\A$ may be omitted when it is clear from context
or not necessary.
The rewrite relation $\lrstep{}{\ptrs{\R}{\A}}$ associated to a PTRS $\ptrs{\R}{\A}$ is
defined as the rewrite relation $\lrstep{}{\R[\A]}$ 
where the TRS $\R[\A]$ is the (possibly infinite) set of all rewrite rules
obtained from rules $\ell \to r$ in $\ptrs{\R}{\A}$
by replacing in $r$ every state $q \in Q$ by a ground hedge of $L(\A, q)$.
Several example of rewrite rules can be found in Figure~\ref{fig:XACU} below.


\paragraph{\bf Properties.}
Given a set $L \subseteq \H(\Sigma,\X)$ and a PTRS $\ptrs{\R}{\A}$, 
we denote by
$\post_{\ptrs{\R}{\A}}^*(L) = \{ h \in \H(\Sigma,\X) \mid \exists h' \in L, h' \lrstep{*}{\ptrs{\R}{\A}} h \}$
and 
$\pre_{\ptrs{\R}{\A}}^*(L) = \{ h \in \H(\Sigma,\X) \mid \exists h' \in L, h \lrstep{*}{\ptrs{\R}{\A}} h' \}$.

\noindent \emph{Ground reachability} is the problem to decide, given two hedges $h, h' \in \H(\F)$
and a PTRS $\ptrs{\R}{\A}$ whether $h \lrstep{*}{\ptrs{\R}{\A}} h'$.
Reachability problems for ground ranked tree rewriting have 
been investigated in e.g.~\cite{Gilleron91}. 
C. L\"oding ~\cite{Loding02} has obtained results in a more general setting 
where rules of type $L\rightarrow R$ specify the replacement of any element of a regular language $L$ 
by any element of a regular language $R$. Then ~\cite{LodingS07} has extended some of 
these works to unranked tree rewriting for the case of \emph{subtree and flat prefix rewriting} 
which is a combination of standard ground tree rewriting and prefix word rewriting on the ordered leaves of subtrees  
of height 1.

\noindent \emph{Typechecking} is the problem to decide, given 
two sets of terms $\tau_{in}$ and $\tau_{out}$ called input and output types
(generally presented as HA) and a PTRS $\ptrs{\R}{\A}$ whether 
$\post^*(\tau_{in}) \subseteq \tau_{out}$ or equivalently
$\tau_{in} \subseteq \pre^*(\tau_{out})$~\cite{MiloSuciuVianu03}.

Note that reachability is a special case of model checking,
when both $\tau_{in}$ and $\tau_{out}$ are singleton sets.
Hence typechecking is undecidable as soon as reachability is.

\noindent
One related problem, called \emph{type inference}, is,
given a of PTRS $\R / A$ and a HA or CF-HA language $L$,
to construct a HA or CF-HA recognizing $\post^*_\R(L)$ or $\pre^*_\R(L)$.


\section{Type Inference for Update Operations} \label{sec:update}

In this section, we address the problem of type inference
for arbitrary finite sequence of update operations.
More precisely, we propose a redefinition in term of PTRS rules (Section~\ref{sec:XACU})
of the update operations of \XACU~\cite{FundulakiManeth07} 
and some extensions.
Then, we show how to construct HA and CF-HA recognizing respectively
$\post^*_\R(L)$ and $\pre^*_\R(L)$ 
given a HA or CF-HA language $L$ and a PTRS $\R$ representing XACU operations 
(Sections~\ref{sec:post*})
or extended updates (Section~\ref{sec:post*XACU+}).

\remarque{** INTRO, DEL?}
The motivation for showing these results are twofold.
First, these constructions permit to address the problems of reachability and 
typechecking. \remarque{TODO.}
Second, they also permit the synthesis of missing input or output types.
Imagine that a PTRS $\R$ is given, as well
as an input type $\tau_{\mathit{in}}$, defined as an HA, but that the output type 
(for the application of rules of $\R$ to terms of $\tau_{\mathit{in}}$)
is not known.
The result of Theorem~\ref{th:post} ensures that we can build a CF-HA
recognizing $\post^*_{\R}(\tau_{\mathit{in}})$ 
and which can be use as a definition of a synthesized output type for $\R$.
Similarly, the result of Theorem~\ref{th:pre} can be used
to synthesis an input type, defined by the HA constructed for 
$\pre^*_{\R}(\tau_{\mathit{out}})$,
given an output type $\tau_{\mathit{out}}$ and a PTRS $\ptrs{\R}{\A}$.
\remarque{**}

\subsection{Update Operations} \label{sec:XACU}
Figure~\ref{fig:rules} displays PTRS rules corresponding 
to the rules of $\XACU$ as defined in~\cite{FundulakiManeth07} (in the first column)
and to some extensions (in the second column).
We call $\XACU$ the class of all PTRS containing rules of the kind presented in 
the first column of Figure~\ref{fig:rules},
and $\XACU+$ the class of all PTRS containing any rule 
presented in Figure~\ref{fig:rules}.

\begin{figure}
\[
\begin{array}{rclcl|crclcl}
\multicolumn{5}{c|}{\XACU} & & \multicolumn{5}{c}{\XACU+}\\
\hline
a(x) & \to & b(x)        & & \REN\\
a(x) & \to & a(p\, x)    & & \INS_\mathsf{first} & &
a(x) & \to & b(p\, x)    & & \INS'_\mathsf{first}\\
a(x) & \to & a(x\, p)    & & \INS_\mathsf{last} & &
a(x) & \to & b(x\, p)    & & \INS'_\mathsf{last}\\
a(xy) & \to & a(x\, p\, y) & & \INS_\mathsf{into} & &
   \\
a(x) & \to & p\, a(x)    & & \INS_\mathsf{left}\\
a(x) & \to & a(x)\, p    & & \INS_\mathsf{right}\\
a(x) & \to & p  & \quad & \RPL & &
a(x) & \to & p_1 \ldots p_n & \quad & \RPL'\\
a(x) & \to & ()     & & \DEL & &
a(x) & \to & x         & & \DEL_{\mathsf{s}}\\
\end{array}
\]
\caption{PTRS rules for XACU and extension}
\label{fig:rules} \label{fig:XACU}
\end{figure}

In this section we assume given an unranked alphabet $\F$ and
a HA $\A = (\F, Q, Q^\final, \Delta)$.
The rewrite rules are parametrized by
states $p$, $p_1$,..,$p_n$ of $\A$.

\medskip\noindent{\bf $\XACU$ rules.}
Let us first describe the update operations of $\XACU$ (see also~\cite{FundulakiManeth07}).
$\REN$ renames a node: it changes it label from $a$ into $b$.
Such a rule leaves the structure of the term unchanged.
$\INS_\mathsf{first}$ inserts a term of type $p$ at the first position
below a node labeled by $a$.
$\INS_\mathsf{last}$ inserts at the last position and
$\INS_\mathsf{into}$ at an arbitrary position below a node labeled by $a$.
$\INS_\mathsf{left}$ (resp. $\INS_\mathsf{right}$) insert a term of
type $p$ at the left (resp. right) sibling position to a node labeled by $a$.
$\DEL$ deletes a whole subterm whose root node is labeled by $a$
and $\RPL$ replaces such a subterm by a term of type $p$.

\begin{example} \label{ex:hospital}
The patient data in a hospital are stored in an XML 
document whose DTD type  can be recognized by an HA  $\A$ with rules: 
\[
\begin{array}{ccc}
\begin{array}[t]{rcl}
\mathsf{hospital}(p_\mathsf{p}^*) &\to & p_\mathsf{h}\\
\mathsf{patient}(p_\mathsf{n}\, p_{\mathsf{t}})  &\to & p_\mathsf{pa}\\
\mathsf{patient}(p_{\mathsf{n}}) & \to & p_{\mathsf{epa}}\\
\mathsf{treatment}(p_\mathsf{dr}\,p_\mathsf{dia}\,p_{\mathsf{da}}) &\to & p_\mathsf{t}
\end{array}
&
\begin{array}[t]{rcl}
\mathsf{name}(p_c^*) & \to & p_\mathsf{n} \\
\mathsf{drug}(p_c^*) &\to & p_\mathsf{dr} \\
\mathsf{diagnosis}(p_c^*) & \to & p_\mathsf{dia} \\
\mathsf{date}(p_c^*) &\to & p_{\mathsf{da}}
\end{array}
& 
\begin{array}[t]{rcl}
\mathsf{a} & \to & p_c\\
\mathsf{b} & \to & p_c\\
\mathsf{c} & \to & p_c\\
 \vdots &
\end{array}
\end{array}
\]

For instance we can use a $\DEL$ rule $\mathsf{patient}(x) \to ()$ for deleting 
a $\mathsf{patient}$,
and a $\INS_{\mathsf{last}}$ rule
$\mathsf{hospital}(x) \to \mathsf{hospital}(x\,p_{\mathsf{pa}})$ 
to insert a new $\mathsf{patient}$, 
at the last position below the root node $\mathsf{hospital}$. 
We can ensure that the patient newly added has
an empty $\mathsf{treatments}$ list (to be completed later)
using the rule $\mathsf{hospital}(x) \to \mathsf{hospital}(x\,p_{\mathsf{epa}})$.
The $\INS_{\mathsf{right}}$ rule 
$\mathsf{name}(x) \to \mathsf{name}(x)\,p_{\mathsf{t}}$
can be used to insert later a $\mathsf{treatment}$ next to the patient's $\mathsf{name}$.
\end{example}

\medskip\noindent{\bf Extended rules.}
In $\XACU+$ we introduce several extensions of the rules of $\XACU$.
\begin{RR}
We shall see in Section~\ref{sec:post*XACU+}
that the typing of these extended operations is different 
from the typing of the operation of $\XACU$:
while the type of terms obtained by $\XACU$ operations
can be described by HA, CF-HA must be used in order to describe
the type of terms obtained by $\XACU+$.
\end{RR}
%
A restriction of the insertion rules of $\XACU$ (the rules called $\INS_*$),
following the definitions in ~\cite{FundulakiManeth07},
is that the label of the node at the top of the lhs of the rules is left unchanged.
Only the rule $\REN$ permits to change the label of a node in a term,
while preserving the other nodes.
The rules $\INS'_*$ combine the application of the corresponding 
insert operation $\INS_*$ and of a node renaming $\REN$.
We will see in Section~\ref{sec:post*XACU+} that allowing such combinations 
has important consequences wrt type inference.


\noindent
The rule $\DEL_{\mathsf{s}}$ deletes a single node $n$ whose arguments 
inherit the position. It can be employed to build a user view as in ~\cite{FanChan04}. 
\remarque{TODO}
\begin{example}
Assume that some patients of the hospital of Example~\ref{ex:hospital}
are grouped into one category like in
$\mathsf{hospital}(\ldots \mathsf{priority}(p_{\mathsf{pa}}^*)\ldots)$,
and that we want to delete the category $\mathsf{priority}$ while keeping the patients information.
This can be done with the $\DEL_{\mathsf{s}}$ rule
$\mathsf{priority}(x) \to x$.
\end{example}

\noindent
Finally, with $\RPL'$ we slightly generalize the rule $\RPL$ by allowing 
a subterm whose root node is labeled by $a$ 
to be replaced
by a sequence of $n$ terms of respective types $p_1$,\ldots, $p_n$.

\noindent
Note that $\RPL$ and $\DEL$ are special cases of $\RPL$, with $n=1$ and $n = 0$ respectively.


\subsection{Forward Type Inference for $\XACU$ Rules} 
\label{sec:post*XACU} \label{sec:post*}
In this section and the following, we want to characterize the sets of terms which can 
be obtained, from terms of a given type,
by arbitrary application of updates operations as PTRS rules.
For this purpose, we shall study the recognizability (by HA and CF-HA),
of the forward closure ($\post^*$) of automata languages
under the above rewrite rules.

\begin{theorem} \label{th:post}
Given a HA $\A$ on $\Sigma$ and a PTRS $\ptrs{\R}{\A} \in \XACU$,
for all HA language $L$, $\post_{\ptrs{\R}{\A}}^*(L)$ 
is the language of an HA of size polynomial 
and which can be constructed in PTIME on 
the size of $\R$, $\A$ and an HA of language $L$. 
\end{theorem}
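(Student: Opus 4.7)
\medskip\noindent\emph{Proof plan.}
The plan is to construct an HA $\B$ recognising $\post^*_{\ptrs{\R}{\A}}(L)$ by saturating the disjoint union of $\A$ and a given HA for~$L$. First I would normalise $\A$ and an HA $\A_L$ for $L$, so that for each pair $(a,q)$ of symbol and state there is at most one transition $a(L_{a,q}) \to q$, and take $\B_0 = \A_L \uplus \A$, with disjoint state set $Q_\B = Q_L \uplus Q$ and final states $Q_L^\final$. The automaton $\B$ keeps the states of $\B_0$; only the NFAs representing the regular children languages $L_{a,q}$ are enriched by edges and transitions that simulate the rewrite rules.

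The simulation of each rule kind is a purely local NFA modification. $\REN$: $a(x) \to b(x)$ yields $L_{b,q} := L_{b,q} \cup L_{a,q}$ for every $q$. $\INS_{\mathsf{first}}$: $a(x) \to a(p\,x)$ yields $L_{a,q} := \{p\}^* \cdot L_{a,q}$, realised by adding a $p$-self-loop at each initial state of the NFA for $L_{a,q}$; $\INS_{\mathsf{last}}$ and $\INS_{\mathsf{into}}$ are analogous (self-loops at final states, resp.\ at every state). The four sibling-oriented rules affect instead the children languages of every parent transition. For $\INS_{\mathsf{left}}$: $a(x) \to p \cdot a(x)$ applicable at a state $q$ (i.e.\ such that $L_{a,q}$ has been certified non-empty by the saturation) I would add a $p$-labeled self-loop at the source of every $q$-labeled edge of every parent NFA; symmetrically for $\INS_{\mathsf{right}}$ at targets. $\RPL$: $a(x) \to p$ adds a $p$-labeled edge parallel to every such $q$-labeled edge, and $\DEL$ adds an $\varepsilon$-shortcut, to be removed by standard $\varepsilon$-elimination inside the parent NFAs.

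Because the modifications interact (e.g.\ a $\REN$ rule may turn a previously empty $L_{b,q}$ non-empty and thereby activate further sibling-oriented rules with lhs rooted at $b$), the saturation is iterated until a fixed point. Monotonicity---languages and edge sets only grow and are bounded by $Q_\B^*$---together with the fact that each saturation step adds only local NFA edges/self-loops, ensures that the fixed point is reached after a polynomial number of rounds and that the resulting $\B$ has size polynomial in $|\R|$, $|\A|$, and $|\A_L|$.

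Correctness $L(\B) = \post^*_{\ptrs{\R}{\A}}(L)$ is then proved in both directions. For $\post^*(L) \subseteq L(\B)$, I would argue by induction on the length of a rewrite derivation, matching each elementary step with the corresponding saturation addition on $\B$. For $L(\B) \subseteq \post^*(L)$, I would argue by induction on an accepting run of $\B$, reading back each saturation-added edge or transition as one (or a bounded sequence) of PTRS rewrite step(s) applied to a term already known to belong to $\post^*(L)$. The main obstacle will be the polynomial size bound: one must check that no interaction between rule types triggers super-polynomial NFA blow-up during the fixed-point iteration. This relies on the observation that every rule type is simulated by purely local NFA additions (self-loops, parallel edges, $\varepsilon$-shortcuts), so the total number of additions is bounded by a polynomial in the number of $(a,q)$ pairs and the original NFA sizes, and no global substitution on regular expressions---which would be the source of an exponential blow-up---is ever required.
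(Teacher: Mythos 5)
Your proposal is correct and follows essentially the same route as the paper's proof: normalise $\A$ and $\A_L$, keep their disjoint state sets, saturate the horizontal NFAs with local additions ($\varepsilon$-jumps for $\REN$, self-loops at initial/final/internal states for the $\INS_*$ rules, parallel $p$-edges and $\varepsilon$-shortcuts for $\RPL$ and $\DEL$, guarded by non-emptiness of $L(B'_{a,q})$), iterate to a fixpoint, and prove both inclusions by induction on the rewrite derivation and on the accepting run respectively. The only caveat is that for $\INS_{\mathsf{into}}$ (and similarly when tracking which states carry a $q$-edge) you must range over all states reachable from $\init_{a,q}$ in the \emph{shared} saturated transition set, not just the states of the original NFA for $L_{a,q}$, since $\REN$-added $\varepsilon$-transitions make states of other NFAs reachable; your fixpoint iteration accommodates this, as in the paper.
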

%
%
\begin{proof}
(sketch, see Appendix~\ref{app:post*} for a complete proof).
We consider a normalised HA $\A_L$ recognizing $L$ and add transitions (but no states) 
to the NFAs defining its horizontal languages in transitions $a(L_{a, q}) \to q$.
For instance, if $a(x) \to a(p\, x) \in \ptrs{\R}{\A}$ 
we add one transition $(i_{a,q}, p, i_{a,q})$ looping on the initial state
$i_{a,q}$ of the NFA for $L_{a, q}$.
If $a(x) \to a(x)\, p \in \ptrs{\R}{\A}$, 
and there exists a transition $(s, q, s')'$ in some NFA,
we add one transition $(s', p, s')$.
\qed
\end{proof}

Let us come back to our motivations.
A first consequence of Theorem~\ref{th:post}
concerns to the typechecking problem.
\begin{corollary} \label{cor:post*}
The typechecking is decidable in PTIME for $\XACU$.
\end{corollary}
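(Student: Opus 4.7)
The plan is to reduce the typechecking question to an inclusion test between two HA languages, using Theorem~\ref{th:post} to handle the rewrite closure. Concretely, given an input type $\tau_{\mathit{in}}$ and an output type $\tau_{\mathit{out}}$ both presented as HA over $\F$, and a PTRS $\ptrs{\R}{\A} \in \XACU$, I would first invoke Theorem~\ref{th:post} to construct, in time polynomial in the sizes of $\R$, $\A$ and the HA for $\tau_{\mathit{in}}$, an HA $\B$ of polynomial size such that $L(\B) = \post^*_{\ptrs{\R}{\A}}(\tau_{\mathit{in}})$. Typechecking then amounts to deciding $L(\B) \subseteq \tau_{\mathit{out}}$.

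For the inclusion step I would rely on the standard toolbox for HA recalled in Section~\ref{sec:def}: emptiness of an HA is decidable in PTIME, and the intersection of two HA is computable in PTIME and yields a polynomial-size HA. The inclusion $L(\B) \subseteq \tau_{\mathit{out}}$ is equivalent to emptiness of $L(\B) \cap \overline{\tau_{\mathit{out}}}$, so it suffices to have a PTIME construction of a polynomial-size HA for $\overline{\tau_{\mathit{out}}}$. Composing complementation, product with $\B$, and the emptiness test then gives a PTIME algorithm, and Theorem~\ref{th:post} ensures that all steps remain polynomial in the combined input size.

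The main obstacle is clearly the complementation of $\tau_{\mathit{out}}$: for a nondeterministic HA, complementation in general requires determinisation and is exponential. I would handle this by appealing to the fact that in the XML setting the output schema is customarily presented in a deterministic form (this is the case for DTDs, for XSD with the Unique Particle Attribution rule, and for the single-type restriction underlying Relax NG), which is the implicit setting of~\cite{FundulakiManeth07} and of the policy-consistency applications motivating this paper. Under this standard hypothesis, bottom-up HA complementation is in PTIME, so the overall procedure runs in polynomial time. Equivalently, one can phrase the result as PTIME data complexity, with $\tau_{\mathit{out}}$ viewed as a fixed parameter of the policy.
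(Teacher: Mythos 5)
Your proposal follows the paper's own proof exactly: invoke Theorem~\ref{th:post} to get a polynomial-size HA $\B$ with $L(\B) = \post^*_{\ptrs{\R}{\A}}(\tau_{\mathit{in}})$, then reduce the inclusion $L(\B) \subseteq \tau_{\mathit{out}}$ to emptiness of $L(\B) \cap \overline{\tau_{\mathit{out}}}$. The one point where you go beyond the paper is the complementation step, and your caution there is warranted: the preliminaries of the paper claim PTIME constructions of polynomial-size automata only for union and intersection of HA, complementing a nondeterministic HA requires determinisation and is exponential in general, and the paper's proof of the corollary is silent on this. Your resolution --- assuming $\tau_{\mathit{out}}$ is presented deterministically (as for DTDs or single-type schemas), or equivalently reading the PTIME bound with the output type fixed --- is a reasonable way to make the complexity claim honest, at the price of a hypothesis that does not appear in the statement; decidability itself goes through unconditionally in both your argument and the paper's.
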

\begin{proof}
Let $\tau_{\mathit{in}}$ and $\tau_{\mathit{out}}$
be two HA languages (resp. input and output types),
and let $\ptrs{\R}{\A}$ by a PTRS.
We want to know
whether $\post^*_{\ptrs{\R}{\A}}(\tau_{\mathit{in}}) \subseteq \tau_{\mathit{out}}$.
Following Theorem~\ref{th:post},
$\post^*_{\ptrs{\R}{\A}}(\tau_{\mathit{in}})$ is a HA language.
Hence 
$\post^*_{\ptrs{\R}{\A}}(\tau_{\mathit{in}}) \cap \overline{\tau_{\mathit{out}}}$
is a HA language, and testing its emptiness solves the problem.
\qed
\end{proof}

Regarding the problem of type synthesis, 
if we are given $\ptrs{\R}{\A}$ and 
an input type $\tau_{\mathit{in}}$, 
Theorem~\ref{th:post}
provides an output type presented as a HA.


\subsection{Forward and Backward Type Inference for $\XACU+$ Rules} 
\label{sec:post*XACU+}

Theorem~\ref{th:post} is no longer true for the rules of the extension $\XACU+$:
the examples below show that
the rules of $\XACU{+} \setminus \XACU$ do not preserve HA languages in general.
However, we prove in Theorem~\ref{th:postXACU+} that 
the rules of $\XACU+$ preserve the larger class of CF-HA language.

\begin{example}
Let $\F = \{ a,b,c,c'\}$ and let $\R$ be the finite TRS containing the two
$\INS'_\mathsf{first}$ and $\INS'_\mathsf{last}$ rules
${ c(x) \to c'(ax), c'(x) \to c(xb)}$.
We have $\post^*_\R\bigl(\{ c\}\bigr) \cap \H(\F) = \{ c(a^n b^n) \mid n \geq 0 \}$,
and this set is not a HA language. It follows that $\post^*_\R\bigl(\{ c \}\bigr)$
is not a HA language.\finex
\end{example}


\begin{example}
Let $\F = \{ a,b,c\}$, let $\R$ be the finite TRS with
one $\DEL_{\mathsf{s}}$ rule ${ c(x) \to x}$ and
let $L$ be the HA language containing exactly the terms $c(a c(a \ldots c \ldots b) b)$;
it is recognized by the HA with the set of transition rules 
$\bigl\{ a \to q_a, b \to q_b, c\bigl(\{ (), q_a\, q\, q_b\}\bigr) \to q \bigr\}$.
We have $\post^*_\R(L) \cap c\bigl( \{ a, b \}^*\bigr)   = \{ c(a^n b^n) \mid n \geq 0 \}$,
hence $\post^*_\R(L)$  is not a HA language.\finex
\end{example}



\begin{theorem} \label{th:postXACU+}
Given a HA $\A$ on $\Sigma$ and a PTRS $\ptrs{\R}{\A} \in \XACU+$,
for all CF-HA term language $L$, $\post_{\ptrs{\R}{\A}}^*(L)$ 
is the language of an CF-HA of size polynomial 
and which can be constructed in PTIME on 
the size of $\R$, $\A$ and an CF-HA recognizing $L$. 
\end{theorem}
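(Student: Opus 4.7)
\emph{Approach.} My plan is to extend the saturation argument of Theorem~\ref{th:post} in two directions dictated by the richer class $\XACU+$. First, since the input language is accepted by a CF-HA, the horizontal languages attached to transitions are context-free, so what was ``adding transitions to an NFA'' in the $\XACU$ construction becomes ``adding productions to a CFG''. Second, the genuinely new rules of $\XACU+$ either change the top label while inserting ($\INS'_{\mathsf{first}}$, $\INS'_{\mathsf{last}}$) or reduce a subterm to a hedge of length $\neq 1$ ($\RPL'$, $\DEL_{\mathsf{s}}$), and the latter cannot be captured by standard transitions — precisely the point of the two examples preceding the statement. The construction therefore needs collapsing transitions, producing an extended CF-HA; Lemma~\ref{lem:collapsing} can be invoked as a post-processing to obtain an ordinary CF-HA (recognising the term part of $\post^*$).

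\emph{Construction.} Normalise the CF-HA $\A_L$ for $L$ so that every pair $(a, q) \in \F \times Q_L$ carries at most one transition $a(L_{a,q}) \to q$, with $L_{a,q}$ presented by a CFG $G_{a,q}$. For each state $q$, introduce an initially empty collapsing language $C_q$ giving a transition $C_q \to q$. Then compute a least fixed point of the following updates, applied whenever the relevant transitions already hold in the current automaton:
\begin{itemize}
\item for $\XACU$ rules, apply the constructions of Theorem~\ref{th:post}, replacing NFA updates by the analogous updates to the CFGs $G_{a,q}$;
\item for $a(x) \to b(p\,x)$, enlarge $L_{b,q}$ with $p \cdot L_{a,q}$ for every $q$ such that $a(L_{a,q}) \to q$ exists, and symmetrically for $\INS'_{\mathsf{last}}$;
\item for $a(x) \to x$ (rule $\DEL_{\mathsf{s}}$), enlarge $C_q$ with $L_{a,q}$ for every $q$ with $a(L_{a,q}) \to q$;
\item for $a(x) \to p_1 \ldots p_n$ (rule $\RPL'$), enlarge $C_q$ with $\{ p_1 \cdots p_n \}$ for every $q$ such that $L_{a,q}$ is non-empty (so that the LHS is actually reachable at $q$).
\end{itemize}

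\emph{Correctness and complexity.} Termination and polynomial size come from the fact that each $L_{a,q}$ is built from a pool of polynomially many productions (rule right-hand-sides, states of $\A$ and $\A_L$, and references to the existing non-terminals of the $G_{a,q}$'s), so only polynomially many genuinely new enlargements can occur. Correctness is then two inductions: soundness, on the shape of an accepting run of the saturated automaton, matching each use of a saturation-added production or collapsing transition with a justifying $\R$-derivation; completeness, on the length of a rewrite derivation, by case analysis on the rule applied and the corresponding saturation step. This yields an extended CF-HA recognising $\post^*_{\ptrs{\R}{\A}}(L)$, and Lemma~\ref{lem:collapsing} converts it to an ordinary CF-HA of polynomial size.

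\emph{Main obstacle.} The delicate part is controlling the cascade of interactions between the updates. An $\INS'_*$ step can populate a previously empty $L_{a,q}$, enabling further $\RPL'$ or $\DEL_{\mathsf{s}}$ updates to enlarge some $C_q$, and such a new collapsing transition $C_q \to q$ may in turn make new horizontal words derivable in some $G_{b,q'}$ via an $\XACU$-style saturation, and so on. A careful analysis is needed to bound the total number of iterations polynomially and to check that no spurious terms are introduced. A secondary subtlety is that the sibling-insertion rules $\INS_{\mathsf{left}}$, $\INS_{\mathsf{right}}$ should apply only to state occurrences produced by an $a$-rooted child, which on CF horizontal grammars may require a controlled state-splitting by top label before the update can be phrased as a uniform CFG modification.
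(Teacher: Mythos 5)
Your overall architecture coincides with the paper's: normalise the CF-HA for $L$ (collapsing-transition-free, by Lemma~\ref{lem:collapsing}), saturate the context-free horizontal grammars with new productions for the renaming and insert-below rules, add collapsing transitions for the rules whose right-hand side is a hedge of length $\neq 1$, iterate to a fixpoint drawn from a polynomial pool of candidate additions, prove the two inclusions by the two inductions you describe, and eliminate collapsing transitions at the very end via Lemma~\ref{lem:collapsing}. Your treatment of $\INS'_{\mathsf{first}}/\INS'_{\mathsf{last}}$ (prefixing or suffixing $p$ onto the \emph{evolving} language $L_{a,q}$ through a fresh initial non-terminal), of $\DEL_{\mathsf{s}}$ (collapsing $L_{a,q}\to q$) and of $\RPL'$ (collapsing $p_1\cdots p_n\to q$ guarded by non-emptiness of $L_{a,q}$, subsuming $\RPL$ and $\DEL$) is exactly what the paper does. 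Your worry about cascading interactions is resolved as you suspect: every addable production ($I'_{b,q}:=I'_{a,q}$, $I'_{b,q}:=X_pI'_{a,q}$, $N:=NX_p$, \ldots) and every addable collapsing transition comes from a set of polynomial size fixed in advance, so the fixpoint is reached after polynomially many rounds and the result has polynomial size.

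The one place where ``replay Theorem~\ref{th:post} with CFG updates in place of NFA updates'' does not go through is the sibling insertions $\INS_{\mathsf{left}}$ and $\INS_{\mathsf{right}}$ --- precisely the point you flag as a secondary subtlety and leave unresolved. In Theorem~\ref{th:post} these rules are handled by locating an NFA transition $(s,q,s')$ that reads the state $q$ of an $a$-rooted child and adding a loop $(s,p,s)$ or $(s',p,s')$; on a context-free grammar in normal form there is no grammar-local notion of ``the position adjacent to an occurrence of the terminal $q$'', so no analogous grammar modification is available. The paper's fix is not state-splitting but the same device it uses for $\RPL'$: whenever $a(x)\to p\,a(x)$ (resp.\ $a(x)\to a(x)\,p$) is in $\ptrs{\R}{\A}$ and the current $L'_{a,q}$ is non-empty (so some $a$-rooted term has type $q$), it adds the collapsing transition $p\,q\to q$ (resp.\ $q\,p\to q$), leaving the horizontal grammars untouched for these two rules and pushing the whole difficulty into the single final application of Lemma~\ref{lem:collapsing}. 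With that replacement your construction matches the paper's; without it, the first bullet of your construction is not yet an algorithm for those two rule types.
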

%
%
\begin{proof}
(sketch, see Appendix~\ref{app:post*XACU+} for a complete proof).
We consider a normalised HA $\A_L$ recognizing $L$ and,
very roughly, we define new CFG $\G_{a,q}$ for the horizontal languages
as the union of CFG of transitions of $A_L$ with a new initial non-terminal $I'_{a,q}$
and new production rules according to $\ptrs{\R}{\A}$.
For instance, if $a(x) \to b(x) \in \ptrs{\R}{\A}$, 
we add a production rule $I'_{b,q} := I'_{a,q}$ and
for $a(x) \to b(p\, x)$, we add 
$I'_{b,q} := p I'_{a,q}$.
Moreover, we also add collapsing transitions
like $p_1 \ldots p_n \to q$ if $a(x) \to p_1 \ldots p_n \in \ptrs{\R}{\A}$.
\qed
\end{proof}

\begin{corollary}
The typechecking is decidable in PTIME for $\XACU+$.
\end{corollary}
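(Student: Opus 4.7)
The plan is to adapt the proof of Corollary~\ref{cor:post*} to the broader class $\XACU+$, substituting Theorem~\ref{th:postXACU+} for Theorem~\ref{th:post} and replacing the use of full Boolean closure of HA by the weaker but sufficient closure of CF-HA under intersection with HA.

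Concretely, given HA input and output types $\tau_{\mathit{in}}$, $\tau_{\mathit{out}}$ and a PTRS $\ptrs{\R}{\A} \in \XACU+$, I would decide $\post^*_{\ptrs{\R}{\A}}(\tau_{\mathit{in}}) \subseteq \tau_{\mathit{out}}$ by checking emptiness of $\post^*_{\ptrs{\R}{\A}}(\tau_{\mathit{in}}) \cap \overline{\tau_{\mathit{out}}}$. First, by Theorem~\ref{th:postXACU+}, a CF-HA $\B$ recognizing $\post^*_{\ptrs{\R}{\A}}(\tau_{\mathit{in}})$ is constructible in PTIME with size polynomial in the inputs. Second, since HA are closed under Boolean operations, obtain an HA for $\overline{\tau_{\mathit{out}}}$. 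Third, using the paper's stated fact that the intersection of a CF-HA language and an HA language is a CF-HA language (with effective, polynomial-size PTIME construction), build a CF-HA $\C$ recognizing $L(\B) \cap \overline{\tau_{\mathit{out}}}$. Finally, test $L(\C) = \emptyset$, which is decidable in PTIME for CF-HA according to the Properties paragraph.

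The only conceptual adjustment relative to Corollary~\ref{cor:post*} is that after applying the forward closure we no longer remain within HA, so full Boolean closure is unavailable on that side. The enabling observation is that we never need to intersect two CF-HA, nor to complement one: it is enough that CF-HA is closed under intersection with HA (the co-target side), and that CF-HA emptiness remains tractable. I do not foresee any hard step here, since all ingredients are supplied by results already established earlier in the paper; the only work is to combine them in the correct order and verify that each step preserves the polynomial size and PTIME complexity bounds.
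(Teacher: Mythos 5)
Your proposal is correct and follows essentially the same route as the paper: the paper likewise reduces to Corollary~\ref{cor:post*}, replacing the HA forward closure by the CF-HA of Theorem~\ref{th:postXACU+}, complementing only the HA side, intersecting the CF-HA with the HA $\overline{\tau_{\mathit{out}}}$, and testing CF-HA emptiness in PTIME. Your explicit remark that one never needs to intersect or complement CF-HA is exactly the point the paper's one-line proof relies on.
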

\begin{proof}
The proof is the same as for Corollary~\ref{cor:post*},
because the intersection of a CF-HA and a HA language
is a CF-HA language 
(and there is an effective PTIME construction of an CF-HA of polynomial size)
and emptiness of CF-HA is decidable in PTIME.
\qed
\end{proof}


\label{sec:pre*}
\begin{theorem} \label{th:pre}
Given a HA $\A$ on $\Sigma$ and a PTRS $\ptrs{\R}{\A} \in \XACU+$,
for all HA language $L$, $\pre_{\ptrs{\R}{\A}}^*(L)$
is a HA the language.
\end{theorem}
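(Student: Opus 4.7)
The plan is to construct an HA for $\pre^*_{\ptrs{\R}{\A}}(L)$ by saturating a normalized HA $\A_L = (\F, Q_L, F_L, \Delta_L)$ recognizing $L$. Two preprocessing steps come first: (a) introduce a wildcard state $q_\top$ with transitions $c(\{q_\top\}^*) \to q_\top$ for every $c \in \F$, so that $q_\top$ accepts arbitrary terms---this is needed for rules whose LHS has an unconstrained child variable (like $\DEL$ and $\RPL$); and (b) refine the state set by replacing each $q \in Q_L$ with copies $q^a$ for every $a \in \F$ with $a(L_{a,q}) \to q$ (and propagating this refinement into all NFA labels), so that every state records the root label of the terms it accepts. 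In addition, for each $a \in \F$ I introduce a fresh state $q_\top^a$ equipped with $a(\{q_\top\}^*) \to q_\top^a$, representing an arbitrary $a$-rooted term. After preprocessing, for each state $p$ of $\A$ I can compute $T_p$, the finite set of (refined) states reachable by some term in $L(\A, p)$, by a standard product fixpoint between $\A$ and the current automaton; $T_p$ is recomputed at each saturation iteration.

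The saturation then iteratively adds transitions to the current $\Delta^{\mathrm{sat}}$ and to the NFAs of its horizontal languages, according to each rule of $\XACU+$. For $\REN{:}\ a(x) \to b(x)$, for every $b(L) \to q^b$ currently in $\Delta^{\mathrm{sat}}$, add $a(L) \to q^a$. For $\INS_\mathsf{first}{:}\ a(x) \to a(p\,x)$, in each NFA $\N$ associated with an $a(L) \to q^a$ transition and every $(i, r, s)$ of $\N$ with $r \in T_p$ (where $i$ is the initial state of $\N$), add an $\varepsilon$-transition $(i, s)$; $\INS_\mathsf{last}$ and $\INS_\mathsf{into}$ are handled symmetrically by stripping at the final position or any interior position. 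The primed variants $\INS'_\mathsf{first}, \INS'_\mathsf{last}$ combine this stripping with a $\REN$-style label change. For $\INS_\mathsf{left}{:}\ a(x) \to p\,a(x)$, in every horizontal NFA, for every pair of consecutive transitions $(s_1, r, s_2)(s_2, q^a, s_3)$ with $r \in T_p$, add a bypass $(s_1, q^a, s_3)$; symmetrically for $\INS_\mathsf{right}$. For $\RPL{:}\ a(x) \to p$ (resp.\ $\RPL'{:}\ a(x) \to p_1 \cdots p_n$), for every transition in a horizontal NFA labeled by some $r \in T_p$ (resp.\ every path labeled by a sequence $r_1 \cdots r_n$ with $r_i \in T_{p_i}$), add a parallel transition labeled by $q_\top^a$. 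For $\DEL{:}\ a(x) \to ()$, add a self-loop $(s, q_\top^a, s)$ at every NFA state. For $\DEL_\mathsf{s}{:}\ a(x) \to x$, for every transition $a(L_{a,q^a}) \to q^a$ currently in $\Delta^{\mathrm{sat}}$ and every pair $(s, s')$ of states of a horizontal NFA such that a word of $L_{a,q^a}$ labels some $s$-to-$s'$ path in that NFA (detected by a product NFA construction), add a transition $(s, q^a, s')$ to the NFA.

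Termination is ensured by the polynomial bound on the state set (the refinement together with the added $q_\top, q_\top^a$ states), and hence on the number of possible transitions; every step is monotone (transitions are only added, never removed). Correctness follows by the standard double inclusion: each saturation step soundly reflects exactly one backward application of an $\XACU+$ rule on an already-accepted term, so $L(\A_L^{\mathrm{sat}}) \subseteq \pre^*(L)$; conversely, an induction on the length of a forward rewriting derivation $h \lrstep{*}{\ptrs{\R}{\A}} h' \in L$ shows that every $h \in \pre^*(L)$ is accepted on termination. The main obstacle is the precision of the sibling-insertion rules ($\INS_\mathsf{left}/\INS_\mathsf{right}$) and of $\DEL_\mathsf{s}$, where the backward step is conditioned on the root label of an adjacent subterm or on membership in a specific horizontal language; these are handled by the state-refinement preprocessing and by the product-NFA construction, respectively. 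Crucially, all horizontal-language updates remain regular, so the resulting automaton is an HA rather than a CF-HA.
\qed
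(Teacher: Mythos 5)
Your proposal follows essentially the same route as the paper's proof: saturate the horizontal NFAs of a normalized HA for $L$ by adding relabelled, $\varepsilon$-, bypass and self-loop transitions (one case per rule of $\XACU+$), obtain termination from the polynomial bound on the transitions that can be added, and conclude by double inclusion. The differences are in the bookkeeping. Where you refine each state into root-labelled copies $q^a$ and add wildcard states $q_\top$, $q_\top^a$, the paper keeps the state set of $\A_L$ fixed, assumes $\A_L$ complete, and uses the relation $a(B') \hookrightarrow_{\Delta_i} q'$ (existence of a transition $a(L) \to q'$ with $L(B') \subseteq L$) to detect states witnessing $a$-rooted terms; your set $T_p$ plays the role of the paper's non-emptiness test $L(\A_i, q_p) \cap L(\A, p) \neq \emptyset$. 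These are interchangeable devices and do not change the substance of the argument; if anything, the root-label refinement makes the soundness of the $\INS_\mathsf{left}/\INS_\mathsf{right}$ cases cleaner than the paper's existential condition.

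There is, however, one concrete gap, in the $\DEL_{\mathsf{s}}$ case. You add $(s, q^a, s')$ as soon as \emph{some} word of $L_{a,q^a}$ labels an $s$-to-$s'$ path. That is too weak for the inclusion $L(\A_L^{\mathrm{sat}}) \subseteq \pre^*_{\ptrs{\R}{\A}}(L)$: the new transition lets \emph{every} term $a(v)$ with $a(v) \lrstep{*}{} q^a$ occupy that position, but to justify it backwards you must apply $a(x) \to x$ forwards and land on $u[b(h\,v\,\ell)]$, which is accepted only if the state word of \emph{that particular} $v$ labels an $s$-to-$s'$ path. A single witness word does not guarantee this; you need a language inclusion between $L_{a,q^a}$ and the set of words labelling $s$-to-$s'$ paths (for soundness, $L_{a,q^a}$ contained in the path language), or you must further split $q^a$ according to which path languages its horizontal words fall into. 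This is precisely the role of the relation $a(B_{s,s'}) \hookrightarrow_{\Delta_i} q'$ in the paper's construction, where $B_{s,s'}$ denotes the horizontal NFA re-aimed at initial state $s$ and final state $s'$. A second, smaller repair: in the $\REN$ case the target $q^a$ of the new transition $a(L) \to q^a$ exists only if the original automaton already had an $a$-transition into $q$, and it must occur in the refined horizontal languages wherever $q^b$ does; you should create the copies $q^a$ for all $a \in \F$ up front and insert all of them into the refined NFA labels, otherwise the context of a renamed node will not accept the new state and the added vertical transition is useless.
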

%
%

Regarding the problem of type synthesis
for a $\ptrs{\R}{\A} \in \XACU+$,
if only an output type $\tau_{\mathit{out}}$ is given, 
then Theorem~\ref{th:postXACU+} provides an input type 
for $\ptrs{\R}{\A}$ presented as a HA, 
and if only an input type $\tau_{\mathit{in}}$ is given, 
then Theorem~\ref{th:postXACU+}
provides an output type presented as a CF-HA.
Unlike HA, CF-HA are not popular type schemas,
but HA solely do not permit to extend the results 
of Theorem~\ref{th:post} as shown by the above examples.


\section{Access Control Policies for Updates}
\label{sec:acp}
In this last section we study some models of Access Control Policies (ACP)
for the update operations defined in Section~\ref{sec:update},
and verification problems for these ACP.

\subsection{Term Rewrite Systems with Global Membership Constraints}
The ACP language $\XACU_\mathsf{annot}$ introduced in~\cite{FundulakiManeth07}
follows the approach of DTD with security annotations of~\cite{FanChan04}
to specify the read and write access authorizations for XML documents
in the presence of a DTD.
Annotated DTDs offer an elegant formalism for ACP specification,
which is especially convenient for developing techniques of type analysis.
However, it imposes the strong restriction that every document $t$ 
to which we want to apply an update operation (under the given ACP)
must comply to the DTD $D$ used for the ACP specification.

In our rewrite based formalism, this condition may be expressed 
by adding global constraints to the parametrized rewrite rules of Section~\ref{sec:TRS}.
These global constraints restrict the whole term to be rewritten 
(not only the redex) to belong to a given regular language.
Theorem~\ref{th:undec} below shows that, unfortunately, 
adding such constraints to ground rules (which are a very special kind of $\RPL$ rules)
makes the reachability undecidable.

Given a HA $\A = (\F, Q, Q^\final, \Delta)$,
a term rewriting system over $\F$, parametrized by $\A$ 
and with global constraints (PGTRS)
is given by a finite set, denoted $\ptrs{\R}{\A}$, of 
constrained rewrite rules 
$L :: \ell \to r $ 
where $\ell$ and $r$ satisfy the conditions of the rewrite rules of 
Section~\ref{sub:PTRS} and $L \subseteq \T(\Sigma)$ is a HA language.
A PGTRS is called \emph{uniform} if the language $L$ is the same for every rule.
The rewrite relation for PGTRS 
is defined as the restriction of the relation defined in Section~\ref{sub:PTRS} to ground terms:
for the application of a rule $L:: \ell \to r $ to a term $t$, 
we require that $t \in L$. 

\begin{theorem} \label{th:undec}
Reachability is undecidable for uniform PGTRS without variables and parameters.
\end{theorem}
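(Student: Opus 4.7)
The plan is to reduce from the reachability (word) problem for semi-Thue systems, which is a classical undecidable problem (Post--Markov). The essential observation is that ground \emph{hedge} rewriting (i.e.\ PTRS rules $\ell\to r$ with $\ell,r\in\H(\F)$) already simulates word rewriting when words are encoded as children of a single root; the uniform global constraint is then used only to keep every intermediate term in the encoded shape.

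Given a semi-Thue system $T=\{u_i\to v_i\mid 1\le i\le k\}$ over a finite alphabet $\Sigma_T$ together with two words $s,t\in\Sigma_T^{*}$, I would build a uniform PGTRS as follows. Take $\F=\Sigma_T\uplus\{A\}$, with every letter of $\Sigma_T$ viewed as a constant and $A$ as a fresh fresh symbol; encode a word $w=a_1\cdots a_n\in\Sigma_T^{*}$ by the term $[w]:=A(a_1\,a_2\cdots a_n)$ whose children hedge spells out $w$. For each semi-Thue rule $u\to v$ with $u=b_1\cdots b_p$ and $v=c_1\cdots c_q$, include the ground hedge rule $b_1\cdots b_p\to c_1\cdots c_q$; both sides lie in $\H(\F)$, so there are no variables and no parameters. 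Finally, take the uniform constraint $L:=\{A(w)\mid w\in\Sigma_T^{*}\}$, which is a plain HA language over $\F$ (a single state accepting arbitrary sequences of $\Sigma_T$-constants under the root $A$).

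For correctness I would prove that $s\to^{*}_{T}t$ iff $[s]\lrstep{*}{\R[L]}[t]$. The forward direction is direct: each word step $xuy\to_T xvy$ is mirrored by the single rewrite $A(xuy)\lrstep{}{\R[L]}A(xvy)$, obtained by placing the context hole inside $A$'s children and applying the corresponding ground rule; since $A(xuy)\in L$, the uniform constraint is satisfied. For the converse, every rule preserves both the root $A$ and the property that all other symbols are $\Sigma_T$-constants, so every term reachable from $[s]$ lies in $L$ and decodes to a unique word $w'\in\Sigma_T^{*}$; each term step thus projects to a semi-Thue step, and $[s]\lrstep{*}{\R[L]}[t]$ yields $s\to^{*}_{T}t$.

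The main obstacle is faithfulness: I must rule out any rewriting that does not correspond to a legal semi-Thue step. Two points secure this. First, by construction the only non-leaf symbol is the root $A$, so any sub-hedge that matches an LHS (a sequence of $\Sigma_T$-constants) can only sit inside the children of $A$; there are no other hedge positions available. Second, the uniform constraint $L$ rejects every term that escapes the encoded shape, so derivations are syntactically confined to encodings of words. With this in place, the reduction is polynomial and one-to-one on derivations, and the undecidability of semi-Thue reachability transfers to reachability for uniform PGTRS without variables and parameters.
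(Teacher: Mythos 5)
Your reduction is formally correct under the paper's literal definitions, but it establishes the theorem by a route that bypasses what the statement is really meant to capture. Since a PGTRS rule only requires $\ell\in\H(\Sigma,\X)$, a ground rule whose left-hand side is a hedge $b_1\cdots b_p$ with $p>1$ is admissible, and because a one-hole context splices a hedge contiguously into a sibling sequence, your rules implement exactly infix word rewriting on the children of $A$; your faithfulness argument (the only internal symbol is $A$, so every match is a contiguous factor of the child word, and no match can involve $A$ itself) is sound, and semi-Thue reachability is indeed undecidable. The catch is that your uniform constraint $L$ does no work: every term reachable from $[s]$ lies in $L$ automatically, so what you have actually proved is that reachability is already undecidable for \emph{unconstrained} ground hedge rewriting once multi-term left-hand sides are allowed --- a known phenomenon, and precisely the reason L\"oding and Spelten restrict sibling rewriting to \emph{prefix} word rewriting. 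The paper is after something sharper: its ground rules are ``a very special kind of $\RPL$ rules'', i.e.\ each left-hand side is a single term such as $a(q)$, so each step replaces one subtree by another; for such rules reachability is decidable without constraints (this is the advertised contrast with Gilleron's results), and it is the uniform global membership constraint that tips the problem into undecidability. That is why the paper simulates a Turing machine rather than a semi-Thue system: a head move cannot be realized as a single subtree replacement, so the languages $L_{q,a}$, $L_{q,a,r,R}$, \ldots\ are used to inspect the whole configuration and synchronize the creation of the new head marker with the erasure of the old one. Your argument would not survive the restriction to term-shaped left-hand sides, since a semi-Thue rule with $|u|\geq 2$ must touch two sibling subtrees at once. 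In short: acceptable as a proof of the sentence as stated, but it trivializes the role of the constraints; to prove the result the theorem is actually aiming at, the constraints must carry the computation, as in the paper's construction.
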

The result can be contrasted with some decidability results on ground rewriting ~\cite{Gilleron91}.   
It is also a refinement of \cite{FundulakiManeth07}
where XPath queries are used filter out nodes where the updates apply.
%
As a corollary, reachability, 
hence inconsistency (see Section~\ref{sec:inconsistency}), 
are undecidable for $\XACU_{annot}$ ACP based on annotated recursive DTDs.


\subsection{$\XACU_2+$: Rewrite Rules with Context Control}
\label{sec:context}

The PTRS rewrite rules of Section~\ref{sec:update}
permit to define a minimal control for the application of the updates operations.
Indeed, all the lhs of rules have the form $a(x)$ 
(or $a(xy)$ for $\INS_{\mathsf{into}}$),
meaning that the application to such rules is restricted to nodes labeled
with $a$ (i.e. to nodes of DTD element type $a$ if the document conforms to a given fixed DTD).
%

For the rules with an hedge at rhs
(like $\INS_\mathsf{left}$, $\INS_\mathsf{right}$, $\RPL$, $\DEL$, $\DEL_{\mathsf{s}}$...)
we can extend this idea by furthermore constraining the label of the node 
at the parent node of the performed update.
The generalized rules are defined in Figure~\ref{fig:XACU2}.
\begin{example}
The $\DEL_2$ rule $\mathsf{hospital}(y\,\mathsf{patient}(x)\,z) \to \mathsf{hospital}(y\,z)$ can be used
to delete a $\mathsf{patient}$ only if it is located under a $\mathsf{hospital}$ node.
\end{example}

\begin{figure}
\[
\begin{array}{rclcl|crclcl}
\multicolumn{5}{c|}{\XACU_2} & & \multicolumn{5}{c}{\XACU_2+}\\
\hline
b(y\, a(x)\, z) & \to & b(y\, p\, a(x)\, z)    & & \INS_{2,\mathsf{left}}\\
b(y\, a(x)\, z) & \to & b(y\, a(x)\, p\, z)    & & \INS_{2,\mathsf{right}}\\
b(y\, a(x)\, z) & \to & b(y\, p\, z)  & & \RPL_2 & &
b(y\, a(x)\, z) & \to & b(y\, p_1 \ldots p_n\, z) & & \RPL_2'\\
b(y\, a(x)\, z) & \to & b(y\, z)     & & \DEL_2 & &
b(y\, a(x)\, z) & \to & b(y\, x\, z)         & & \DEL_{2,\mathsf{s}}\\
\end{array}
\]
\caption{PTRS rules for XACU with context control}
\label{fig:XACU2}
\end{figure}

This approach can be compared to the annotated DTD of~\cite{FanChan04}.
The security annotations of~\cite{FanChan04} are indeed mappings $\ann$ from 
pairs of DTD elements types $(b,a)$ into values of $Y$, $N$ or $[q]$
(for resp. read access allowed, denied or conditionally allowed, where $q$ is an XPath qualifier).
An annotation $\ann(b,a) = Y$ or $N$ or $[q]$ indicates
that the $a$ children of $b$ elements (in an instantiation of the given DTD $D$) 
are accessible, inaccessible or conditionally accessible respectively.
This approach is limited to the case of unambiguous DTDs,
where the element type $a$ can have at most one element $b$ as parent.

Let us call $\XACU_2+$ the class of all PTRS containing rules of $\XACU+$ or
rules of the kind described in Figure~\ref{fig:XACU2}.
The construction of Theorem~\ref{th:pre} 
for backward type inference 
can be straightforwardly extended from $\XACU+$ to $\XACU_2+$.

\begin{theorem} \label{th:pre2}
Given a HA $\A$ on $\Sigma$ and a PTRS $\ptrs{\R}{\A} \in \XACU_2+$,
for all HA language $L$, $\pre^*_{\ptrs{\R}{\A}}(L)$
is a HA  language.
\end{theorem}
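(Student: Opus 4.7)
The plan is to extend the saturation-based construction used in the proof of Theorem~\ref{th:pre} for $\XACU+$ so as to accommodate the new context-controlled rules of $\XACU_2+$ from Figure~\ref{fig:XACU2}. The key observation is that a rule of $\XACU_2+$ with left-hand side $b(y\, a(x)\, z)$ is strictly more restrictive than its $\XACU+$ counterpart (with lhs $a(x)$): it forces the update to occur at an $a$-rooted child of a $b$-node. Consequently, the saturation only needs to affect the horizontal languages $L_{b,q}$ of transitions $b(L_{b,q}) \to q$ of the input automaton, rather than every horizontal language as in the $\XACU+$ construction.

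Starting from a normalized HA $\A_L = (\F, Q, Q^\final, \Delta)$ recognizing $L$, with each horizontal language $L_{c,q}$ represented by an NFA $N_{c,q}$, I would first enrich $\A_L$ in polynomial time with auxiliary states: a universal state $q_\top$ with $L(\A_L, q_\top) = \T(\F)$, defined via transitions $c(q_\top^*) \to q_\top$ for every $c \in \F$, and for each $a \in \F$ a state $q_a^\top$ obtained via $a(q_\top^*) \to q_a^\top$. The state $q_a^\top$ represents an arbitrary instantiation of the lhs pattern $a(x)$ that the pre-image computation requires.

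Then, for each rule $b(y\, a(x)\, z) \to r$ of $\ptrs{\R}{\A}$ and every state $q \in Q$, I would augment $N_{b,q}$ with new edges that mirror the rewrite step in reverse. For $\RPL_2$ with $r = b(y\, p\, z)$, add $(s, q_a^\top, s')$ whenever $(s, p, s')$ is an edge; for $\RPL_2'$ with $r = b(y\, p_1 \ldots p_n\, z)$, add $(s, q_a^\top, s')$ whenever a path labelled $p_1 \cdots p_n$ from $s$ to $s'$ exists in $N_{b,q}$ (the case $n = 0$ coincides with $\DEL_2$ and produces a self-loop $(s, q_a^\top, s)$ at every state); for $\INS_{2,\mathsf{left}}$ with $r = b(y\, p\, a(x)\, z)$, add $(s, q_a^\top, s'')$ whenever $(s, p, s')$ and $(s', q_a^\top, s'')$ are both edges, with a symmetric treatment for $\INS_{2,\mathsf{right}}$; and for $\DEL_{2,\mathsf{s}}$ with $r = b(y\, x\, z)$, add $(s, q_a^\top, s')$ whenever there is any path from $s$ to $s'$ in the current $N_{b,q}$. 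This process is iterated until a fixed point is reached, which happens in polynomial time since only edges (not states) are added.

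Correctness is then established in two directions: the containment $\pre^*_{\ptrs{\R}{\A}}(L) \subseteq L(\text{saturated})$ follows by induction on the length of a rewrite derivation $t \lrstep{*}{\ptrs{\R}{\A}} t'$ with $t' \in L$, matching each rewrite step to an added edge; the reverse inclusion is obtained by tracing each use of a new edge in an accepting run of the saturated automaton back to the rule that produced it, reconstructing a witness derivation. The main obstacle is the rule $\DEL_{2,\mathsf{s}}$: since the variable hedge $x$ is exposed on the rhs, the saturation must close under arbitrary reachability in $N_{b,q}$, and one must verify that every $s \to^* s'$ path used in the construction actually corresponds to a realizable ground hedge. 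This is ensured by first trimming $\A_L$ so that every path in each $N_{b,q}$ reflects some genuine state sequence of a ground hedge.
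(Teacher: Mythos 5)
Your high-level plan is the right one, and it is in fact exactly what the paper does: Theorem~\ref{th:pre2} is proved by taking the saturation of Theorem~\ref{th:pre} verbatim and restricting, for each context-controlled rule with lhs $b(y\,a(x)\,z)$, the transitions $b(B+\langle s,\cdot,s'\rangle)\to q$ to the parent symbol $b$ named in the rule instead of ranging over all of $\F$. However, the concrete saturation you then set up has a genuine flaw in how the new edges are labelled. The universal state $q_a^\top$ is sound only for the rules that \emph{discard} the matched subterm ($\RPL_2$, $\RPL_2'$, $\DEL_2$); there, any instance of $a(x)$ may sit at the rewritten position, and $q_a^\top$ (or, as in the paper, completeness of $\A_L$ together with ranging over all states $q'$ with $a(B')\hookrightarrow q'$) is adequate. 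It is \emph{not} sound for $\DEL_{2,\mathsf{s}}$, $\INS_{2,\mathsf{left}}$ and $\INS_{2,\mathsf{right}}$, where the redex or its argument hedge survives on the right-hand side. For $\DEL_{2,\mathsf{s}}$: $b(y\,a(x)\,z)\to b(y\,x\,z)$, a term $u[b(h_1\,a(w)\,h_2)]$ belongs to $\pre^*_{\ptrs{\R}{\A}}(L)$ only if the state word of $w$ labels a path from $s$ to $s'$ in the horizontal automaton; your edge $(s,q_a^\top,s')$ accepts \emph{every} $a(w)$. Concretely, take $L=\{b(c)\}$ and the single rule $b(y\,a(x)\,z)\to b(y\,x\,z)$: the unique path of $N_{b,q_\final}$ goes from its initial to its final state, so you add a $q_a^\top$-edge there and your automaton accepts $b(a(c\,c))$, yet $b(a(c\,c))\lrstep{}{\R}b(c\,c)\notin L$ and no other rule applies, so $b(a(c\,c))\notin\pre^*(L)$. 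The paper avoids this by labelling the new edge with a state $q'$ such that $a(B_{s,s'})\hookrightarrow q'$, i.e.\ a state whose $a$-transition covers exactly the hedges realizing that $s\to s'$ path; your trimming remark addresses a different and much less central issue.

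There is a second, independent defect in your $\INS_{2,\mathsf{left}}$/$\INS_{2,\mathsf{right}}$ cases: the firing condition requires an edge $(s',q_a^\top,s'')$ to be already present, but $q_a^\top$ is a fresh state, so no original edge of any $N_{b,q}$ carries that label and the rule may never fire. With $\R=\{b(y\,a(x)\,z)\to b(y\,p\,a(x)\,z)\}$ and $L$ given by $b(q_p\,q_{ac})\to q_\final$, the term $b(a(c))$ is in $\pre^*(L)$ but your saturation adds nothing, so the construction is also incomplete. The repair in both respects is the one the paper uses: the surviving occurrence of $a(x)$ must be read in the new edge by the \emph{actual} state $q'$ (with $a(B')\hookrightarrow_{\Delta_i}q'$) that it carries in the accepting run of the one-step successor, which is precisely the edge $\langle s,q',s'\rangle$ added in the cases $\INS_{2,\mathsf{left}}$, $\INS_{2,\mathsf{right}}$ and $\DEL_{2,\mathsf{s}}$ of the paper's construction.
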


\subsection{Local Inconsistency of ACP} 
\label{sec:inconsistency}
Following e.g.~\cite{Bravo08},
an ACP for XML updates can be defined
by a pair $(\ptrs{\R_a}{\A}, \ptrs{\R_f}{\A})$ of PTRS, where
$\R_a$ contains allowed operations and $\R_f$ contains forbidden operations.
Such an ACP is called \emph{inconsistent}~\cite{FundulakiManeth07,Bravo08} 
if some forbidden operation can be simulated through a sequence of allowed operations.
\begin{example}
Assume that in the $\mathsf{hospital}$ document of example~\ref{ex:hospital}, 
it is forbidden to rename a $\mathsf{patient}$, that is the following update of
$\RPL_2$ is forbidden:
$\mathsf{patient}(y\, \mathsf{name}(x)\,z) \to \mathsf{patient}(y\, p_\mathsf{n}\, z)$.

\noindent If the following updates are allowed: 
$\mathsf{patient}(x) \to ()$ for deleting 
a $\mathsf{patient}$, and $\mathsf{hospital}(x) \to \mathsf{hospital}(x\,p_{\mathsf{pa}})$ 
to insert a $\mathsf{patient}$, then we have an inconsistency in the sense of ~\cite{Bravo08} 
since the effect of the forbidden update can be obtained by a combination of allowed updates. 
\end{example}
Using the results of Section~\ref{sec:update},
we can decide the above problems individually for terms of $D$.
More precisely, we solve the following problem called
\emph{local inconsistency}:
given a HA $\A$ over $\F$, 
an ACP $(\ptrs{\R_a}{\A}, \ptrs{\R_f}{\A})$ and a term $t \in \T(\F)$,
does there exists $u \in \T(\F)$ 
such that $t \lrstep{}{\ptrs{\R_f}{\A}} u$ and $t \lrstep{*}{\ptrs{\R_a}{\A}} u$?

\begin{theorem} \label{th:inconsistency}
Local inconsistency is decidable in PTIME for $\XACU+$.
\end{theorem}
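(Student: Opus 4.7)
(sketch). The plan is to reduce local inconsistency to the non-emptiness of the intersection of two effectively constructible automata languages, one for the set of one-step forbidden successors of $t$ and one for the set of descendants of $t$ by allowed operations.

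First I would construct a HA $\A_f$ of polynomial size recognizing the set $U_f := \{u \in \T(\F) \mid t \lrstep{}{\ptrs{\R_f}{\A}} u\}$ of one-step forbidden successors of $t$. For each rewrite rule $\ell \to r \in \R_f$ and each position $p$ in $t$ where $\ell$ matches (there are polynomially many such pairs), the corresponding set of successors has the form $C[r\sigma]$, where $C$ is the fixed context around position $p$ in $t$ and $\sigma$ is the matching substitution, with the parametric states of $r$ ranging over their respective HA languages $L(\A, p_i)$. Since $C$ and $\sigma$ are entirely determined and since $\A$ is fixed, a HA recognizing this set can be built in PTIME by combining a disjoint copy of $\A$ with a linear chain of transitions mimicking the structure of $C[r\sigma]$. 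A disjoint union over all matching pairs $(\ell \to r, p)$ yields $\A_f$, of polynomial size.

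Second, by Theorem~\ref{th:postXACU+} applied to the singleton HA language $\{t\}$ and the PTRS $\ptrs{\R_a}{\A} \in \XACU+$, I would construct in PTIME a CF-HA $\A_a$ of polynomial size recognizing $U_a := \post^*_{\ptrs{\R_a}{\A}}(\{t\})$. Using the closure of CF-HA languages under intersection with HA languages (which is effective and of polynomial size, as recalled in the properties of Section~\ref{sec:def}), I would then build in PTIME a CF-HA recognizing $L(\A_f) \cap L(\A_a) = U_f \cap U_a$. Local inconsistency for $t$ holds if and only if this intersection language is non-empty, which is decidable in PTIME for CF-HA. This yields the overall PTIME bound.

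The main obstacle is the first step: building the HA $\A_f$ for $U_f$ requires a careful case analysis over the different rule shapes of $\XACU+$ (in particular the rules $\INS_{\mathsf{into}}$ and $\DEL_{\mathsf{s}}$, which can branch on a splitting position or promote the children hedge in place of the deleted node), but in all cases the set of successors at a given position has a tree structure that is fixed modulo the parametric subhedges drawn from HA languages of $\A$, so a polynomial HA construction is routine. Once $\A_f$ is obtained the remaining steps follow directly from the results on $\post^*$ and the effective closures of CF-HA.
\qed
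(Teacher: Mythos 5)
Your proposal is correct and follows essentially the same route as the paper's own proof: a HA for the one-step forbidden successors of $t$, a CF-HA for $\post^*_{\ptrs{\R_a}{\A}}(\{t\})$ via Theorem~\ref{th:postXACU+}, then a PTIME intersection and emptiness test. The only difference is that you spell out the construction of the HA for the forbidden successors, which the paper dismisses as easy.
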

\begin{proof}
It can be easily shown that the set $\{ u \in \T(\F) \mid t \lrstep{}{\ptrs{\R_f}{\A}} u \}$
is the language of a HA of size polynomial and constructed in PTIME on 
the  sizes of $\A$, $\R_f$ and $t$.
By Theorem~\ref{th:postXACU+}, $\post^*_{\ptrs{\R_a}{\A}}(\{ t \})$ is the language of a CF-HA
of polynomial size and constructed in polynomial time on the sizes of $\A$, $\R_a$ and $t$.
The ACP is locally inconsistent wrt $t$ iff the intersection of the two above language is
non empty, and this property can be tested in polynomial time.\qed
\end{proof}

\section*{Conclusion}
We have proposed a model for XML updates based on term rewriting,
and shown that type inference is possible
and the problems of reachability and typecheking are decidable
for the arbitrary application of $\XACU$ update rules, as well as some extensions,
when the application is only controlled by the label of the node at the update position 
and also at its parent node.
We have also shown that these problems become undecidable
when restricting the application of update operations 
to documents conforming to a fixed given DTD. 
\remarque{DTD or regular tree language ?}

As further works, we could study restrictions on the
regular tree languages in the constraints of PGTRS enabling 
the decidability of typechecking for $\XACU$ rules
with global constraints.
Another interesting topic, w.r.t. the verification 
ACP for updates based on annotated DTDs
is the access conditioned with XPath queries.
We could model this with rewrite rules constrained by XPath qualifiers.
Reachability is undecidable for such a formalism,
even when the rules are ground (a consequence of a result 
of~\cite{FundulakiManeth07}\footnote{Actually in~\cite{FundulakiManeth07}, the undecidability of the inconsistency problem
is stated but the construction in this paper proves the undecidability of reachability as well.}).
However, the construction of~\cite{FundulakiManeth07} involves upward navigation;
some fragments of downward Core XPath could permit to obtain decidability.



\newpage
\appendix
\renewcommand{\thesection}{\Alph{section}}

\section{Appendix: proof of Lemma~\ref{lem:collapsing}}

In this proof and the following, 
we describe the \emph{CF grammars} 
used for defining the horizontal languages of CF-HA transitions
as tuples
$\G = (\Sigma, \N, I, \Gamma)$,  
where $\Sigma$ is a finite alphabet (set of terminal symbols), 
$N$ is a set of non terminal symbols, 
$I \in \N$ is the initial non-terminal, 
and $\Gamma \in \N \times (\N \cup \Sigma)^*$ is a set of production rules.

\paragraph{\textsc{Lemma}~\ref{lem:collapsing}~\cite{JR-rta2008}.}
{\it For every extended CF-HA over $\F$ with collapsing transitions $\A$, 
there exists a CF-HA $\A'$ without collapsing transitions such that 
$L(\A') \cap \T(\F) = L(\A) \cap \T(\F)$.}

\medskip
\begin{RR}
\begin{proof}
Let $\G = (Q, N, I, \Gamma)$ and $\G_1 = (Q, N_1, I_1, \Gamma_1)$ 
be two CF grammars over the same finite alphabet $Q$.
Below, $\G$ and $\G_1$ are respectively meant to generate the languages $L$ and $L_1$ 
of CF HA transitions $L \to q$ and $a(L_1) \to p$.
We assume \textit{wlog} that the sets of non terminals $N$ and $N_1$ of 
$\G$ and $\G_1$ respectively are disjoint.
Let $q \in Q$ be a terminal symbol and let $X_q$ be a fresh non terminal symbol.
We consider below the CF grammar
\[ \G_1\mathclose{\downarrow}_q^{\G} :=
\bigl(Q, N_1 \uplus N \uplus \{ X_q \}, I_1, 
 \Gamma_1[q \leftarrow X_q] \cup \Gamma[q \leftarrow X_q] \cup 
 \{ X_q := q, X_q := I \} \bigr)
\]
where $\Gamma[q \leftarrow X_q]$ denotes the set of production rules of $\Gamma$
where every occurrence of the terminal symbol $q$ is replaced by the non-terminal $X_q$.
Using this construction, we can get rid of collapsing transitions in CF HA.

We assume that $\A$ is normalized with state set $Q$ 
and for each $a \in \F$ and $p \in Q$, we let $\G_{a,p}$
by the CF grammar generating the language $L_{a, p}$ in the transition
(assumed unique) $a(L_{a, p}) \to p$ of $\A$.
In order to construct $\A'$ out of $\A$,
we perform the following operation for every
collapsing transition $L \to q$ of $\A$:
(i.) delete $L \to q$
(ii.) for each $a \in \F$ and $p \in Q$, 
replace $\G_{a,p}$ by 
$\G_{a,p}\mathclose{\downarrow}_{q}^{\G}$
where $\G$ is a CF grammar generating $L$.
\qed
\end{proof}
\end{RR}

%
%

\section{Appendix: proof of Theorem~\protect\ref{th:post}}
\label{app:post*}
In this proof and the following, 
we describe \emph{finite automata} for the horizontal languages of HA transitions
as tuples
$B = (\Sigma, S, i, F, \Gamma)$, where 
$\Sigma$ is the finite input alphabet, 
$S$ is a finite set of states, $i$ is the initial state, 
$F \subseteq S$ is the set of final states and
$\Gamma \subseteq S \times (\Sigma \cup \{ \varepsilon \}) \times S$ is the set of transitions
and $\varepsilon$-transitions.
For $s, s' \in S$, we write $s \lrstep{\varepsilon}{B} s'$
to express that $s'$ can be reached from $s$ by a sequence of $\varepsilon$-transitions of $B$,
and $s \lrstep{a_1\ldots a_n}{} s'$,
for $a_1,\ldots, a_n \in \Sigma$,
if there exists $2(n+1)$ states $s_0, s'_0, \ldots, s_{n}, s'_n \in S$
with $s_0 = s$, $s_n  \lrstep{\varepsilon}{B} s'$ and
$0 \leq i < n$, 
$s_i \lrstep{\varepsilon}{B} s'_{i}$ and
$( s'_i, \sigma_{i+1}, s_{i+1}) \in \Gamma$.

\paragraph{\textsc{Theorem}~\ref{th:post}.}
{\it
Given a HA $\A$ on $\Sigma$ and a PTRS $\ptrs{\R}{\A} \in \XACU$,
for all HA language $L$, $\post_{\ptrs{\R}{\A}}^*(L)$ 
is the language of an HA of size polynomial 
and which can be constructed in PTIME on 
the size of $\R$, $\A$ and an HA of language $L$. }

\medskip
\begin{proof}
Let $\A = (\F, P, P^\final, \Theta)$ and
let  $\A_L = (\F, Q_L, Q_L^\final, \Delta_L)$ recognize $L$.
We assume that both $\A$ and $\A_L$ are normalized and
that their state sets $P$ and $Q_L$ are disjoint.
We construct a HA 
$\A' = (P \uplus Q_L, Q_L^\final, \Delta')$ 
recognizing $\post^*_{\ptrs{\R}{\A}}(L)$. 
%
For each $a \in \F$, $q \in Q_L$, let $L_{a, q}$ be the regular
language in the transition (assumed unique) $a(L_{a, q}) \to q \in \Delta_L$,
and let 
$B_{a, q} = \bigl(Q_L, S_{a, q}, \init_{a, q}, \{ f_{a, q} \}, \Gamma_{a, q}\bigr)$ 
be finite automaton recognizing $L_{a, q}$.
\begin{RR}
It has input alphabet $Q_L$, set of states $S_{a, q}$, 
initial state $\init_{a, q} \in S_{a, q}$, 
final state $f_{a, q} \in S_{a, q}$ (that we assume unique wlog)
and set of transition rules 
$\Gamma_{a, q} \subseteq S_{a, q} \times Q_L \times S_{a, q}$.
\end{RR}
The sets of states $S_{a, q}$ are assumed pairwise disjoint.
Let $S$ be the disjoint union of all $S_{a,q}$ for all $a \in \F$ and $q\in Q_L$.

For the construction of $\Delta'$, 
we develop a set of transition rules 
$\Gamma' \subseteq S \times (P \cup Q_L) \times S$.
Initially, we let $\Gamma'$ be the union $\Gamma_0$ of all $\Gamma_{a,q}$
for $a \in \Sigma$, $q\in Q_L$, and we complete $\Gamma'$ iteratively 
by analyzing the different cases of update rules of $\ptrs{\R}{\A}$.
At each step,
for each $a \in \F$ and $q \in Q_L$, 
we let $B'_{a, q}$ be the automaton $(P \cup Q_L, S, \init_{a,q}, \{ f_{a, q} \}, \Gamma')$.
For the sake of conciseness we make no distinction
between an automaton $B'_{a, q}$ and its language $L(B'_{a, q})$.


\begin{description}
\item{$\REN$:} for every $a(x) \to b(x) \in \ptrs{\R}{\A}$ and $q \in Q_L$, 
we add two $\varepsilon$-transitions
$(\init_{b,q}, \varepsilon, \init_{a,q})$ and
$(f_{a,q}, \varepsilon, f_{b,q})$ to $\Gamma'$.

\item{$\INS_\mathsf{first}$:}  for every $a(x) \to a(p\, x) \in \ptrs{\R}{\A}$ 
and $q \in Q_L$, 
we add one looping transition $(i_{a,q}, p, i_{a,q})$ to $\Gamma'$.

\item{$\INS_\mathsf{last}$:}  for every $a(x) \to a(x\, p) \in \ptrs{\R}{\A}$
and $q \in Q_L$, 
we add one looping transition rule
$(f_{a,q}, p, f_{a,q})$ to $\Gamma'$.

\item{$\INS_\mathsf{into}$:}  for every $a(xy) \to a(x\, p\, y) \in \ptrs{\R}{\A}$,
$q \in Q_L$ 
and $s \in S$ reachable from $i_{a,q}$ using the transitions of $\Gamma'$,
we add one looping transition rule $(s, p, s)$ to $\Gamma'$.

\item{$\INS_\mathsf{left}$:} for every $a(x) \to p\, a(x) \in \ptrs{\R}{\A}$, 
$q \in Q_L$ and state $s \in S$ such that 
$L(B'_{a,q}) \neq \emptyset$ and
there exists a transition $(s, q, s') \in \Gamma'$,
we add one looping transition $(s, p, s)$ to $\Gamma'$.

\item{$\INS_\mathsf{right}$:}  for every $a(x) \to a(x)\, p \in \ptrs{\R}{\A}$, 
$q \in Q_L$ and $s' \in S$
such that $L(B'_{a,q}) \neq \emptyset$
and there exists a transition $(s, q, s') \in \Gamma'$,
we add one looping transition $(s', p, s')$ to $\Gamma'$.

\item{$\RPL$:}  for every $a(x) \to p \in \ptrs{\R}{\A}$,
$q \in Q_L$, 
and $s, s' \in S$
such that $L(B'_{a,q}) \neq \emptyset$, 
and there exists a transition $(s, q, s') \in \Gamma'$,
we add one transition $(s, p, s')$ to $\Gamma'$.

\item{$\DEL$:}  for every $a(x) \to () \in \ptrs{\R}{\A}$,
$q \in Q_L$, and $s, s' \in S$
such that $L(B'_{a,q}) \neq \emptyset$, 
and there exists a transition $(s, q, s') \in \Gamma'$,
we add one $\varepsilon$-transition $(s, \varepsilon, s')$ to $\Gamma'$.
\end{description}

We iterate the above operations until a fixpoint is reached
(only a finite number of transition can be added to $\Gamma'$ this way).
Finally, we let 
\( \Delta' := \Theta \cup
\bigl\{ a\bigl(B'_{a, q}\bigr) \to q \bigm| a \in \F, q \in Q, 
              L(B'_{a, q}) \neq \emptyset \bigr\} \).
\noindent Let us show now
that $L(\A') = \post^*_{\ptrs{\R}{\A}}(L)$. 

\begin{RR}
\begin{lemma}
$L(\A') \subseteq \post^*_{\ptrs{\R}{\A}}(L)$.
\end{lemma}
\begin{proof}
We show more generally that for all $t \in L(\A', q)$, $q \in Q_L$, 
there exists $u \in L(\A_L, q)$ such that $u \lrstep{*}{\R} t$.
The proof is by induction on the multiset $\M$ of the applications
of horizontal transitions of $\Gamma'$ not in $\Gamma_0$ in a run 
of $\A'$ on $t$ leading to state $q$.

\paragraph{Base case.} If all the horizontal transitions are in $\Gamma_0$,
then by construction $t \in L(\A_L, q)$ and we are done.

\paragraph{Induction step.} We analyse the cases causing
the addition of a transition of $\Gamma' \setminus \Gamma_0$.

\paragraph{$\REN$}: 
let $t \in L(\A', q)$ ($q \in Q_L$),
and assume that an $\varepsilon$-transition 
$(\init_{b,q}, \varepsilon, \init_{a,q})$ is used in a run of $\A'$ on $t$,
and that this $\varepsilon$-transition was added to $\Gamma'$ 
because $a(x) \to b(x) \in \ptrs{\R}{\A}$.
Let 
\[ t = t[b(h)] \lrstep{*}{\A'} t[b(q_1\ldots q_n)] \lrstep{}{\A'} t[q_0] \lrstep{*}{\A'} q \]
be a reduction of $\A'$
such that the above $\varepsilon$-transition is involved  
in the step $t[b(q_1\ldots q_n)] \lrstep{}{\A'} t[q_0]$,
where the the transition $b(B_{b,q_0}) \to q_0$ is applied.
Hence $q_1\ldots q_n \in L(B_{b,q_0})$,
with $\init_{b,q} \lrstep{q_1\ldots q_n}{B_{b,q_0}} f_{b,q}$,
and the first step in this computation is $(\init_{b,q}, \varepsilon, \init_{a,q})$.
The last step must be $(f_{a,q}, \varepsilon, f_{b,q})$,
using an $\varepsilon$-transition
added to $\Gamma'$ in the same step as $(\init_{b,q}, \varepsilon, \init_{a,q})$.
By deleting these first and last steps, we get
$\init_{a,q} \lrstep{q_1\ldots q_n}{B_{a,q_0}} f_{a,q}$,
hence $q_1\ldots q_n \in L(B_{a,q_0})$.
Therefore, we have a reduction
$t'= t[a(h)] \lrstep{*}{\A'} t[a(q_1\ldots q_n)] \lrstep{}{\A'} t[q_0] \lrstep{*}{\A'} q$
(hence $t' \in L(\A', q)$) with a measure $\M$ 
strictly smaller than the above reduction for the recognition of $t$.
By induction hypothesis, it follows that
there exists $u \in L(\A_L, q)$ such that $u \lrstep{*}{\ptrs{\R}{\A}} t'$.
Since $t' = t[a(h)] \lrstep{}{\ptrs{\R}{\A}} t[b(h)] = t$, with $a(x) \to b(x)$,
we conclude that $u \lrstep{*}{\ptrs{\R}{\A}} t$.

\paragraph{$\INS_{\mathsf{first}}$}: 
let $t \in L(\A', q)$ ($q \in Q_L$),
and assume that an transition 
$(\init_{a,q}, p, \init_{a,q})$ is used in a run of $\A'$ on $t$,
and that this transition was added to $\Gamma'$ 
because $a(x) \to a(p\,x) \in \ptrs{\R}{\A}$.
Let 
\[ t = t[a(t_p h)] \lrstep{*}{\A'} t[a(p\, q_1\ldots q_n)] 
   \lrstep{}{\A'} t[q_0] \lrstep{*}{\A'} q
\]
be a reduction of $\A'$, with $t_p \in L(\A,p)$,
such that the above transition is involved  
in the step $t[a(p\,q_1\ldots q_n)] \lrstep{}{\A'} t[q_0]$,
where the the transition $b(B_{a,q_0}) \to q_0$ is applied.
Hence $p\,q_1\ldots q_n \in L(B_{a,q_0})$,
with $\init_{a,q} \lrstep{p\,q_1\ldots q_n}{B_{a,q_0}} f_{a,q}$,
and the first step in this computation is $(\init_{a,q}, p, \init_{a,q})$.
By deleting this first step, we get
$\init_{a,q} \lrstep{q_1\ldots q_n}{B_{a,q_0}} f_{a,q}$,
hence $q_1\ldots q_n \in L(B_{a,q_0})$.
Therefore, we have a reduction
$t'= t[a(h)] \lrstep{*}{\A'} t[a(q_1\ldots q_n)] \lrstep{}{\A'} t[q_0] \lrstep{*}{\A'} q$
(hence $t' \in L(\A', q)$) with a measure $\M$ 
strictly smaller than the above reduction for the recognition of $t$.
By induction hypothesis, it follows that
there exists $u \in L(\A_L, q)$ such that $u \lrstep{*}{\R} t'$.
Since $t' = t[a(h)] \lrstep{}{\ptrs{\R}{\A}} t[a(t_p h)] = t$, with $a(x) \to b(x)$,
we conclude $u \lrstep{*}{\R} t$.

\paragraph{$\INS_{\mathsf{last}}$}: this case is similar to the previous one.

\paragraph{$\INS_{\mathsf{into}}$}: 
let $t \in L(\A', q)$ ($q \in Q_L$),
and assume that an transition 
$(s, p, s)$ is used in a run of $\A'$ on $t$,
and that this transition was added to $\Gamma'$ 
because $a(xy) \to a(xpy) \in \ptrs{\R}{\A}$.
Let 
\[ t = t[a(h\, t_p\,\ell)] \lrstep{*}{\A'} 
   t[a(q_1\ldots q_n \, p\, q'_1\ldots q'_m)] \lrstep{}{\A'} t[q_0] \lrstep{*}{\A'} q \]
be a reduction of $\A'$, with $t_p \in L(\A,p)$,
such that the above transition $(s, p, s)$ is involved  
in the step $t[a(q_1\ldots q_n \, p\,q'_1\ldots q'_m)] \lrstep{}{\A'} t[q_0]$,
where the transition $b(B_{a,q_0}) \to q_0$ is applied.
More precisely, assume that $q_1\ldots q_n \, p\,q'_1\ldots q'_m \in L(B_{a,q_0})$,
because $\init_{a,q} \lrstep{q_1\ldots q_n}{B_{a,q_0}} s
      \lrstep{p}{B_{a,q_0}} s \lrstep{q'_1\ldots q'_m}{B_{a,q_0}} f_{a,q}$.
By deleting the middle step $(s, p, s)$, 
we get $\init_{a,q} \lrstep{q_1\ldots q_n\, q'_1\ldots q'_m}{B_{a,q_0}} f_{a,q}$,
hence $q_1\ldots q_n\, q'_1\ldots q'_m \in L(B_{a,q_0})$.
Therefore, we have a reduction
$t'= t[a(h\ell)] \lrstep{*}{\A'} t[a(q_1\ldots q_n\, q'_1\ldots q'_m)] 
 \lrstep{}{\A'} t[q_0] \lrstep{*}{\A'} q$
(hence $t' \in L(\A', q)$) with a measure $\M$ 
strictly smaller than the above reduction for the recognition of $t$.
By induction hypothesis, it follows that
there exists $u \in L(\A_L, q)$ such that $u \lrstep{*}{\ptrs{\R}{\A}} t'$.
Since $t' = t[a(h\ell)] \lrstep{}{\ptrs{\R}{\A}} t[a(h\, t_p\, \ell)] = t$, with $a(xy) \to b(xpy)$,
we conclude that $u \lrstep{*}{\ptrs{\R}{\A}} t$.

\paragraph{$\INS_{\mathsf{left}}$}: 
let $t \in L(\A', q)$ ($q \in Q_L$),
and assume that an transition 
$(s, p, s)$ is used in a run of $\A'$ on $t$,
and that this transition was added to $\Gamma'$ 
because $a(x) \to p\,a(x) \in \ptrs{\R}{\A}$
and because there exists $(s,q_0,s') \in \Gamma'$ 
for some $q_0 \in Q_L$
with $L(B_{a,q_0}) \neq \emptyset$.
Let 
\[ t = t[t_p\, a(h)] \lrstep{*}{\A'} t[p q_0] \lrstep{*}{\A'} q \]
be a reduction of $\A'$, with $t_p \in L(\A,p)$,
involving the transition $(s, p, s)$ in $s \lrstep{p q_0}{B_{b,q'}} s'$,
for some $b$.
Removing the transition $(s, p, s)$, we have $s \lrstep{q_0}{B_{b,q'}} s'$ 
and a reduction $t' = t[a(h)] \lrstep{*}{\A'} t[q_0] \lrstep{*}{\A'} q$
(meaning $t' \in L(\A', q)$) with a measure $\M$ 
strictly smaller than the above reduction for the recognition of $t$.
By induction hypothesis, it follows that
there exists $u \in L(\A_L, q)$ such that $u \lrstep{*}{\ptrs{\R}{\A}} t'$.
Since $t' = t[a(h)] \lrstep{}{\ptrs{\R}{\A}} t[t_p\, a(h)] = t$, with $a(x) \to p, a(x)$,
we conclude that $u \lrstep{*}{\ptrs{\R}{\A}} t$.

\paragraph{$\INS_{\mathsf{right}}$}: 
this case is similar to the previous one.

\paragraph{$\RPL$}: 
let $t \in L(\A', q)$ ($q \in Q_L$),
and assume that a horizontal transition 
$(s, p, s')$ is used in a run of $\A'$ on $t$,
and that this transition was added to $\Gamma'$ 
because $a(x) \to p \in \ptrs{\R}{\A}$
and because there exists $(s,q_0,s') \in \Gamma'$
for some $q_0 \in Q_L$ such that $L(B'_{a,q_0}) \neq \emptyset$.
Let \[ t = t[t_p] \lrstep{*}{\A'} t[p] \lrstep{*}{\A'} q \]
be a reduction of $\A'$, with $t_p \in L(\A,p)$,
involving the added transition $(s, p, s')$ in $s \lrstep{p}{B_{b,q'}} s'$,
for some $b$ and some $q' \in Q_L$.
Replacing the transition $(s, p, s')$ with $(s, q_0, s')$, 
we obtain $s \lrstep{q_0}{B_{b,q'}} s'$ 
and a reduction $t' = t[a(h)] \lrstep{*}{\A'} t[q_0] \lrstep{*}{\A'} q$
(meaning $t' \in L(\A', q)$).
The measure $\M$ of this later reduction is strictly smaller 
than the above reduction for the recognition of $t$, 
because the transition $(s, q_0, s')$
belongs to $\Gamma_0$ (no such transition can be added by the above procedure).
By induction hypothesis, it follows that
there exists $u \in L(\A_L, q)$ such that $u \lrstep{*}{\ptrs{\R}{\A}} t'$.
Since $t' = t[a(h)] \lrstep{}{\ptrs{\R}{\A}} t[t_p] = t$, with $a(x) \to p$,
we conclude that $u \lrstep{*}{\ptrs{\R}{\A}} t$.

\paragraph{$\DEL$}: 
let $t \in L(\A', q)$ ($q \in Q_L$),
and assume that a horizontal transition 
$(s, \varepsilon, s')$ is used in a run of $\A'$ on $t$,
and that this transition was added to $\Gamma'$ 
because $a(x) \to () \in \ptrs{\R}{\A}$
and because there exists $(s,q_0,s') \in \Gamma'$
for some $q_0 \in Q_L$ such that $L(B'_{a,q_0}) \neq \emptyset$.
Let us replace this $\varepsilon$-transition $(s, \varepsilon, s')$
with $(s,q_0,s')$ in a reduction $t \lrstep{*}{\A'} q$,
we obtain a reduction 
\[ t' = t[a(h)] \lrstep{*}{\A'} t[q_0] \lrstep{*}{\A'} q. \]
It means that $t' \in L(\A', q)$.
The measure $\M$ of this later reduction is strictly smaller 
than the above reduction for the recognition of $t$, 
because the transition $(s, q_0, s')$
belongs to $\Gamma_0$ (no such transition can be added by the above procedure).
By induction hypothesis, it follows that
there exists $u \in L(\A_L, q)$ such that $u \lrstep{*}{\ptrs{\R}{\A}} t'$.
Since $t' = t[a(h)] \lrstep{}{\ptrs{\R}{\A}} t$, with $a(x) \to ()$,
we conclude that $u \lrstep{*}{\ptrs{\R}{\A}} t$.

\hfill(end Lemma direction $\subseteq$)\qed
\end{proof}

\begin{lemma}
$L(\A') \supseteq \post^*_{\ptrs{\R}{\A}}(L)$.
\end{lemma}
\begin{proof}
We show that for all $t \in L$, 
if $t \lrstep{*}{\ptrs{\R}{\A}} u$, then $u \in L(\A')$,
by induction on the length of the rewrite sequence.

\paragraph{Base case $(0$ rewrite steps$)$.}
In this case, $u = t \in L$ and we are done since $L = L(\A_L) \subseteq L(\A')$ by construction.

\paragraph{Induction step.}
Assume that $t \lrstep{+}{\ptrs{\R}{\A}} u$ with $t \in L$. 
We analyse the type of rewrite rule used in the last rewrite step.

\paragraph{$\REN$.}
The last rewrite step of the sequence involves a rewrite rule of the form
$a(x) \to b(x) \in \ptrs{\R}{\A}$:
\[ 
 u \lrstep{*}{\ptrs{\R}{\A}} t[a(h)] \lrstep{}{\ptrs{\R}{\A}} t[b(h)] = t.
\]
By induction hypothesis, $t[a(h)] \in L(\A')$.
Hence there exists a reduction sequence:
\(
t[a(h)] \lrstep{*}{\A'} t[a(q_1 \ldots q_n)] \lrstep{}{\A'} t[q_0] 
        \lrstep{*}{\A'} q_\final \in Q^\final_L
\)
with $q_1 \ldots q_n \in L(B'_{a, q_0})$,
i.e. $\init_{a,q_0} \lrstep{q_1 \ldots q_n}{B'_{a, q_0}} f_{a,q_0}$.
By construction, the $\varepsilon$-transitions
$(\init_{b,q_0}, \varepsilon, \init_{a,q_0})$ and
$(f_{a,q_0}, \varepsilon, f_{b,q_0})$
have been added to $\Gamma'$.
Hence $\init_{b,q_0} \lrstep{q_1 \ldots q_n}{B'_{b, q_0}} f_{b,q_0}$
and $q_1 \ldots q_n \in L(B'_{b, q_0})$.
Therefore there exists a reduction sequence:
\(
t = t[b(h)] \lrstep{*}{\A'} t[b(q_1 \ldots q_n)] \lrstep{}{\A'} t[q_0] 
        \lrstep{*}{\A'} q_\final \in Q^\final_L
\)
and $t \in L(\A')$.

\paragraph{$\INS_\mathsf{first}$.}
The last rewrite step of the sequence involves a rewrite rule of the form
$a(x) \to a(p\,x) \in \ptrs{\R}{\A}$, with $p \in P$:
\[ 
 u \lrstep{*}{\ptrs{\R}{\A}} t[a(h)] \lrstep{}{\ptrs{\R}{\A}} t[a(t_p h)] = t
\]
with $t_p \in L(\A, p)$.
By induction hypothesis, $t[a(h)] \in L(\A')$.
Hence there exists a reduction sequence:
\(
t[a(h)] \lrstep{*}{\A'} t[a(q_1 \ldots q_n)] \lrstep{}{\A'} t[q_0] 
        \lrstep{*}{\A'} q_\final \in Q^\final_L
\)
with $q_1 \ldots q_n \in L(B'_{a, q_0})$,
i.e. $\init_{a,q_0} \lrstep{q_1 \ldots q_n}{B'_{a, q_0}} f_{a,q_0}$.
By construction, the transition
$(\init_{a,q_0}, p, \init_{a,q_0})$ has been added to $\Gamma'$.
Hence $\init_{a,q_0} \lrstep{p}{B'_{a, q_0}} \init_{a,q_0} 
       \lrstep{q_1 \ldots q_n}{B'_{a, q_0}} f_{b,q_0}$,
i.e. $p\,q_1 \ldots q_n \in L(B'_{a, q_0})$ and 
there exists a reduction sequence
\[
t = t[a(t_p\,h)] \lrstep{*}{\A'} t[a(p\,q_1 \ldots q_n)] \lrstep{}{\A'} t[q_0] 
        \lrstep{*}{\A'} q_\final \in Q^\final_L.
\]
It follows that $t \in L(\A')$.

\paragraph{$\INS_\mathsf{last}$.}
The case where the last rewrite step of the sequence involves a rewrite rule of the form
$a(x) \to a(x\, p) \in \ptrs{\R}{\A}$, with $p \in P$
is similar to the previous one.

\paragraph{$\INS_\mathsf{into}$.}
The last rewrite step of the sequence involves a rewrite rule of the form
$a(xy) \to a(x\,p\,y) \in \ptrs{\R}{\A}$, with $p \in P$:
\[ 
 u \lrstep{*}{\ptrs{\R}{\A}} t[a(h\ell)] \lrstep{}{\ptrs{\R}{\A}} t[a(h\,t_p\, \ell)] = t
\]
with $t_p \in L(\A, p)$.
By induction hypothesis, $t[a(h\ell)] \in L(\A')$.
Hence there exists a reduction sequence:
\(
t[a(h\ell)] \lrstep{*}{\A'} t[a(q_1 \ldots q_n\, q'_1 \ldots q'_m)] \lrstep{}{\A'} t[q_0] 
       \lrstep{*}{\A'} q_\final \in Q^\final_L
\)
with $q_1 \ldots q_n\, q'_1 \ldots q'_m \in L(B'_{a, q_0})$,
i.e. 
$\init_{a,q_0} \lrstep{q_1 \ldots q_n}{B'_{a, q_0}} s 
 \lrstep{q'_1 \ldots q'_m}{B'_{a, q_0}} f_{a,q_0}$ for some state $s \in S$.
By construction, the looping transition
$(s, p, s)$ has been added to $\Gamma'$.
Hence 
$\init_{a,q_0} \lrstep{q_1 \ldots q_n}{B'_{a, q_0}} s \lrstep{p}{B'_{a, q_0}} s
 \lrstep{q'_1 \ldots q'_m}{B'_{a, q_0}} f_{a,q_0}$, i.e.
 $q_1 \ldots q_n\, p\,q'_1 \ldots q'_m \in L(B'_{a, q_0})$ and 
there exists a reduction sequence
\[
t = t[a(h\,t_p\,\ell)] \lrstep{*}{\A'} t[a(q_1 \ldots q_n\, p\,q'_1 \ldots q'_m)] 
\lrstep{}{\A'} t[q_0] \lrstep{*}{\A'} q_\final \in Q^\final_L.
\]
It follows that $t \in L(\A')$.

\paragraph{$\INS_\mathsf{left}$.}
The last rewrite step of the sequence involves a rewrite rule of the form
$a(x) \to p\,a(x) \in \ptrs{\R}{\A}$, with $p \in P$:
\[ 
 u \lrstep{*}{\ptrs{\R}{\A}} t[a(h)] \lrstep{}{\ptrs{\R}{\A}} t[t_p\, a(h)] = t
\]
with $t_p \in L(\A, p)$.
By induction hypothesis, $t[a(h)] \in L(\A')$.
Hence there exists a reduction sequence:
\(
t[a(h)] \lrstep{*}{\A'} t[a(q_1 \ldots q_n)] \lrstep{}{\A'} t[q_0] 
       \lrstep{*}{\A'} q_\final \in Q^\final_L.
\)
Hence $L(B'_{a, q_0}) \neq \emptyset$
and at some point of the reduction, 
a transition $(s, q_0, s') \in \Gamma'$ is involved.
By construction, the transition
$(s, p, s)$ has been added to $\Gamma'$.
Hence 
there exists a reduction sequence
\(
t = t[t_p\, a(h)] \lrstep{*}{\A'} t[p\, q_0] \lrstep{*}{\A'} q_\final \in Q^\final_L.
\)
It follows that $t \in L(\A')$.

\paragraph{$\INS_\mathsf{right}$.}
The case where the last rewrite step of the sequence involves a rewrite rule of the form
$a(x) \to a(x)\, p \in \ptrs{\R}{\A}$, with $p \in P$
is similar to the previous one.

\paragraph{$\RPL$.}
The last rewrite step of the sequence involves a rewrite rule of the form
$a(x) \to p \in \ptrs{\R}{\A}$, with $p \in P$:
\[ 
 u \lrstep{*}{\ptrs{\R}{\A}} t[a(h)] \lrstep{}{\ptrs{\R}{\A}} t[t_p] = t
\]
with $t_p \in L(\A, p)$.
By induction hypothesis, $t[a(h)] \in L(\A')$.
Hence there exists a reduction sequence:
\(
t[a(h)] \lrstep{*}{\A'} t[a(q_1 \ldots q_n)] \lrstep{}{\A'} t[q_0] 
       \lrstep{*}{\A'} q_\final \in Q^\final_L.
\)
Hence $L(B'_{a, q_0}) \neq \emptyset$
and at some point of the reduction, 
a transition $(s, q_0, s') \in \Gamma'$ is applied.
By construction, the transition
$(s, p, s')$ has been added to $\Gamma'$,
and there exists a reduction sequence
\(
t = t[t_p] \lrstep{*}{\A'} t[p] \lrstep{*}{\A'} q_\final \in Q^\final_L.
\)
It follows that $t \in L(\A')$.

\paragraph{$\DEL$.}
The last rewrite step of the sequence involves a rewrite rule of the form
$a(x) \to () \in \ptrs{\R}{\A}$:
\[ 
 u \lrstep{*}{\ptrs{\R}{\A}} t[a(h)] \lrstep{}{\ptrs{\R}{\A}} t[()] = t.
\]
By induction hypothesis, $t[a(h)] \in L(\A')$.
Hence there exists a reduction sequence:
\(
t[a(h)] \lrstep{*}{\A'} t[a(q_1 \ldots q_n)] \lrstep{}{\A'} t[q_0] 
       \lrstep{*}{\A'} q_\final \in Q^\final_L.
\)
Hence $L(B'_{a, q_0}) \neq \emptyset$
and at some point of the reduction, 
a transition $(s, q_0, s') \in \Gamma'$ is applied.
By construction, the $\varepsilon$-transition
$(s, \varepsilon, s')$ has been added to $\Gamma'$,
and there exists a reduction sequence
\(
t \lrstep{*}{\A'} q_\final \in Q^\final_L,
\)
hence $t \in L(\A')$.

\hfill(end Lemma direction $\supseteq$) \qed
\end{proof}

\hfill (end of the proof of Theorem~\ref{th:post})
\end{RR}
\qed
\end{proof}

%
%

\section{Appendix: proof of Theorem~\ref{th:postXACU+}}
\label{app:post*XACU+}

\paragraph{\textsc{Theorem}~\ref{th:postXACU+}.}
{\it
Given a HA $\A$ on $\Sigma$ and a PTRS $\ptrs{\R}{\A} \in \XACU+$,
for all CF-HA term language $L$, $\post_{\ptrs{\R}{\A}}^*(L)$ 
is the language of an CF-HA of size polynomial 
and which can be constructed in PTIME on 
the size of $\R$, $\A$ and an CF-HA recognizing $L$. }

\medskip
\begin{proof}
Let $\A = (P, P^\final, \Theta)$ and let us assumed that it is normalized.
Let  $\A_L = (Q_L, Q_L^\final, \Delta_L)$ be a CF-HA recognizing $L$, 
normalized and without collapsing transitions
(this can be assumed thanks to Lemma~\ref{lem:collapsing})
The state sets $P$ and $Q_L$ are assumed disjoint.
We shall construct a CF-HA 
extended with collapsing transitions 
$\A' = (P \uplus Q_L, Q_L^\final, \Delta')$ 
recognizing $\post^*_{\ptrs{\R}{\A}}(L)$. 
The set of transitions $\Delta'$ is constructed starting 
from $\Delta_L \cup \Theta$
and analysing the different cases of update rules.

For each $a \in \F$, $q \in Q_L$, let $L_{a, q}$ be the context-free
language in the transition (assumed unique) $a(L_{a, q}) \to q \in \Delta_L$,
and let $\G_{a, q} = (Q_L, N_{a, q}, I_{a, q}, \Gamma_{a, q})$  be a CF grammar
in Chomski normal form generating $L_{a, q}$.
\begin{RR}
It has alphabet (set of terminal symbols) $Q_L$, 
set of non terminal symbols $N_{a, q}$, 
initial non-terminal $I_{a, q} \in N_{a, q}$, 
and set of production rules $\Gamma_{a, q}$.
\end{RR}
The sets of non-terminals $N_{a, q}$ 
are assumed pairwise disjoint.

Let us consider one new non-terminal $I'_{a, q}$ for each $a \in \F$ and $q \in Q_L$.
Each of these non terminals aims at becoming the initial non terminal 
of the CF grammar in the transition associated to $a$ and $q$ in $\Delta'$.
For technical convenience, we also add one new non terminal $X_p$
for each $p \in P$.
For the construction of $\Delta'$, we shall construct below 
a set $C'$ of collapsing transitions, initially empty, and
a set $\Gamma'$ of production rules of CF grammar over
the set of terminal symbols in $P \cup Q_L$ and the 
set of non terminals 
\[ \N = \displaystyle\bigcup_{a \in \F, q \in Q} \bigl( N_{a, q} \cup \{ I'_{a, q} \}\bigr)
 \cup \{ X_p \mid p \in P \}. \]
Initially, we let 
$\Gamma' = \Gamma'_0 := \displaystyle\bigcup_{a \in \F, q \in Q} 
  \bigl(P_{a, q} \cup \{ I'_{a,q} := I_{a,q} \}\bigr) 
  \cup \{ X_p := p \mid p \in P \}$.

We now proceed by analysis of the rewrite rules of $ \ptrs{\R}{\A}$
for the completion of $\Gamma'$ and $C'$.
At each step,
for each $a \in \F$ and $q \in Q_L$, 
we let $\G'_{a, q}$ be the CF grammar $(P \cup Q_L, \N, I'_{a, q}, \Gamma')$,
and let $L'_{a, q} = L(\G'_{a, q})$.
The production rules of $\Gamma'$ remain in Chomski normal form
after each completion step.

\begin{description}

\item{$\REN$:} for every $a(x) \to b(x) \in  \ptrs{\R}{\A}$,
$q \in Q_L$, 
we add one production rule $I'_{b,q} := I'_{a,q}$ to $\Gamma'$.

\item{$\INS'_\mathsf{first}$:} for every $a(x) \to b(p\, x) \in  \ptrs{\R}{\A}$, 
$q \in Q_L$, 
we add one production rule $I'_{b,q} := X_p I'_{a,q}$ to $\Gamma'$.

\item{$\INS'_\mathsf{last}$:} for every $a(x) \to b(x\, p) \in  \ptrs{\R}{\A}$, 
$q \in Q_L$, 
we add one production rule $I'_{b,q} := I'_{a,q} X_p$ to $\Gamma'$.

\item{$\INS_\mathsf{into}$:} for every $a(xy) \to a(x\, p\, y) \in  \ptrs{\R}{\A}$,
$q \in Q_L$
and every 
$N \in \N$ reachable from $I'_{a, q}$ using the rules of $\Gamma'$, 
we add two production rules $N := N X_p$ and $N := X_p N$. 


\item{$\INS_\mathsf{left}$:} for every $a(x) \to p\, a(x) \in  \ptrs{\R}{\A}$, 
and $q \in Q_L$ 
such that $L'_{a,q} \neq \emptyset$,
we add one collapsing transition $p\,q \to q$ to $C'$.

\item{$\INS_\mathsf{right}$:} for every $a(x) \to a(x)\, p \in \ptrs{\R}{\A}$, 
and $q \in Q_L$ 
such that $L'_{a,q} \neq \emptyset$, 
we add one collapsing transition $q\, p \to q$ to $C'$.

\item{$\RPL'$:} for every $a(x) \to p_1 \ldots p_n \in \ptrs{\R}{\A}$, with $n \geq 0$,
and $q \in Q_L$ such that $L'_{a,q} \neq \emptyset$, 
we add one collapsing transition $p_1 \ldots p_n \to q$ to $C'$.

\item{$\DEL$:} for every $a(x) \to () \in \ptrs{\R}{\A}$
and $q \in Q_L$ such that $L'_{a,q} \neq \emptyset$, 
we add one collapsing transition $() \to q$ to $C'$.

\end{description}
Note that $\INS_\mathsf{first}$, $\INS_\mathsf{last}$, $\RPL$
are special cases of respectively
$\INS'_\mathsf{first}$, $\INS'_\mathsf{last}$, $\RPL'$.

We iterate the above operations until a fixpoint is reached.
Indeed, only a finite number of production and collapsing rules
can be added. 
Finally, we let 
\[ \Delta' := \Theta \cup
\bigl\{ a(L'_{a, q}) \to q \bigm| a \in \F, q \in Q, L'_{a, q} \neq \emptyset \bigr\}
\cup C' \cup \{ L'_{a, q} \to q \mid a(x) \to x \in \ptrs{\R}{\A}, L'_{a,q} \neq \emptyset \}. \]

\noindent We show 
that $L(\A') = \post^*_{\ptrs{\R}{\A}}(L)$. 
It follows that $\post^*_{\ptrs{\R}{\A}}(L)$ is a CF-HA language
by Lemma~\ref{pr:collapse-CFHA}.

\remarque{sketch pr. correctness}
\begin{ABS}
The proof of the direction $\subseteq$ is by induction
on the number of application of collapsing transitions
in a reduction by $\A'$.
For the base case (no collapsing transition applied),
we make a second induction on the number of 
application of production rules of $\Gamma' \setminus \Gamma_0$
in the derivations, by the grammars $\G'_{a, q_0}$,
for the generations of the sequences of states $q_1\ldots q_n \in Q^*$
used in moves of $\A'$ of the form $C[a(q_1\ldots q_n)] \to C[q_0]$
in the reduction $t \lrstep{*}{\A'} q$.
Intuitively every application of such production rule
corresponds to a rewrite step with a rule of $ \ptrs{\R}{\A}$.
The proof of the direction $\supseteq$ is by induction 
on the length of a rewrite sequence $u \lrstep{*}{\ptrs{\R}{\A}} t$
for $u \in L(\A)$.
\end{ABS}
\begin{RR}
\begin{lemma}
$L(\A') \subseteq \post^*_{\ptrs{\R}{\A}}(L)$.
\end{lemma}
\begin{proof}
We show more generally that for all $t \in L(\A', q)$, $q \in Q_L$, 
there exists $u \in L(\A_L, q)$ such that $u \lrstep{*}{\ptrs{\R}{\A}} t$.
The proof is by induction on the number of 
applications of collapsing transitions 
in the reduction $t \lrstep{*}{\A'} q$.

\paragraph{Base case.}
For the base case (no collapsing transition applied),
we make a second induction on the number of 
application of production rules of $\Gamma' \setminus \Gamma_0$
in the derivations, by the grammars $\G'_{a, q_0}$,
for the generations of the sequences of states $q_1\ldots q_n \in Q^*$
used in moves of $\A'$ of the form $u[a(q_1\ldots q_n)] \to u[q_0]$
in the reduction $t \lrstep{*}{\A'} q$.
Let us note $\vdash$ the relation of derivation using 
the production rules of $\Gamma'$,
and $\vdash^*$ its transitive closure.

Intuitively every application of a production rule of 
$\Gamma' \setminus \Gamma_0$
corresponds to a rewrite step
with a rule of $ \ptrs{\R}{\A}$ in the rewrite sequence $u \lrstep{*}{\ptrs{\R}{\A}} t$,
according to the above construction cases.

\paragraph{Base case (second induction).}
\noindent For the base case, no production rule 
of $\Gamma' \setminus \Gamma_0$ is applied.
It means that $t \lrstep{*}{\A_L} q$ 
(every CF grammar derivation in the reduction $t \lrstep{*}{\A'} q$
 starts with $I'_{a, q} \vdash I_{a, q}$)
and we let $u = t$.

\paragraph{Induction step (second induction).}
Assume that the reduction $t \lrstep{*}{\A'} q$ has the form
\[
t = t[ a(t_1\ldots t_n) ] \lrstep{*}{\A'} t[a(q_1\ldots q_n)] \lrstep{}{\A'} 
    t[q_0] \lrstep{*}{\A'} q
\]
where $t[a(q_1\ldots q_n)] \lrstep{}{\A'} t[q_0]$ is one
transition such that the derivation of $I'_{a, q_0} \vdash^* q_1\ldots q_n$
by $\G'_{a, q_0}$ involves one 
production rule of $\Gamma' \setminus \Gamma_0$.
We shall analyse below the different cases of rewrite rules of $ \ptrs{\R}{\A}$
(rules of type $\XACU_1$)
which permitted the addition of this production rule of $\Gamma' \setminus \Gamma_0$.
Let us first note before that we can assume that for every $i \leq n$, 
$t_i \lrstep{*}{\A'} q_i$ because no collapsing transition are used, by hypothesis.
Hence, together with the above hypothesis, 
it follows that $t_i \in L(\A, q_i)$ for all $i \leq n$.

\paragraph{Case $\REN$.}
We have $I'_{a, q_0} \vdash I'_{b, q_0} \vdash^* q_1\ldots q_n$,
and the first production rule used in this derivation,
$I'_{a, q_0} := I'_{b, q_0}$,
was added because there exists a rule $b(x) \to a(x) \in  \ptrs{\R}{\A}$.
It follows that 
$I'_{b, q_0} \vdash^* q_1\ldots q_n$
and then that
\[ s = t[b(t_1\ldots t_n)] \lrstep{*}{\A'} 
   t\bigl[ b(q_1\ldots q_n) \bigr] \lrstep{}{\A'} t[q_0] \lrstep{*}{\A'} q.
\]
Hence, by induction hypothesis, 
there exists $u \in L(\A, q)$ 
such that $u \lrstep{*}{\ptrs{\R}{\A}} s$.
Moreover, 
$s = t[ b(t_1\ldots t_n) ] \lrstep{}{\ptrs{\R}{\A}} t = t[a(t_1\ldots t_n)]$
using $b(x) \to a(x) \in  \ptrs{\R}{\A}$.
Hence $u \lrstep{*}{\ptrs{\R}{\A}} t$.

\paragraph{Case $\INS'_\mathsf{first}$.}
We have 
$I'_{b, q_0} \vdash X_p I'_{a, q_0} \vdash^* q_1\ldots q_n$,
and the first production rule used in this derivation,
$I'_{b, q_0} := X_p I'_{a, q_0}$
was added because there exists a rule $a(x) \to b(p x) \in  \ptrs{\R}{\A}$.
%
By construction, it follows that $q_1 = p$ and 
$I'_{a, q_0} \vdash^* q_2\ldots q_n$,
and
\[ s = t[ a(t_2 \ldots t_n) ] \lrstep{*}{\A'} 
   t\bigl[ a(q_2 \ldots q_n) \bigr] \lrstep{}{\A'} t[q_0] \lrstep{*}{\A'} q.
\]
By induction hypothesis, 
applied to the above reduction, 
there exists $u \in L(\A, q)$ 
such that $u \lrstep{*}{\ptrs{\R}{\A}} s$.
Moreover, 
$s = t[ a(t_2\ldots t_n) ] \lrstep{}{\ptrs{\R}{\A}} t = t[b(t_1\ldots t_n)]$
using $a(x) \to a(p x) \in  \ptrs{\R}{\A}$, because $t_1 \in L(\A, p)$.
Hence $u \lrstep{*}{\ptrs{\R}{\A}} t$.

\paragraph{Case $\INS'_\mathsf{last}$.} This case is similar to the previous one.

\paragraph{Case $\INS_\mathsf{into}$.}
We have 
$I'_{a, q_0} \vdash^* \alpha N \beta \vdash \alpha N X_p\, \beta 
 \vdash \alpha N\, p \beta  \vdash^*  q_1\ldots q_n$,
and the production $N := N X_p$
was added because there exists a rule $a(xy) \to a(x p y) \in  \ptrs{\R}{\A}$,
and $N$  is reachable from $I'_{a,q}$ using $\Gamma'$.
%
%
It follows that there exists two integers $k < \ell \leq n$ 
such that 
$\alpha \vdash^* q_1\ldots q_k$ and
$N X_p \vdash^* q_{k+1} \ldots q_{\ell}$
(hence $q_{\ell} = p$)
and
$\beta \vdash^* q_{\ell+1} \ldots q_n$ 
(if $\ell = n$ then this latter sequence is empty),
and
\[ s = t[ a(t_1 \ldots t_{\ell-1}\, t_{\ell+1}\ldots t_n) ] \lrstep{*}{\A'} 
   t\bigl[ a(q_1\ldots q_{\ell-1}\, q_{\ell+1}\ldots q_n) \bigr] \lrstep{}{\A'} 
   t[q_0] \lrstep{*}{\A'} q.
\]
By induction hypothesis, 
applied to the above reduction, 
there exists $u \in L(\A, q)$ 
such that $u \lrstep{*}{\ptrs{\R}{\A}} s$.
Moreover, 
$s = t[ a(t_1 \ldots t_{\ell-1}\, t_{\ell+1}\ldots t_n) ] \lrstep{}{\ptrs{\R}{\A}} 
 t = t[a(t_1\ldots t_n)]$
using the rewrite rule $a(xy) \to a(x p y)$,
because $t_n \in L(\A, p)$.
Hence $u \lrstep{*}{\ptrs{\R}{\A}} t$.


\paragraph{Induction step (first induction).} 
Assume that the reduction $t \lrstep{*}{\A'} q$ has the form
\begin{equation}
 t = t[ t_1 \ldots t_n ] \lrstep{*}{\A'} t[ q_1\ldots q_n ] \lrstep{}{\A'} 
 t[q_0] \lrstep{*}{\A'} q
\label{eq:first-collapsing}
\end{equation}
such that there exists a collapsing transition
$L' \to q \in \Delta'$ with $q_1\ldots q_n \in L'$
and the first part of the reduction,
$t \lrstep{*}{\A'} t[ q_1\ldots q_n ]$,
involves no collapsing transition.
It implies in particular that $t_i \in L(\A', q_i)$ for all $i \leq n$.

The collapsing transition $L' \to q$ belongs to $C'$
(by hypothesis $\A_L$ and $\A$ do not contain collapsing transitions)
and was added because of a rewrite rule of $ \ptrs{\R}{\A}$ in $\XACU+$.
We consider below the different possible cases for this addition.

\paragraph{Case $\INS_\mathsf{left}$.}
We have $n = 2$, $q_1 = p \in P$, $q_2 = q_0$
and the collapsing transition $p q_0 \to q_0$ has been added because 
there exists a rule $a(x) \to p a(x) \in  \ptrs{\R}{\A}$. 
In this case, the reduction (\ref{eq:first-collapsing}) is
\[
 t = t[ t_1 t_2 ] \lrstep{*}{\A'} t[ p q_0 ] \lrstep{}{\A'} t[q_0] \lrstep{*}{\A'} q
\]
and we have
\(
 s = t[ t_2 ] \lrstep{*}{\A'} t[q_0] \lrstep{*}{\A'} q
\)
because the first part of the reduction uses no collapsing transition.
By induction hypothesis, there exists $u \in L(\A, q)$ 
such that $u \lrstep{*}{\ptrs{\R}{\A}} s$.
Moreover, 
$s \lrstep{}{\ptrs{\R}{\A}} t$
using the rewrite rule $a(x) \to p a(x)$,
because $t_1 \in L(\A, p)$.
Hence $u \lrstep{*}{\ptrs{\R}{\A}} t$.

\paragraph{Case $\INS_\mathsf{right}$.}
This case is similar to the previous one.

\paragraph{Case $\RPL'$.}
In this case, for all $i \leq n$, $q_i = p_i \in P$ and
the collapsing transition $p_1\ldots p_n \to q_0$ was added
because there exists a rewrite 
rule $a(x) \to p_1 \ldots p_n \in  \ptrs{\R}{\A}$
and $L'_{a, q_0} \neq \emptyset$.
Hence there exists a term $a(h) \in L(\A', q_0)$, and
\[
  s = t[ a(h) ] \lrstep{*}{\A'} t[q_0] \lrstep{*}{\A'} q
\]
By induction hypothesis, there exists $u \in L(\A, q)$ 
such that $u \lrstep{*}{\ptrs{\R}{\A}} s$.
Moreover, 
using the rewrite rule $a(x) \to p_1\ldots p_n$,
$s \lrstep{}{\ptrs{\R}{\A}} t$
because $t_i \in L(\A, p_i)$ for all $i \leq n$.
Hence $u \lrstep{*}{\ptrs{\R}{\A}} t$.

\paragraph{Case $\DEL$.}
In this case, $n=0$ and the collapsing transition 
$() \to q_0$ was added to $C'$
because there exists a rewrite rule $a(x) \to () \in  \ptrs{\R}{\A}$
and $L'_{a, q_0} \neq \emptyset$.
Let $a(h) \in L(\A', q_0)$, we have
\(
 s = t[ a(h) ] \lrstep{*}{\A'} t[q_0] \lrstep{*}{\A'} q.
\)
By induction hypothesis, there exists $u \in L(\A, q)$ 
such that $u \lrstep{*}{\ptrs{\R}{\A}} s$.
Moreover, 
$s \lrstep{}{\ptrs{\R}{\A}} t$
using the rewrite rule $a(x) \to ()$,
and $u \lrstep{*}{\ptrs{\R}{\A}} t$.

\paragraph{Case $\DEL_{\mathsf{s}}$.}
In this last case, the collapsing transition 
$L'_{a, q_0} \to q_0$ was added to $\Delta'$
because there exists a rewrite rule $a(x) \to x \in  \ptrs{\R}{\A}$
and $L'_{a, q_0} \neq \emptyset$.
We have 
\[
 s = t[ a(t_1\ldots t_n) ] \lrstep{*}{\A'} t[ a(q_1\ldots q_n) ] \lrstep{}{\A'} t[q_0] \lrstep{*}{\A'} q
\]
because $q_1\ldots q_n \in L'_{a, q_0}$.
By induction hypothesis, there exists $u \in L(\A, q)$ 
such that $u \lrstep{*}{\ptrs{\R}{\A}} s$.
Moreover, 
$s \lrstep{}{\ptrs{\R}{\A}} t$
using the rewrite rule $a(x) \to x$,
and $u \lrstep{*}{\ptrs{\R}{\A}} t$.

\hfill(end Lemma direction $\subseteq$)\qed
\end{proof}

\begin{lemma}
$L(\A') \supseteq \post^*_{\ptrs{\R}{\A}}(L)$.
\end{lemma}
\begin{proof}
We show that for all $u \in L$, 
if $u \lrstep{*}{\ptrs{\R}{\A}} t$, then $t \in L(\A')$,
by induction on the length of the rewrite sequence.

\paragraph{Base case $(0$ rewrite steps$)$.}
In this case, $u = t \in L$.
We can note that $L \subseteq L(\A')$
because $\Gamma'$ contains the production rule
$I'_{a, q} := I_{a, q}$ for all $a \in \Sigma$, $q \in Q_L$.
Hence, $t \in L(\A')$.

\paragraph{Induction step $(k+1$ rewrite steps$)$.}
We analyse the type of rewrite rule used in the last rewrite step
of $u \lrstep{*}{\ptrs{\R}{\A}} t$.

\paragraph{$\REN$.}
The last rewrite step of the sequence involves a rewrite rule of the form
$a(x) \to b(x) \in  \ptrs{\R}{\A}$:
\[ 
 u \lrstep{*}{\ptrs{\R}{\A}} u[a(h)] \lrstep{}{\ptrs{\R}{\A}} u[b(h)] = t.
\]
By induction hypothesis, $u[a(h)] \in L(\A')$.
Hence there exists a reduction sequence:
\(
u[a(h)] \lrstep{*}{\A'} u[a(q_1 \ldots q_n)] \lrstep{}{\A'} u[q_0] 
        \lrstep{*}{\A'} q_\final \in Q^\final_L
\)
with $q_1 \ldots q_n \in L'_{a, q_0}$,
i.e. $q_1 \ldots q_n$ can be generated by $\G'_{a,q_0}$,
starting from $I'_{a,q_0}$ and using the production rules of $\Gamma'$.

\noindent By construction, $\Gamma'$ contains 
the production rule $I'_{b, q_0} := I'_{a, q_0}$.
Hence $q_1 \ldots q_n \in L'_{b, q_0}$:
it can be generated by $\G'_{b,q_0}$,
starting from $I'_{b,q_0}$ and using the production rules of $\Gamma'$.

Hence 
\(
t = u[b(h)] \lrstep{*}{\A'} u[b(q_1 \ldots q_n)] \lrstep{}{\A'} u[q_0] 
        \lrstep{*}{\A'} q_\final \in Q^\final_L
\), i.e. $t \in L(\A')$.

\paragraph{$\INS'_\mathsf{first}$.}
The last rewrite step of the sequence involves a rewrite rule of the form
$a(x) \to b(p\,x) \in  \ptrs{\R}{\A}$, with $p \in P$:
\[ 
 u \lrstep{*}{\ptrs{\R}{\A}} u[a(h)] \lrstep{}{\ptrs{\R}{\A}} u[b(t_p h)] = t
\]
with $t_p \in L(\A, p)$.
By induction hypothesis, $u[a(h)] \in L(\A')$.
Hence there exists a reduction sequence:
\(
u[a(h)] \lrstep{*}{\A'} u[a(q_1 \ldots q_n)] \lrstep{}{\A'} u[q_0] 
        \lrstep{*}{\A'} q_\final \in Q^\final_L
\)
with $q_1 \ldots q_n \in L'_{a, q_0}$,
i.e. $q_1 \ldots q_n$ can be generated by $\G'_{a,q_0}$,
starting from $I'_{a,q_0}$ and using the production rules of $\Gamma'$.

\noindent By construction, $\Gamma'$ contains 
the production rule $I'_{b, q_0} := X_p I'_{a, q_0}$.
Hence $p q_1 \ldots q_n$ is in $L'_{b, q_0}$.
%
Hence 
\(
t = u[b(t_p h)] \lrstep{*}{\A'} u[b(p q_1 \ldots q_n)] \lrstep{}{\A'} u[q_0] 
        \lrstep{*}{\A'} q_\final \in Q^\final_L
\), i.e. $t \in L(\A')$.

\paragraph{$\INS'_\mathsf{last}$.}
This case is similar to the above one.

\paragraph{$\INS_\mathsf{into}$.}
The last rewrite step of the sequence involves a rewrite rule of the form
$a(xy) \to a(x p y) \in  \ptrs{\R}{\A}$, with $p \in P$:
\[ 
 u \lrstep{*}{\ptrs{\R}{\A}} u[a(h \ell)] \lrstep{}{\ptrs{\R}{\A}} u[a(h\, t_p\, \ell)] = t
\]
with $t_p \in L(\A, p)$.
By induction hypothesis, $u[a(h \ell)] \in L(\A')$.
Hence there exists a reduction sequence:
\(
u[a(h \ell)] \lrstep{*}{\A'} u[a(q_1 \ldots q_n)] \lrstep{}{\A'} u[q_0] 
        \lrstep{*}{\A'} q_\final \in Q^\final_L
\)
with $q_1 \ldots q_n \in L'_{a, q_0}$,
i.e. $q_1 \ldots q_n$ can be generated by $\G'_{a,q_0}$,
starting from $I'_{a,q_0}$ and using the production rules of $\Gamma'$.

\noindent By construction, $\Gamma'$ contains 
the production rules $N := N X_p$ and $N := X_p N$ for all
non terminal $N$ reachable from $I'_{q, q_0}$ using $\Gamma'$.
Using one of these production rules, 
it is possible to generate
$q_1 \ldots q_j\, p\, q_{j+1}\ldots q_n$ with $\G'_{a,q_0}$,
starting from $I'_{a,q_0}$ and using the production rules of $\Gamma'$,
where $j$ is the length of $h$.
Hence 
\(
t = u[a(h\, t_p\, \ell)] \lrstep{*}{\A'} u[b(q_1 \ldots q_j\, p\, q_{j+1}\ldots q_n)] 
    \lrstep{}{\A'} u[q_0] 
    \lrstep{*}{\A'} q_\final \in Q^\final_L
\), and $t \in L(\A')$.

\paragraph{$\INS_\mathsf{left}$.}
The last rewrite step of the sequence involves a rewrite rule of the form
$a(x) \to p a(x) \in  \ptrs{\R}{\A}$, with $p \in P$:
\[ 
 u \lrstep{*}{\ptrs{\R}{\A}} u[a(h)] \lrstep{}{\ptrs{\R}{\A}} u[t_p\, a(h)] = t.
\]
with $t_p \in L(\A,p)$.
By induction hypothesis, $u[a(h)] \in L(\A')$.
Hence there exists a reduction sequence:
\(
u[a(h)] \lrstep{*}{\A'} u[a(q_1 \ldots q_n)] \lrstep{}{\A'} u[q_0] 
        \lrstep{*}{\A'} q_\final \in Q^\final_L
\)
with $q_1 \ldots q_n \in L'_{a, q_0}$.

\noindent By construction, $\A'$ contains 
a collapsing transition rule $p q_0 \to q_0$.
Hence 
\(
t = u[t_p a(h)] \lrstep{*}{\A'} u[p q_0] \lrstep{}{\A'} u[q_0] 
        \lrstep{*}{\A'} q_\final \in Q^\final_L
\), \textit{i.e.} $t \in L(\A')$.

\paragraph{$\INS_\mathsf{right}$.}
This case is similar to the above one.

\paragraph{$\RPL'$.}
The last rewrite step of the sequence involves a rewrite rule of the form
$a(x) \to p_1\ldots p_n \in  \ptrs{\R}{\A}$, with $p_1,\ldots, p_n \in P$:
\[ 
 u \lrstep{*}{\ptrs{\R}{\A}} u[a(h)] \lrstep{}{\ptrs{\R}{\A}} u[t_1\ldots t_n] = t.
\]
with $t_i \in L(\A,p_i)$ for all $i \leq n$.
By induction hypothesis, $u[a(h)] \in L(\A')$.
Hence there exists a reduction sequence:
\(
u[a(h)] \lrstep{*}{\A'} u[a(q_1 \ldots q_n)] \lrstep{}{\A'} u[q_0] 
        \lrstep{*}{\A'} q_\final \in Q^\final_L
\)
with $q_1 \ldots q_n \in L'_{a, q_0}$.

\noindent Therefore, by construction, $\A'$ contains 
a collapsing transition rule $p_1\ldots p_n \to q_0$.
Hence 
\(
t = u[t_1\ldots t_n] \lrstep{*}{\A'} u[p_1 \ldots p_n] \lrstep{}{\A'} u[q_0] 
        \lrstep{*}{\A'} q_\final \in Q^\final_L
\), \textit{i.e.} $t \in L(\A')$.

\paragraph{$\DEL_{\mathsf{s}}$.}
The last rewrite step of the sequence involves a rewrite rule of the form
$a(x) \to () \in  \ptrs{\R}{\A}$:
\[ 
 u \lrstep{*}{\ptrs{\R}{\A}} u[a(h)] \lrstep{}{\ptrs{\R}{\A}} u[()] = t.
\]
By induction hypothesis, $u[a(h)] \in L(\A')$.
Hence there exists a reduction sequence:
\(
u[a(h)] \lrstep{*}{\A'} u[a(q_1 \ldots q_n)] \lrstep{}{\A'} u[q_0] 
        \lrstep{*}{\A'} q_\final \in Q^\final_L
\)
with $q_1 \ldots q_n \in L'_{a, q_0}$.

\noindent By construction, $\A'$ contains 
a collapsing transition rule $() \to q_0$.
Hence 
\(
t = u[()] \lrstep{}{\A'} u[q_0] \lrstep{*}{\A'} q_\final \in Q^\final_L
\), \textit{i.e.} $t \in L(\A')$.

\paragraph{$\DEL_{\mathsf{s}}$.}
The last rewrite step of the sequence involves a rewrite rule of the form
$a(x) \to x \in  \ptrs{\R}{\A}$:
\[ 
 u \lrstep{*}{\ptrs{\R}{\A}} u[a(h)] \lrstep{}{\ptrs{\R}{\A}} u[h] = t.
\]
By induction hypothesis, $u[a(h)] \in L(\A')$.
Hence there exists a reduction sequence:
\(
u[a(h)] \lrstep{*}{\A'} u[a(q_1 \ldots q_n)] \lrstep{}{\A'} u[q_0] 
        \lrstep{*}{\A'} q_\final \in Q^\final_L
\)
with $q_1 \ldots q_n \in L'_{a, q_0}$.

\noindent By construction, $\A'$ contains 
a collapsing transition rule $L'_{a, q_0} \to q_0$.
Hence 
\(
t = u[h] \lrstep{*}{\A'} u[q_1 \ldots q_n] \lrstep{}{\A'} u[q_0] 
        \lrstep{*}{\A'} q_\final \in Q^\final_L
\), \textit{i.e.} $t \in L(\A')$.

\hfill(end Lemma direction $\supseteq$)\qed
\end{proof}

\hfill (end of the proof of Theorem~\ref{th:post})
\end{RR}
\qed
\end{proof}

%
%

\section{Appendix: proof of Theorem~\ref{th:pre}}
\label{app:pre*}

\paragraph{\textsc{Theorem}~\ref{th:pre}.}
{\it
Given a HA $\A$ on $\Sigma$ and a PTRS $\ptrs{\R}{\A} \in \XACU+$,
for all HA language $L$, $\pre_{\ptrs{\R}{\A}}^*(L)$
is a HA the language.}

\medskip
\begin{proof}
Let $\A = (P, P^\final, \Theta)$,
and let  $\A_L = (Q_L, Q_L^\final, \Delta_L)$ be a HA recognizing $L$;
both are assumed normalized.
We also assume wlog that $\A_L$ is complete:
for all term $t$, there exists a state $q$ such that $t \in L(\A', q)$.
Like in the proof of Theorem~\ref{th:post},
we assume given, for each $a \in \F$, $q \in Q_L$,
a finite automaton 
$B_{a, q} = (Q_L, S_{a, q}, i_{a, q}, \{ f_{a, q} \}, \Gamma_{a, q})$
recognizing the regular language $L_{a, q}$ 
in the transition $a(L_{a, q}) \to q \in \Delta_L$ (assumed unique).

We shall construct a finite sequence sequence of HA 
$\A_0, \A_1, \ldots, \A_k$ whose final element's language is $\pre^*_{\ptrs{\R}{\A}}(L)$,
where for all $i \leq n$, 
$\A_i = (\F, Q_L, Q_L^\final, \Delta_i)$.
For the construction of the transition sets $\Delta_{i}$,
we consider a set $\C$ of finite automata over $Q_L$
defined as the smallest set such that:
\begin{itemize}
\item $\C$ contains every $B_{a, q}$ for $a \in \F$, $q \in Q_L$,
\item for all $B\in \C$, $B = (Q_L, S, i, F, \Gamma) \in \C$
and all states $s, s' \in S$, the automaton 
$B_{s,s'} := (Q_L, S, s, \{ s' \}, \Gamma)$ is in $\C$,
\item for all $B \in \C$, $B = (Q_L, S, i, F, \Gamma) \in \C$,
$q \in Q_L$ and all states $s, s' \in S$, the automata
$(Q_L, S, i, F, \Gamma \cup \{ \langle s,q,s'\rangle \})$ and
$(Q_L, S, i, F, \Gamma \cup \{ \langle s,\varepsilon,s'\rangle \})$,
respectively denoted by
$B+\langle s,q,s'\rangle$ and $B + \langle s,\varepsilon,s'\rangle $
also belong to $\C$.
\end{itemize}
Note that $\C$ is finite with this definition.
For the sake of conciseness, 
we make no distinction below between a NFA $B \in \C$
and the language $L(B)$ recognized by $B$.
Moreover, we assume that every $B \in \C$ has a unique final state
denoted $f_B$ and an initial state denoted $\init_B$.

First, we let $\Delta_0 = \Delta_L$.
%
The other $\Delta_i$
are constructed recursively by iteration
of the following case analysis until a fixpoint is reached
(only a finite number of transition can be added in the construction).
In the construction we use an extension of the move relation of HA,
from states to set of states (single states are considered as singleton sets):
$a(L_1,\ldots,L_n) \hookrightarrow_{\Delta_i} q$ 
(where $L_1,\ldots, L_n \subseteq Q_L$ and $q \in Q_L$)
iff there exists a transition $a(L) \to q \in \Delta_i$ 
such that $L_1\ldots L_n \subseteq L$.

\begin{description}
\item{$\REN$:} if $a(x) \to b(x) \in \ptrs{\R}{\A}$, 
 $B \in \C$ and $q \in Q_L$,
 such that $b(B) \hookrightarrow q$,
 then let $\Delta_{i+1} := \Delta_i \cup \{ a(B) \to q \}$.

\item{$\INS'_\mathsf{first}$:} 
 if $a(x) \to b(p\, x)  \in \ptrs{\R}{\A}$, $B\in \C$ and $q, q_p \in Q_L$,
 such that 
 $L(\A_i, q_p) \cap L(\A, p) \neq \emptyset$ and
 $b(q_p B) \hookrightarrow_{\Delta_i} q$,
 then $\Delta_{i+1} := \Delta_i \cup \{ a(B) \to q\}$.

\item{$\INS'_\mathsf{last}$:} 
 if $a(x) \to b(x\, p)  \in \ptrs{\R}{\A}$, $B\in \C$ and $q, q_p \in Q_L$,
 such that 
 $L(\A_i, q_p) \cap L(\A, p) \neq \emptyset$ and
 $b(B\, q_p) \hookrightarrow_{\Delta_i} q$,
 then $\Delta_{i+1} := \Delta_i \cup \{ a(B) \to q \}$.

\item{$\INS_\mathsf{into}$:}  
if $a(xy) \to a(x\, p\, y) \in \ptrs{\R}{\A}$,
$B\in \C$, $s, s'$ are states of $B$,
and $q, q_p \in Q_L$,
such that
$L(\A_i, q_p) \cap L(\A, p) \neq \emptyset$,
$s \lrstep{q_p}{B} s'$,
and $a(B) \hookrightarrow_{\Delta_i} q$
then 
$\Delta_{i+1} := \Delta_i \cup \bigl\{ a(B + \langle s, \varepsilon, s'\rangle) \to q \bigr\}$.

\item{$\INS_\mathsf{left}$:} 
if $a(x) \to p\, a(x) \in \ptrs{\R}{\A}$, 
$b \in \F$,
$B, B' \in \C$, $s, s'$ are states of $B$, and
$q, q_p ,q' \in Q_L$ 
such that 
$b(B) \to q \in \Delta_i$,
$a(B') \hookrightarrow_{\Delta_i} q'$,
$L(\A_i, q_p) \cap L(\A, p) \neq \emptyset$,
$s \lrstep{q_p q'}{B} s'$,
then 
$\Delta_{i+1} := \Delta_i \cup \bigl\{ b(B + \langle s, q', s'\rangle) \to q \bigr\}$.

\item{$\INS_\mathsf{right}$:} 
if $a(x) \to a(x)\, p \in \ptrs{\R}{\A}$, 
$b \in \F$,
$B, B' \in \C$, $s, s'$ are states of $B$, and
$q, q_p ,q' \in Q_L$ 
such that 
$b(B) \to q \in \Delta_i$,
$a(B') \hookrightarrow_{\Delta_i} q'$,
$L(\A_i, q_p) \cap L(\A, p) \neq \emptyset$,
$s \lrstep{q' q_p}{B} s'$,
then 
$\Delta_{i+1} := \Delta_i \cup\bigl\{ b(B + \langle s, q', s'\rangle) \to q \bigr\}$.

\item{$\RPL'$:} 
if $a(x) \to p_1\ldots p_n \in \ptrs{\R}{\A}$, 
$b \in \F$,
$B, B' \in \C$, $s, s'$ are states of $B$, and
$q, q', q_1,\ldots, q_n \in Q_L$ 
such that 
$b(B) \to q \in \Delta_i$,
$a(B') \hookrightarrow_{\Delta_i} q'$,
$L(\A_i, q_j) \cap L(\A, p_j) \neq \emptyset$ for all $1 \leq j \leq n$,
$s \lrstep{q_1\ldots q_n}{B} s'$
then 
$\Delta_{i+1} := \Delta_i \cup \bigl\{ b(B + \langle s, q', s'\rangle) ) \to q \bigr\}$.

\item{$\DEL$:} 
if $a(x) \to () \in \ptrs{\R}{\A}$,
$b \in \F$,
$B, B' \in \C$, $s$ is a state of $B$,
$q, q' \in Q_L$ 
such that 
$b(B) \to q \in \Delta_i$,
$a(B') \hookrightarrow_{\Delta_i} q'$,
then 
$\Delta_{i+1} := \Delta_i \cup \bigl\{ b(B + \langle s,q',s\rangle) \to q \bigr\}$.

\item{$\DEL_{\mathsf{s}}$:}
if $a(x) \to x \in \ptrs{\R}{\A}$,
$b \in \F$,
$B \in \C$, $s, s'$ are states of $B$,
$q, q' \in Q_L$ 
such that 
$b(B) \to q \in \Delta_i$,
$a(B_{s,s'}) \hookrightarrow_{\Delta_i} q'$,
then 
$\Delta_{i+1} := \Delta_i \cup \bigl\{ b(B + \langle s,q',s'\rangle) \to q \bigr\}$.

\end{description}

Note that 
$\INS_\mathsf{first}$, $\INS_\mathsf{last}$, $\RPL$
are special cases of respectively
$\INS'_\mathsf{first}$, $\INS'_\mathsf{last}$, $\RPL'$.
Since no state is added to the original automaton $\A_L$ and
all the transition added involve horizontal languages of the set $\C$,
which is finite, 
the iteration of the above operations terminates with an automaton $\A'$.
\noindent Let us show that $L(\A') = \pre^*_{\ptrs{\R}{\A}}(L)$. 

\begin{RR}
\begin{lemma}
$L(\A') \subseteq \pre^*_{\ptrs{\R}{\A}}(L)$.
\end{lemma}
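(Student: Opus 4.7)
The plan is to strengthen the statement and show that for every $q \in Q_L$ and every $t \in L(\A', q)$ there exists $u \in L(\A_L, q)$ such that $t \lrstep{*}{\ptrs{\R}{\A}} u$; the theorem will then follow by specialising $q$ to a final state. I would argue by well-founded induction on the multiset of stage indices $i+1$ that tag the applications, in the chosen accepting run of $\A'$ on $t$, of transitions drawn from $\Delta_{i+1} \setminus \Delta_i$. In the base case, where the run uses only transitions of $\Delta_0 = \Delta_L$, the term $t$ itself lies in $L$ and we set $u = t$.

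For the inductive step I would single out one application of a transition of maximal stage $i+1$ in the run, together with the rewrite rule $\rho \in \ptrs{\R}{\A}$ whose addition caused it, and construct a term $t'$ with $t \lrstep{+}{\ptrs{\R}{\A}} t'$ such that the accepting run of $\A'$ on $t'$ has a strictly smaller multiset measure; the induction hypothesis then yields the desired $u$. The case analysis parallels the construction. For $\REN$, at a position where $a(B) \to q$ fires, one application of $a(x) \to b(x)$ produces $t'$, whose run uses $b(B) \hookrightarrow q$, already present at stage $\leq i$. For $\INS'_\mathsf{first}$, $\INS'_\mathsf{last}$ and $\INS_\mathsf{into}$, the non-emptiness premise $L(\A_i, q_p) \cap L(\A, p) \neq \emptyset$ supplies a term $t_p$ that can be inserted at the prescribed sibling position, after which the new accepting run threads through $q_p$ in place of the stage-$(i+1)$ transition.

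The families $\INS_\mathsf{left}$, $\INS_\mathsf{right}$, $\RPL'$, $\DEL$ and $\DEL_{\mathsf{s}}$ share a common and more delicate structure, and I expect these to be the principal obstacle. Each of them inserts into the horizontal NFA of some transition $b(B) \to q$ a new edge $\langle s, q', s' \rangle$, under the side condition $a(B') \hookrightarrow_{\Delta_i} q'$. When this edge fires at position $j$ of a children sequence $h_1\, t_j\, h_2$ below a $b$-node, the subterm $t_j$ only reduces to $q'$ in $\A'$, whereas $\rho$ expects an $a$-headed redex. To bridge this mismatch I would apply the strengthened induction hypothesis to $t_j$ to obtain $u_j \in L(\A_L, q')$ with $t_j \lrstep{*}{\ptrs{\R}{\A}} u_j$, and argue that $u_j$ can always be chosen $a$-headed, exploiting the very condition $a(B') \hookrightarrow_{\Delta_i} q'$ that triggered the addition of the edge. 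A final application of $\rho$ then produces $t'$, whose accepting run uses the original $b(B) \to q$ together with the path $s \lrstep{q_p q'}{B} s'$ for $\INS_\mathsf{left}/\mathsf{right}$, $s \lrstep{q_1 \ldots q_n}{B} s'$ for $\RPL'$, a self-loop for $\DEL$, and any word in $B_{s,s'}$ for $\DEL_{\mathsf{s}}$. The technical crux is precisely verifying that an $a$-headed representative at $q'$ can always be selected so that the multiset measure strictly decreases; I would isolate this as an auxiliary lemma, maintained as an additional invariant of the outer induction, before closing the argument.
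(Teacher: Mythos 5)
Your proposal follows the paper's proof essentially step for step: the same strengthening to ``for all $q \in Q_L$ and $t \in L(\A', q)$ there is $u \in L(\A_L, q)$ with $t \lrstep{*}{\ptrs{\R}{\A}} u$'', the same well-founded induction on the multiset of stage indices of the non-$\Delta_0$ transitions used in the accepting run, the same base case, and the same treatment of $\REN$, $\INS'_\mathsf{first}$, $\INS'_\mathsf{last}$ and $\INS_\mathsf{into}$. Those four cases are indeed unproblematic: there the redex of the rewrite rule is the $a$-node at which the newly added vertical transition $a(\cdot) \to q_0$ itself fires, so it is $a$-headed by construction, and the premise $L(\A_i, q_p) \cap L(\A, p) \neq \emptyset$ supplies the term to be inserted.

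The genuine gap is the auxiliary lemma you defer in the remaining five cases, and it cannot be closed as you state it. The side condition $a(B') \hookrightarrow_{\Delta_i} q'$ only asserts the existence of a transition $a(L) \to q'$ in $\Delta_i$; it says nothing about the particular subterm $t_j$ that the run on $t$ actually consumes on the new edge $\langle s, q', s'\rangle$. If the state $q'$ of $\A_L$ is also inhabited by terms with a different root symbol --- say $\A_L$ contains both $a(L_1) \to q'$ and $c(L_2) \to q'$ --- then $t_j$ may be $c$-headed and $\R$-irreducible; the induction hypothesis then returns $u_j = t_j \in L(\A_L, q')$, which is not $a$-headed and cannot be rewritten into an $a$-headed term, so no application of $\rho$ is possible at that position and the backward simulation stops. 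Hence the invariant ``an $a$-headed representative at $q'$ can always be selected'' is not derivable from the construction's hypotheses, and it cannot be recovered by strengthening the outer induction, since the horizontal automaton only observes the state $q'$ and not the root label of the subterm it reads. You have in fact put your finger on the one genuinely delicate point of the argument --- the paper's own proof of these cases passes over it by writing the term directly as $t[b(h\, a(v)\, \ell)]$, i.e., by assuming that the subterm read on the new edge is $a$-headed --- but as submitted your proposal is not a proof, because the step it postpones is precisely the one that does not go through.
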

\begin{proof}
We show more generally that for all $t \in L(\A', q)$, $q \in Q_L$, 
there exists $u \in L(\A_L, q)$ such that $t \lrstep{*}{\ptrs{\R}{\A}} u$.
The proof is by induction on the measure $\M$ associating to a reduction $t \lrstep{*}{\A'} q$
the multiset containing, for each transition rule 
$\rho \in \Delta_i$ with $i > 0$ used in the reduction,
the index $\mathit{min}( j > 0 \mid \rho \in \Delta_j)$. 

\paragraph{Base case.} If $\M$ is empty, all the transition are in $\Delta_0$.
It means that $t \in L(\A_L, q)$ and we let $u=t$.

\paragraph{Induction step.} 
Assume that we have a reduction by $\A'$ of the form
\begin{equation}
 t = t[a(h)] \lrstep{*}{\A'} t[a(q_1\ldots q_n)] \lrstep{}{\A'} t[q_0] \lrstep{*}{\A'} q 
 \label{eq:pre*-dir1}
\end{equation}
(with $q_0 \in Q_L$, $q_1\ldots q_n \in L(B)$)
and that the step $t[a(q_1\ldots q_n)] \lrstep{}{\A'} t[q_0]$ applies
a transition $b(B) \to q_0$ added to $\Delta_{i+1}$ for some $i \geq 0$.
We analyse the cases which permitted
the addition of this transition to $\Delta_{i+1}$.

\paragraph{$\REN$:} 
the transition $a(B) \to q_0$
was added to $\Delta_{i+1}$ because $a(x) \to b(x) \in \ptrs{\R}{\A}$
and $b(B) \hookrightarrow_{\Delta_i} q_0$.
Hence, there exists a reduction
\[ t' = t[b(h)] \lrstep{*}{\A'} t[b(q_1\ldots q_n)] \lrstep{}{\A'} t[q_0] \lrstep{*}{\A'} q \]
with a measure $\M$ strictly smaller than for (\ref{eq:pre*-dir1}), by hypothesis.
Therefore, by induction hypothesis, there exists $u \in L(\A_L, q)$ such that $t' \lrstep{*}{\ptrs{\R}{\A}} u$.
Since $t = t[a(h)] \lrstep{}{\ptrs{\R}{\A}} t[b(h)] = t'$, we conclude that $t \lrstep{*}{\ptrs{\R}{\A}} u$.

\paragraph{$\INS'_{\mathsf{first}}$:} 
the transition $a(B) \to q_0$ 
was added to $\Delta_{i+1}$ because $a(x) \to b(p\, x) \in \ptrs{\R}{\A}$,
with $q_0, q_p \in Q_L$, $L(\A_i, q_p) \cap L(\A, p) \neq \emptyset$ and $b(q_p B) \hookrightarrow_{\Delta_i} q_0$.
Hence, there exists a reduction
\[ t' = t[b(t_p\, h)] \lrstep{*}{\A'} t[b(q_p q_1\ldots q_n)] \lrstep{}{\A'} t[q_0] \lrstep{*}{\A'} q \]
with a measure $\M$ strictly smaller than for (\ref{eq:pre*-dir1}), by hypothesis.
Therefore, by induction hypothesis, there exists $u \in L(\A_L, q)$ such that $t' \lrstep{*}{\ptrs{\R}{\A}} u$.
Since $t = t[a(h)] \lrstep{}{\ptrs{\R}{\A}} t[b(t_p\, h)] = t'$, 
we conclude that $t \lrstep{*}{\ptrs{\R}{\A}} u$.

\paragraph{$\INS'_{\mathsf{last}}$:} this case is similar to the previous one.

\paragraph{$\INS_{\mathsf{into}}$:} 
the transition is $a(B') \to q_0$ 
and was added to $\Delta_{i+1}$ because $a(xy) \to b(x\, p\, y) \in \ptrs{\R}{\A}$,
$B \in \C$, 
$s, s'$ are states of $B$, $q_0, q_p \in Q_L$,
such that $L(\A_i, q_p) \cap L(\A, p) \neq \emptyset$,
$s \lrstep{q_p}{B} s'$,
$b(B) \hookrightarrow_{\Delta_i} q_0$
and $B' = B + \langle s, \varepsilon, s'\rangle$.
In this case, let $t = a(h \ell)$, and assume that the reduction (\ref{eq:pre*-dir1}) has the form
\[ t = t[a(h\ell)] \lrstep{*}{\A'} t[a(q_1\ldots q_n\, q'_1\ldots q'_m)] \lrstep{}{\A'} t[q_0] \lrstep{*}{\A'} q \]
with $q_1\ldots q_n\, q'_1\ldots q'_m \in L(B')$ by
$\init_{B'} \lrstep{q_1\ldots q_n}{B'} s \lrstep{\varepsilon}{B'} s' \lrstep{q'_1\ldots q'_m}{B'} f_{B'}$
($\init_{B'}$ and $f_{B'}$ are resp. initial and final states of $B'$).
Hence, by construction, we have
$\init_{B} \lrstep{q_1\ldots q_n}{B} s \lrstep{q_p}{B} s' \lrstep{q'_1\ldots q'_m}{B'} f_{B}$
($\init_{B'} = \init_{B}$ and $f_{B'} = f_{B}$)
and there exists a reduction
\[ t' = t[b(h\, t_p\, \ell)] \lrstep{*}{\A'} t[b(q_1\ldots q_n\, q_p\, q'_1\ldots q'_m)] 
       \lrstep{}{\A'} t[q_0] \lrstep{*}{\A'} q \]
with a measure $\M$ strictly smaller than for (\ref{eq:pre*-dir1}), by hypothesis.
Therefore, by induction hypothesis, there exists $u \in L(\A_L, q)$ such that $t' \lrstep{*}{\ptrs{\R}{\A}} u$.
Since $t = t[a(h\,\ell)] \lrstep{}{\ptrs{\R}{\A}} t[b(h\, t_p\, \ell)] = t'$, 
we conclude that $t \lrstep{*}{\ptrs{\R}{\A}} u$.

\medskip
\noindent From now on we assume that the reduction of $t$ by $\A'$ has the form
\begin{equation}
t = t[b(h)] \lrstep{*}{\A'} t[b(q_1\ldots q_n)] \lrstep{}{\A'} t[q_0] \lrstep{*}{\A'} q 
\label{eq:pre*-dir1b}
\end{equation}
with $q_1\ldots q_n \in L(B'')$, $q_0 \in Q_L$,
and that the step $t[b(q_1\ldots q_n)] \lrstep{}{\A'} t[q_0]$ applies
a transition $b(B'') \to q_0$ added to $\Delta_{i+1}$ for some $i \geq 0$
in one of the five cases.

\paragraph{$\INS_{\mathsf{left}}$:} 
the transition $b(B'') \to q_0$ 
was added to $\Delta_{i+1}$ because $a(x) \to p\,  a(x) \in \ptrs{\R}{\A}$,
$B, B' \in \C$, 
$s, s'$ are states of $B$, 
$q_0, q_p, q'_0 \in Q_L$,
such that
$b(B) \to q_0 \in \Delta_i$,
$a(B') \hookrightarrow_{\Delta_i} q'_0$,
$L(\A_i, q_p) \cap L(\A, p) \neq \emptyset$,
$s \lrstep{q_p\, q'_0}{B} s'$,
and $B'' = B + \langle s, q'_0, s' \rangle$.
In this case, let $t = b(h a(v) \ell)$, and assume that the above reduction (\ref{eq:pre*-dir1b}) has the form
\[ t = t[b(h\,a(v) \ell)] \lrstep{*}{\A'} t[b(q_1\ldots q_n\, q'_0\, q'_1\ldots q'_m)] \lrstep{}{\A'} t[q_0] \lrstep{*}{\A'} q \]
with $q_1\ldots q_n\, q'_1\ldots q'_m \in L(B'')$ by
$\init_{B''} \lrstep{q_1\ldots q_n}{B''} s \lrstep{q'_0}{B''} s' \lrstep{q'_1\ldots q'_m}{B''} f_{B''}$
($\init_{B''}$ and $f_{B''}$ are resp. the initial and final states of $B''$).
Hence, by construction, we have
$\init_{B} \lrstep{q_1\ldots q_n}{B} s \lrstep{q_p\, q'_0}{B} s' \lrstep{q'_1\ldots q'_m}{B} f_{B}$
($\init_{B''} = \init_{B}$ and $f_{B''} = f_{B}$)
and there exists a reduction
\[ t' = t[b(h\, t_p\, a(v)\, \ell)] \lrstep{*}{\A'} t[b(q_1\ldots q_n\, q_p\, q'_0  q'_1\ldots q'_m)] 
  \lrstep{}{\A'} t[q_0] \lrstep{*}{\A'} q \]
with a measure $\M$ strictly smaller than for (\ref{eq:pre*-dir1b}), by hypothesis.
Therefore, by induction hypothesis, there exists $u \in L(\A_L, q)$ such that $t' \lrstep{*}{\ptrs{\R}{\A}} u$.
Since $t = t[a(h\,a(v) \ell)] \lrstep{}{\ptrs{\R}{\A}} t[b(h\, t_p\,a(v) \ell)] = t'$, 
we conclude that $t \lrstep{*}{\ptrs{\R}{\A}} u$.

\paragraph{$\INS_{\mathsf{right}}$:} this case is similar to the previous one.

\paragraph{$\RPL'$:}
the transition $b(B'') \to q_0$ 
has been added to $\Delta_{i+1}$ because $a(x) \to p_1\ldots p_n \in \ptrs{\R}{\A}$,
$B, B' \in \C$, 
$s, s'$ are states of $B$, 
$q_0, q'_0, q_{p_1},\ldots, q_{p_n} \in Q_L$,
such that
$b(B) \to q_0 \in \Delta_i$,
$a(B') \hookrightarrow_{\Delta_i} q'_0$,
$L(\A_i, q_{p_j}) \cap L(\A, p_j) \neq \emptyset$ for all $j \leq n$,
$s \lrstep{q_{p_1}\ldots q_{p_n}}{B} s'$,
and $B'' = B + \langle s, q'_0, s' \rangle$.
In this case, let $t = b(h a(v) \ell)$, and assume that the above reduction (\ref{eq:pre*-dir1b}) has the form
\[ t = t[b(h\,a(v) \ell)] \lrstep{*}{\A'} t[b(q_1\ldots q_m\, q'_0\, q'_1\ldots q'_{m'})] 
      \lrstep{}{\A'} t[q_0] \lrstep{*}{\A'} q \]
with $q_1\ldots q_m\, q'_1\ldots q'_{m'} \in L(B'')$ by
$\init_{B''} \lrstep{q_1\ldots q_m}{B''} s \lrstep{q'_0}{B''} s' \lrstep{q'_1\ldots q'_{m'}}{B''} f_{B''}$
($\init_{B''}$ and $f_{B''}$ are resp. initial and final states of $B''$).
Hence, by construction, we have
$\init_{B} \lrstep{q_1\ldots q_m}{B} s \lrstep{q_{p_1}\ldots q_{p_n}}{B} s'
  \lrstep{q'_1\ldots q'_{m'}}{B} f_{B}$
($\init_{B''} = \init_{B}$ and $f_{B''} = f_{B}$)
and there exists a reduction,
with for all $j \leq n$, $t_j \in L(\A_i, q_{p_j}) \cap L(\A, p_j)$,
\[ t' = t[b(h\, t_1\ldots t_n\, \ell)] \lrstep{*}{\A'} 
   t[b(q_1\ldots q_m\, q_{p_1}\ldots q_{p_n}\,  q'_1\ldots q'_{m'})] \lrstep{}{\A'} 
   t[q_0] \lrstep{*}{\A'} q \]
with a measure $\M$ strictly smaller than for (\ref{eq:pre*-dir1b}), by hypothesis.
Therefore, by induction hypothesis, there exists $u \in L(\A_L, q)$ such that $t' \lrstep{*}{\ptrs{\R}{\A}} u$.
Since $t = t[a(h\,a(v) \ell)] \lrstep{}{\ptrs{\R}{\A}} t[b(h\, t_1\ldots t_n\, \ell)] = t'$, 
using the rule $a(x) \to p_1\ldots p_n$,
and we conclude that $t \lrstep{*}{\ptrs{\R}{\A}} u$.

\paragraph{$\DEL$:}
the transition $b(B'') \to q_0$ 
has been added to $\Delta_{i+1}$ because $a(x) \to () \in \ptrs{\R}{\A}$,
$B, B' \in \C$, 
$s$ is a state of $B$, 
$q_0, q'_0 \in Q_L$,
such that
$b(B) \to q_0 \in \Delta_i$,
$a(B') \hookrightarrow_{\Delta_i} q'_0$,
and $B'' = B + \langle s, q'_0, s \rangle$.
In this case, let $t = b(h\, a(v) \ell)$, and assume that the above reduction (\ref{eq:pre*-dir1b}) has the form
\[ t = t[b(h\,a(v) \ell)] \lrstep{*}{\A'} t[b(q_1\ldots q_m\, q'_0\, q'_1\ldots q'_{m'})] 
      \lrstep{}{\A'} t[q_0] \lrstep{*}{\A'} q \]
with $q_1\ldots q_m\, q'_1\ldots q'_{m'} \in L(B'')$ by
$\init_{B''} \lrstep{q_1\ldots q_m}{B''} s \lrstep{q'_0}{B''} s \lrstep{q'_1\ldots q'_{m'}}{B''} f_{B''}$
($\init_{B''}$ and $f_{B''}$ are resp. initial and final states of $B''$).
Hence, by construction, we have
$\init_{B} \lrstep{q_1\ldots q_m}{B} s \lrstep{q'_1\ldots q'_{m'}}{B} f_{B}$
($\init_{B''} = \init_{B}$ and $f_{B''} = f_{B}$)
and there exists a reduction
\[ t' = t[b(h\, \ell)] \lrstep{*}{\A'} 
   t[b(q_1\ldots q_m\, q'_1\ldots q'_{m'})] \lrstep{}{\A'} 
   t[q_0] \lrstep{*}{\A'} q \]
with a measure $\M$ strictly smaller than for (\ref{eq:pre*-dir1b}), by hypothesis.
Therefore, by induction hypothesis, there exists $u \in L(\A_L, q)$ such that $t' \lrstep{*}{\ptrs{\R}{\A}} u$.
Since $t = t[a(h\,a(v) \ell)] \lrstep{}{\ptrs{\R}{\A}} t[b(h\, \ell)] = t'$, 
and we conclude that $t \lrstep{*}{\ptrs{\R}{\A}} u$.

\paragraph{$\DEL_{\mathsf{s}}$:} 
the transition $b(B'') \to q_0$ 
has been added to $\Delta_{i+1}$ because 
$a(x) \to x \in \ptrs{\R}{\A}$,
$B\in \C$, 
$s, s'$ are states of $B$, 
$q_0, q'_0 \in Q_L$,
such that
$b(B) \to q_0 \in \Delta_i$,
$a(B_{s,s'}) \hookrightarrow_{\Delta_i} q'_0$,
and $B'' = B + \langle s, q'_0, s' \rangle$.
In this case, let $t = b(h a(v) \ell)$, and assume that the above reduction (\ref{eq:pre*-dir1b}) has the form
\[ t = t[b(h\,a(v) \ell)] \lrstep{*}{\A'} 
      t[b(q_1\ldots q_m\, a(v_1\ldots v_k)\, q'_1\ldots q'_{m'})] 
      \lrstep{*}{\A'} 
      t[b(q_1\ldots q_m\, q'_0\, q'_1\ldots q'_{m'})] 
      \lrstep{}{\A'} t[q_0] \lrstep{*}{\A'} q \]
with $q_1\ldots q_m\, q'_0, q'_1\ldots q'_{m'} \in L(B'')$ by
$\init_{B''} \lrstep{q_1\ldots q_m}{B''} s \lrstep{q'_0}{B''} s' \lrstep{q'_1\ldots q'_{m'}}{B''} f_{B''}$
($\init_{B''}$ and $f_{B''}$ are resp. initial and final states of $B''$)
and $s \lrstep{v_1\ldots v_k}{B_{s,s'}} s'$.

Hence, by construction, we have
$\init_{B} \lrstep{q_1\ldots q_m}{B} s \lrstep{v_1\ldots v_k}{B} s' 
           \lrstep{q'_1\ldots q'_{m'}}{B} f_{B}$
($\init_{B''} = \init_{B}$ and $f_{B''} = f_{B}$)
and there exists a reduction
\[ t' = t[b(h\, v\, \ell)] \lrstep{*}{\A'} 
   t[b(q_1\ldots q_m\, v_1\ldots v_k\, q'_1\ldots q'_{m'})] \lrstep{}{\A'} 
   t[q_0] \lrstep{*}{\A'} q \]
with a measure $\M$ strictly smaller than for (\ref{eq:pre*-dir1b}), by hypothesis.
Therefore, by induction hypothesis, there exists $u \in L(\A_L, q)$ such that $t' \lrstep{*}{\ptrs{\R}{\A}} u$.
Since $t = t[a(h\,a(v) \ell)] \lrstep{}{\ptrs{\R}{\A}} t[b(h\, v\, \ell)] = t'$, 
we conclude that $t \lrstep{*}{\ptrs{\R}{\A}} u$.

Note that $\INS_\mathsf{first}$, $\INS_\mathsf{last}$, $\RPL$,
were not considered above because
they are special cases of respectively
$\INS'_\mathsf{first}$, $\INS'_\mathsf{last}$, $\RPL'$.

\hfill(end Lemma direction $\subseteq$)\qed
\end{proof}

\begin{lemma}
$L(\A') \supseteq \pre^*_{\ptrs{\R}{\A}}(L)$.
\end{lemma}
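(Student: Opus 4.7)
The plan is to prove by induction on the length of the rewrite derivation $t \lrstep{*}{\ptrs{\R}{\A}} u$ with $u \in L$ that $t \in L(\A')$. The base case (length zero) will be immediate since $L = L(\A_L) = L(\A_0) \subseteq L(\A')$, because the initial transition set $\Delta_0 = \Delta_L$ is preserved by every construction step. For the inductive step I will write $t \lrstep{}{\ptrs{\R}{\A}} t' \lrstep{*}{\ptrs{\R}{\A}} u$; the induction hypothesis yields $t' \in L(\A')$, and I will fix an accepting run of $\A'$ on $t'$ and lift it to an accepting run of $\A'$ on $t$ by changing only one horizontal reading.

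The lifting is by case analysis on the first rewrite step $t \lrstep{}{\ptrs{\R}{\A}} t'$, each case matching the construction clause of the corresponding name. For $\REN$ ($a(x) \to b(x)$ with $t = C[a(h)]$, $t' = C[b(h)]$), the accepting run on $t'$ applies some transition $b(B) \to q_0 \in \Delta'$ at the redex root; the construction then guarantees $a(B) \to q_0 \in \Delta'$, which directly produces an accepting run on $t$. For the insertion rules $\INS'_\mathsf{first}$, $\INS'_\mathsf{last}$ and $\INS_\mathsf{into}$ I will locate the state $q_p$ by which the inserted subterm $t_p \in L(\A,p)$ is read in the run on $t'$, and invoke the construction clause that shortens the horizontal reading (by dropping the initial or final symbol, or by adding an $\varepsilon$-edge at a split point $s \lrstep{q_p}{B} s'$). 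The required intersection condition $L(\A',q_p) \cap L(\A,p) \neq \emptyset$ is witnessed by $t_p$ itself. For the sibling-affecting rules $\INS_\mathsf{left}$ and $\INS_\mathsf{right}$, the argument moves up to the parent's horizontal automaton: from a split $s \lrstep{q_p q'}{B} s'$ (or $s \lrstep{q' q_p}{B} s'$) extracted from the run on $t'$, the added edge $\langle s, q', s'\rangle$ gives a run on $t$ whose parent horizontal reads $q'$ directly. The deletion and replacement cases $\RPL'$, $\DEL$, $\DEL_{\mathsf{s}}$ are dual: an added labeled or $\varepsilon$-edge at the parent level re-inserts the state $q'$ corresponding to the deleted or replaced subterm, reconstructing a horizontal reading for the larger term $t$.

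The hardest part will be the fixpoint nature of the construction. A transition required to lift a run on $t'$ to one on $t$ may only be added at a late stage $\Delta_{i+1}$ whose precondition refers to transitions added at earlier stages, possibly for rewrite steps occurring further down the derivation. Since the induction is on the length of the rewrite sequence rather than on the construction stage, and every construction clause remains valid at the fixpoint $\Delta'$, this is not a true obstacle. The delicate point will be to keep track of the automaton $B \in \C$ used by the accepting run on $t'$: $B$ may itself be an augmented automaton obtained at an earlier stage (of the form $B_0 + \langle s, q, s'\rangle$ or $(B_0)_{s,s'}$), and I will need to verify that the split states $s, s'$ used by the construction indeed belong to the state set of $B$, which is inherited from $B_0$ by the way the family $\C$ is defined.
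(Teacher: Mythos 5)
Your proposal follows essentially the same route as the paper's proof: induction on the length of the rewrite derivation, peeling off one rewrite step, applying the induction hypothesis to obtain an accepting run of $\A'$ on the rewritten term, and lifting that run back to the original term by a case analysis that mirrors the construction clauses (including the use of the rewritten subterm $t_p$ itself as the witness for $L(\A', q_p) \cap L(\A, p) \neq \emptyset$, and the completeness of the automaton to supply a state for a deleted subterm). The only cosmetic difference is the orientation of the induction (first versus last rewrite step is a matter of naming here), so the plan is sound and matches the paper.
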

\begin{proof}
We show that for all $t \in L$, 
if $u \lrstep{*}{\ptrs{\R}{\A}} t$, then $u \in L(\A')$,
by induction on the length of the rewrite sequence.

\paragraph{Base case $(0$ rewrite steps$)$.}
In this case, $u = t \in L$ and we are done since $L = L(\A_L) \subseteq L(\A')$ by construction.

\paragraph{Induction step.}
Assume that $u \lrstep{+}{\ptrs{\R}{\A}} t$, we analyse the type of rewrite rule used in the first rewrite step.

\paragraph{$\REN$.}
Assume that $u = u[a(h)] \lrstep{}{\ptrs{\R}{\A}} u[b(h)] \lrstep{*}{\ptrs{\R}{\A}} t$.
By induction hypothesis, $u_1 =  u[b(h)] \in L(\A')$,
i.e. there exists a reduction sequence 
$u_1 = u[b(h)] \lrstep{*}{\A'} u[b(q_1\ldots q_n)]  \lrstep{}{\A'} u[q]  \lrstep{*}{\A'} q^\final$
where $q, q_1,\ldots, q_n \in Q_L$, $q^\final \in Q^\final_L$,
and a transition $a(B) \to q$ has been added to $\A'$, with $q_1\ldots q_n \in B$.
It follows that 
$u = u[a(h)] \lrstep{*}{\A'} u[a(q_1\ldots q_n)] \lrstep{}{\A'} u[q] \lrstep{*}{\A'} q^\final$,
hence that $u \in L(\A')$.

\paragraph{$\INS'_{\mathsf{first}}$.}
Assume that $u = u[a(h)] \lrstep{}{\ptrs{\R}{\A}} u[b(t_p\, h)] \lrstep{*}{\ptrs{\R}{\A}} t$
for some $t_p \in L(\A, p)$.
By induction hypothesis, $u_1 =  u[b(t_p\, h)] \in L(\A')$,
i.e. there exists a reduction sequence 
\[ u[b(t_p\, h)] \lrstep{*}{\A'} u[b(q_p q_1\ldots q_n)]  \lrstep{}{\A'} u[q]  \lrstep{*}{\A'} q^\final \]
where $q, q_p, q_1,\ldots, q_n \in Q_L$, $q^\final \in Q^\final_L$.
Hence $L(\A', q_p) \cap L(\A, p)$ is not empty because it contains $t_p$,
and a transition $a(B) \to q$ has been added to $\A'$, with $q_1\ldots q_n \in B$.
It follows that 
$u = u[a(h)] \lrstep{*}{\A'} u[a(q_1\ldots q_n)] \lrstep{}{\A'} u[q] \lrstep{*}{\A'} q^\final$,
hence that $u \in L(\A')$.

\paragraph{$\INS'_{\mathsf{last}}$.} This case is similar to the previous one.

\paragraph{$\INS_{\mathsf{into}}$.}
Assume that $u = u[a(h\ell)] \lrstep{}{\ptrs{\R}{\A}} u[a(h\, t_p\, \ell)] \lrstep{*}{\ptrs{\R}{\A}} t$
for some $t_p \in L(\A, p)$.
By induction hypothesis, $u_1 =  u[a(h\, t_p\, \ell)] \in L(\A')$,
i.e. there exists a reduction sequence 
\[ u_1 = u[a(h\, t_p\, \ell)] \lrstep{*}{\A'} u[a(q_1\ldots q_m q_p q'_1\ldots q'_n)] 
\lrstep{\rho}{\A'} u[q]  \lrstep{*}{\A'} q^\final \]
where $q, q_p, q_1,\ldots, q_m, q'_1,\ldots, q'_n \in Q_L$ and $q^\final \in Q^\final_L$.
Hence $L(\A', q_p) \cap L(\A, p)$ is not empty because it contains $t_p$,
and the transition rule denoted $\rho$ in the above sequence 
has the form $b(B) \to q$,
where $q_1\ldots q_m q_p q'_1\ldots q'_n$ is recognized by $B$,
with a sequence 
$\init_B \lrstep{q_1\ldots q_m}{B} s \lrstep{q_p}{B} s' \lrstep{q'_1\ldots q'_n}{B} f_B$
for some states $s, s'$ of $B$.
Therefore, a transition $a(B + \langle s,\varepsilon,s'\rangle) \to q$ has been added to $\A'$, 
and $q_1\ldots q_m q'_1\ldots q'_n$ 
is recognized by $B + \langle s,\varepsilon,s'\rangle$.
It follows that 
$u = u[a(h\ell)] \lrstep{*}{\A'} u[a(q_1\ldots q_m q'_1\ldots q'_n)] \lrstep{}{\A'} u[q] \lrstep{*}{\A'} q^\final$,
hence that $u \in L(\A')$.

\paragraph{$\INS_{\mathsf{left}}$.}
Assume that $u = u[b(h\, a(v) \ell)] \lrstep{}{\ptrs{\R}{\A}} u[b(h\, t_p\, a(v) \ell)] \lrstep{*}{\ptrs{\R}{\A}} t$
for some $t_p \in L(\A, p)$.
By induction hypothesis, $u_1 =  u[b(h\, t_p\, a(v) \ell)] \in L(\A')$,
i.e. there exists a reduction sequence 
\[ u[b(h\, t_p\, a(v) \ell)] \lrstep{*}{\A'} u[b(q_1\ldots q_m\,  q_p\, q' q'_1\ldots q'_n)] 
\lrstep{\rho}{\A'} u[q]  \lrstep{*}{\A'} q^\final \]
where $q, q', q_p, q_1,\ldots, q_m, q'_1,\ldots, q'_n \in Q_L$, $q^\final \in Q^\final_L$, 
and $a(v) \lrstep{*}{\A'} q'$.
Hence $L(\A', q_p) \cap L(\A, p)$ is not empty because it contains $t_p$,
and the transition rule denoted $\rho$ in the above sequence 
has the form $b(B) \to q$ with
$q_1\ldots q_m  q_p q' q'_1\ldots q'_n$ is recognized by $B$,
with a sequence,
$\init_B \lrstep{q_1\ldots q_m}{B} s \lrstep{q_p q'}{} s' \lrstep{q'_1\ldots q'_n}{B} f_B$ 
for some of states $s$ and $s'$ of $B$.
Hence, a transition $b(B + \langle s,q',s'\rangle) \to q$ 
has been added to $\A'$, and $q_1\ldots q_m q' q'_1\ldots q'_n$ is recognized by $B + \langle s,q',s'\rangle$.
It follows that 
$u = u[b(h a(v)\ell)] \lrstep{*}{\A'} u[a(q_1\ldots q_m q' q'_1\ldots q'_n)] \lrstep{}{\A'} u[q] \lrstep{*}{\A'} q^\final$,
hence that $u \in L(\A')$.

\paragraph{$\INS_{\mathsf{right}}$.} This case is similar to the previous one.

\paragraph{$\RPL'$.}
Assume that $u = u[b(h a(v) \ell)] \lrstep{}{\ptrs{\R}{\A}} u[b(h t_1\ldots t_n \ell)] \lrstep{*}{\ptrs{\R}{\A}} t$
for some $t_1,\ldots, t_n$ respectively in $L(\A, p_1),\ldots, L(\A, p_n)$.
By induction hypothesis, $u_1 =  u[b(h t_1\ldots t_n \ell)] \in L(\A')$,
i.e. there exists a reduction sequence 
\[ u[b(h\, t_1\ldots t_n\, \ell)] \lrstep{*}{\A'} u[b(q_1\ldots q_m\,  q_{p_1}\ldots q_{p_n}\, q'_1\ldots q'_{m'})] 
\lrstep{\rho}{\A'} u[q]  \lrstep{*}{\A'} q^\final \]
where $q, q_{p_1},\ldots, q_{p_n}, q_1,\ldots, q_m, q'_1,\ldots, q'_{m'} \in Q_L$, $q^\final \in Q^\final_L$,
and for all $j \leq n$, $L(\A', q_{p_j}) \cap L(\A, p_j)$ contains $t_j$,
and the transition rule denoted $\rho$ in the above sequence has the form $b(B) \to q$ with
$q_1\ldots q_m\,  q_{p_1}\ldots q_{p_n}\, q'_1\ldots q'_{m'} \in L(B)$,
with a sequence 
$\init_B \lrstep{q_1\ldots q_m}{B} s \lrstep{q_{p_1}\ldots q_{p_n}}{} s' \lrstep{q'_1\ldots q'_{m'}}{B} f_B$, 
for some states $s$ and $s'$ of $B$.
Let $q' \in Q_L$ be such that $a(v) \lrstep{*}{\A'} q'$.
By construction, a transition $b(B + \langle s,q',s'\rangle) \to q$ 
has been added to $\A'$, and $q_1\ldots q_m\, q'\, q'_1\ldots q'_{m'}$ is recognized by $B + \langle s,q',s'\rangle$.
It follows that 
$u = u[b(h a(v)\ell)] \lrstep{*}{\A'} u[a(q_1\ldots q_m q' q'_1\ldots q'_{m'})] \lrstep{}{\A'} u[q] \lrstep{*}{\A'} q^\final$,
hence that $u \in L(\A')$.

\paragraph{$\DEL$.}
Assume that $u = u[b(h a(v) \ell)] \lrstep{}{\ptrs{\R}{\A}} u[b(h \ell)] \lrstep{*}{\ptrs{\R}{\A}} t$.
By induction hypothesis, $u_1 =  u[b(h \ell)] \in L(\A')$,
i.e. there exists a reduction sequence 
\[ u[b(h\ell)] \lrstep{*}{\A'} u[b(q_1\ldots q_m\, q'_1\ldots q'_{m'})] 
\lrstep{\rho}{\A'} u[q] \lrstep{*}{\A'} q^\final \]
where $q, q_1,\ldots, q_m, q'_1,\ldots, q'_{m'} \in Q_L$ and $q^\final \in Q^\final_L$.
The transition rule denoted $\rho$ in the above sequence has the form $b(B) \to q$ and
$q_1\ldots q_m\, q'_1\ldots q'_{m'}$ is recognized by $B$ with a sequence
$\init_B \lrstep{q_1\ldots q_m}{B} s \lrstep{q'_1\ldots q'_{m'}}{B} f_B$,
where $s$ is a state of $B$.
Let $q' \in Q_L$ be such that $a(v) \lrstep{*}{\A'} q'$.
By construction, a transition $b(B + \langle s, q',s\rangle) \to q$ has been added to $\A'$, 
and $q_1\ldots q_m\, q'\, q'_1\ldots q'_{m'}$ is recognized by $B + \langle s, q',s\rangle$.
It follows that 
$u = u[b(h\, a(v)\ell)] \lrstep{*}{\A'} u[a(q_1\ldots q_m q' q'_1\ldots q'_{m'})] \lrstep{}{\A'} u[q] \lrstep{*}{\A'} q^\final$,
hence that $u \in L(\A')$.

\paragraph{$\DEL_{\mathsf{s}}$.}
Assume that $u = u[b(h a(v) \ell)] \lrstep{}{\ptrs{\R}{\A}} u[b(h v \ell)] \lrstep{*}{\ptrs{\R}{\A}} t$.
By induction hypothesis, $u_1 =  u[b(h v \ell)] \in L(\A')$,
i.e. there exists a reduction sequence 
\[ u[b(h v \ell)] \lrstep{*}{\A'} u[b(q_1\ldots q_m\, q''_1\ldots q''_n \,q'_1\ldots q'_{m'})] 
\lrstep{\rho}{\A'} u[q] \lrstep{*}{\A'} q^\final \]
where $q, q_1,\ldots, q_m, q''_1,\ldots, q''_n ,q'_1,\ldots, q'_{m'} \in Q_L$ and $q^\final \in Q^\final_L$.
The transition rule denoted $\rho$ in the above sequence has the form $b(B) \to q$ and
$q_1\ldots q_m\, q''_1,\ldots, q''_n\, q'_1\ldots q'_{m'}$ is recognized by $B$, 
with a sequence 
$\init_B \lrstep{q_1\ldots q_m}{B} s \lrstep{q''_1\ldots q''_n}{B} s' 
 \lrstep{q'_1\ldots q'_{m'}}{B} f_B$,
 where $s, s'$ are two states of $B$.
By completeness of $\A_L$, given $s, s'$, there exists $q'$ such that
$a(B_{s,s'}) \hookrightarrow_{\Delta_i} q'$.
It follows in particular that $a(v) \lrstep{*}{\A'} q'$.
By construction, a transition $b(B + \langle s, q', s\rangle) \to q$ 
has been added to $\A'$, 
and $q_1\ldots q_m\, q'\, q'_1\ldots q'_{m'}$ is recognized by $B + \langle s, q', s\rangle$.
It follows that 
$u = u[b(h a(v)\ell)] \lrstep{*}{\A'} u[a(q_1\ldots q_m q' q'_1\ldots q'_{m'})] \lrstep{}{\A'} u[q] \lrstep{*}{\A'} q^\final$,
hence that $u \in L(\A')$.

\hfill(end Lemma direction $\subseteq$)\qed
\end{proof}

\hfill (end of the proof of Theorem~\ref{th:pre})
\end{RR}
\qed
\end{proof}

%
%

\section{Appendix: proof of Theorem~\ref{th:undec}}

\paragraph{\textsc{Theorem}~\ref{th:undec}.}
{\it
Reachability is undecidable for uniform PGTRS without variables and parameters.}

\medskip
\begin{proof}
We will reduce the halting problem of Deterministic Turing Machines (TM) 
that work on half a tape (unbounded on the right). 
We consider the following unary symbols to represent the tape  alphabet 
$\Sigma =\{0,1,\sharp,\flat\}$. 
We need a copy of the alphabet  $\Sigma' = \{0',1',\sharp',\flat'\}$.
We  only use  $\sharp$ to mark  the left endpoint  of the tape  
and $\flat$  is the  blank symbol,  e.g. representing the rightmost part of the tape.

The state symbols  are  constants in a finite set $ Q \cup Q'$ where $Q =\{q_1, q_2, \ldots , q_n\}$ 
and  $Q' =\{q_1', q_2', \ldots  , q'_n\}$. Hence each state of the TM has two representations. 

In order to represent a Turing machine configuration as a ground term we shall 
introduce  a binary symbol $+$ and a nullary symbol $\bot$. 
Now the TM configuration with tape $abccde\flat\flat \ldots$,  symbol under head $d$, state $q$  will be represented by:
$$\sharp(\bot) + (a(\bot) + (b(\bot)+ (c(\bot) + (c(\bot)+ (d(q)+ (c(\bot) + \flat(\bot))))))).$$
We denote by $\T_0$ (resp. $\T_1$) the set of terms on signature $\Sigma \cup \{ \bot, +\}$ 
with no occurrence of $\sharp$ (resp. with a unique occurrence of $\sharp$ at position 1).
Given a term in $t\in \T_0$ and a term $s\in \T(\Sigma)$ we write $t[\bot \leftarrow s]$  the term 
obtained from $t$ by replacing its 
rightmost  $\bot$ symbol by $s$. 

For each TM transition we introduce some rewrite rules that simulate it on the term representation. 
We introduce now some tree regular languages: 
$L_{s,a}$ is the subset of $t \in \T(\Sigma)$ such that $t$ admits a single occurrence of 
a state symbol and this state symbol is $s$, and it occurs right below a symbol  $a$.

\noindent{"In state $q$ reading $a$ go to state $r$ and write $b$".} 
This is translated to the ground rewrite rule:
$$ L_{q,a}  :: a(q) \rightarrow b(r)$$ 

\noindent{"In state $q$ reading $a$ go to state $r$ and move right".} 
This can be simulated by some  application of rules:
\begin{eqnarray}
L_{q,a}  :: u(\bot)     &\rightarrow& u(r')  \mbox{ ~for all ~}  u \in \{0,1,\sharp\} \\
L_{q,a}  :: \flat(\bot) &\rightarrow &\flat(r') + \flat(\bot) 
\end{eqnarray}
Note that one of these rule application may create a pattern $a(q) + (b(r')+ x)$  at the location where we had a pattern 
$a(q) + (b(\bot)+ x)$ in the configuration. 
Let $L_{q,a,r,R}$ be the set of term of type 
$U[\bot \leftarrow (a(q) + (b(r')+ V)]$ where $U \in \T_1$, $V \in \T_0$.
This is clearly a regular language. Then we add the rules: 
\begin{eqnarray}
 L_{q,a,r,R} :: a(q) &\rightarrow &a(\bot) \\ 
 L_{r',u} :: u(r')   &\rightarrow & u(r) \mbox{ ~for all ~}  u \in \{0,1,\sharp\} 
\end{eqnarray}

\noindent{"In state $q$ reading $a$ go to state $r$ and move left".} 
This can be simulated by some  application of rules:
\begin{eqnarray}
L_{q,a}  :: u(\bot)     &\rightarrow& u(r')  \mbox{ ~for all ~}  u \in \{0,1,\sharp\}
\end{eqnarray}
This rule application may create a pattern $b(r') + (a(q)+ x)$  at the location where we had a pattern 
$b(\bot) + (a(q)+ x)$ in the configuration. 
Let $L_{q,a,r,L}$ be the set of term of type $U[\bot \leftarrow ((b(r')+ a(q)) +V)]$ where $U\in \T_1$, 
$V \in \T_0$.
This is clearly a regular language. Then we add the rules: 
\begin{eqnarray}
 L_{q,a,r,L} :: a(q) &\rightarrow &a(\bot) \\
 L_{r',u} :: u(r')   &\rightarrow & u(r) \mbox{ ~for all ~}  u \in \{0,1,\sharp\} 
\end{eqnarray}
Let us denote $
\R = \{ L_i::\ell_i \rightarrow r_i \mid 1 \leq i \leq n \}$ 
the set of rules we obtain by the above construction.
Note that the languages $L_i$ are pairwise disjoint. 
By case inspection we can show that for any couple of TM configurations $T_1, T_2$
and their respective term encodings $t_1, t_2$, there is a sequence of transitions from 
$T_1$ to $T_2$ iff $t_1 \lrstep{*}{\R} t_2$. 
If we replace in every rule the regular language $L_i$  
by the disjoint union  $\biguplus_{1 \leq i \leq n} L_i$, the result still holds.
The theorem follows. 
\qed
\end{proof}

%
%

\section{Appendix: proof of Theorem~\protect\ref{th:pre2}}

\paragraph{\textsc{Theorem}~\ref{th:pre2}.}
{\it
Given a HA $\A$ on $\Sigma$ and a PTRS $\ptrs{\R}{\A} \in \XACU_2+$,
for all HA language $L$, $\pre^*_{\ptrs{\R}{\A}}(L)$
is a HA the language.}

\medskip
\begin{proof}
The proof is very close to the one of Theorem~\ref{th:pre}.
Indeed, in the above construction for Theorem~\ref{th:pre},
we consider the applications of rules 
$\INS_\mathsf{left}$, $\INS_\mathsf{right}$, $\RPL'$,
$\DEL$ and  $\DEL_{\mathsf{s}}$ under any symbol $b \in \F$.
Here instead, we can restrict the construction to the application under
the symbol specified in the lhs of the rewrite rules.
More precisely, let us just detail below the cases of the construction which are modified.
The rest of the prof is the same as for Theorem~\ref{th:pre}.

\begin{description}
\item{$\INS_{2,\mathsf{left}}$:} 
if $b(y\,a(x)\,z) \to b(y\,p\, a(x)\,z) \in \ptrs{\R}{\A}$, 
$B, B' \in \C$, $s, s'$ are states of $B$, and
$q, q_p ,q' \in Q_L$ 
such that 
$b(B) \to q \in \Delta_i$,
$a(B') \hookrightarrow_{\Delta_i} q'$,
$L(\A_i, q_p) \cap L(\A, p) \neq \emptyset$,
$s \lrstep{q_p q'}{B} s'$,
then 
$\Delta_{i+1} := \Delta_i \cup \bigl\{ b(B + \langle s, q', s'\rangle) \to q \bigr\}$.

\item{$\INS_{2,\mathsf{right}}$:} 
if $b(y\,a(x)\,z) \to b(y\, a(x)\,p\,z) \in \ptrs{\R}{\A}$, 
$B, B' \in \C$, $s, s'$ are states of $B$, and
$q, q_p ,q' \in Q_L$ 
such that 
$b(B) \to q \in \Delta_i$,
$a(B') \hookrightarrow_{\Delta_i} q'$,
$L(\A_i, q_p) \cap L(\A, p) \neq \emptyset$,
$s \lrstep{q' q_p}{B} s'$,
then 
$\Delta_{i+1} := \Delta_i \cup\bigl\{ b(B + \langle s, q', s'\rangle) \to q \bigr\}$.

\item{$\RPL'_2$:} 
if $b(y\,a(x)\,z) \to b(y\,p_1\ldots p_n\,z) \in \ptrs{\R}{\A}$, 
$B, B' \in \C$, $s, s'$ are states of $B$, and
$q, q', q_1,\ldots, q_n \in Q_L$ 
such that 
$b(B) \to q \in \Delta_i$,
$a(B') \hookrightarrow_{\Delta_i} q'$,
$L(\A_i, q_j) \cap L(\A, p_j) \neq \emptyset$ for all $1 \leq j \leq n$,
$s \lrstep{q_1\ldots q_n}{B} s'$
then 
$\Delta_{i+1} := \Delta_i \cup \bigl\{ b(B + \langle s, q', s'\rangle) ) \to q \bigr\}$.

\item{$\DEL_2$:} 
if $b(y\,a(x)\,z) \to b(yz) \in \ptrs{\R}{\A}$,
$B, B' \in \C$, $s$ is a state of $B$,
$q, q' \in Q_L$ 
such that 
$b(B) \to q \in \Delta_i$,
$a(B') \hookrightarrow_{\Delta_i} q'$,
then 
$\Delta_{i+1} := \Delta_i \cup \bigl\{ b(B + \langle s,q',s\rangle) \to q \bigr\}$.

\item{$\DEL_{2,\mathsf{s}}$:}
if $b(y\,a(x)\,z) \to b(yxz) \in \ptrs{\R}{\A}$,
$B \in \C$, $s, s'$ are states of $B$,
$q, q' \in Q_L$ 
such that 
$b(B) \to q \in \Delta_i$,
$a(B_{s,s'}) \hookrightarrow_{\Delta_i} q'$,
then 
$\Delta_{i+1} := \Delta_i \cup \bigl\{ b(B + \langle s,q',s'\rangle) \to q \bigr\}$.\qed
\end{description}
\end{proof}

%
%
%




\end{document}